\documentclass{IEEEtran}
\usepackage[colorlinks,urlcolor=blue,linkcolor=blue,citecolor=blue]{hyperref}

\usepackage{color,array}

\usepackage{graphicx}
\graphicspath{{figs/}}
\usepackage{authblk}

\usepackage{fullpage}

\usepackage{mathtools,amsmath,amssymb,amsfonts,amsthm}

\usepackage{lipsum}
\usepackage{xcolor}
\usepackage{xspace}
\usepackage{amsmath}
\usepackage[capitalize]{cleveref}
\usepackage{thm-restate}
\usepackage{algorithm}
\usepackage[noend]{algpseudocode}

\usepackage{enumitem}
\usepackage{tikz}
\usepackage{caption}

\usepackage{mathrsfs}
\usepackage{csquotes}

\newcommand{\scrC}{\mathscr{C}}
\newcommand{\cS}{\mathcal{S}}
\newcommand{\cT}{\mathcal{T}}

\newcommand{\N}{\mathbb{N}}
\newcommand{\R}{\mathbb{R}}

\newcommand{\Dc}{\mathcal{D}}

\newcommand{\Mc}{\mathcal{M}}

\newcommand{\Sc}{\mathcal{S}}
\newcommand{\Tc}{\mathcal{T}}
\newcommand{\Cc}{\mathcal{C}}

\newcommand{\Wc}{\mathcal{W}}
\newcommand{\Xc}{\mathcal{X}}

\definecolor{myred}{rgb}{0.8,0.2,0.2}
\definecolor{myblue}{rgb}{0.2,0.7,0.2}
\definecolor{mygreen}{rgb}{0.2,0.2,0.7}

\DeclarePairedDelimiterX{\inp}[2]{\langle}{\rangle}{#1, #2}

\newcommand{\AP}{\ensuremath{\mathtt{AP}}}
\newcommand{\cRWA}{\ensuremath{\mathtt{cRWA}}}
\newcommand{\RWA}{\ensuremath{\ensuremath{\Omega}}}
\newcommand{\RWAa}{\ensuremath{\ensuremath{\RWA_a}}}
\newcommand{\RWAe}{\ensuremath{\ensuremath{\RWA_e}}}
\newcommand{\contextRWA}{\ensuremath{\kappa}}
\newcommand{\reachRWA}{\ensuremath{\mathcal{R}}}
\newcommand{\avoidRWA}{\ensuremath{\mathcal{A}}}
\newcommand{\avoidRWAa}{\ensuremath{\avoidRWA_a}}
\newcommand{\avoidRWAe}{\ensuremath{\avoidRWA_e}}

\newcommand{\allU}{\ensuremath{\mathfrak{U}}}
\newcommand{\allCLF}{\ensuremath{\mathfrak{W}}}
\newcommand{\allRWA}{\ensuremath{\cRWA(\game^I,\assump^I)}}

\theoremstyle{plain}
\newtheorem{theorem}{Theorem}
\newtheorem{corollary}{Corollary}
\newtheorem{proposition}{Proposition}
\newtheorem{lemma}{Lemma}

\theoremstyle{definition}
\newtheorem{definition}{Definition}
\newtheorem{assumption}{Assumption}
\newtheorem{problem}{Problem}
\newtheorem{example}{Example}
\newtheorem{remark}{Remark}

\makeatletter 
\newcommand{\AllQ}{\@ifstar\AllQStar\AllQNoStar}
\newcommand{\AllQStar}[3][\;]{\ensuremath{\left(\forall #2#1.#1#3\right)}}
\newcommand{\AllQNoStar}[3][\;]{\ensuremath{\forall #2#1.#1#3}}
\newcommand{\AllQu}{\@ifstar\AllQuStar\AllQuNoStar}
\newcommand{\AllQuStar}[3][\;]{\ensuremath{\left(\forall^{\infty} #2#1.#1#3\right)}}
\newcommand{\AllQuNoStar}[3][\;]{\ensuremath{\forall^{\infty} #2#1.#1#3}}

\newcommand{\ExQ}{\@ifstar\ExQStar\ExQNoStar}
\newcommand{\ExQStar}[3][\;]{\ensuremath{\left(\exists #2#1.#1#3\right)}}
\newcommand{\ExQNoStar}[3][\;]{\ensuremath{\exists #2#1.#1#3}}
\makeatother

%####################
% The synthesis game releted macros
%#####################
\newcommand{\ldist}{\Upsilon}
\newcommand{\Ldist}{\ensuremath{\mathfrak{D}}}
\newcommand{\trace}{\pi}
\newcommand{\traces}{\mathsf{Traces}}
\newcommand{\labelfunc}{L}
\newcommand{\labelfuncG}{\ell}
\newcommand{\true}{\texttt{true}\xspace}
\newcommand{\false}{\texttt{false}\xspace}
\newcommand{\until}{\,\mathtt{U}\,}
\newcommand{\weakuntil}{\,\mathtt{W}\,}
\newcommand{\Next}{\xspace\bigcirc\xspace}
\newcommand{\finally}{\xspace\Diamond\xspace}
\newcommand{\globally}{\xspace\square\xspace}

\newcommand{\vertices}{V}
\newcommand{\vertex}{v}
\newcommand{\altvertex}{v'}
\newcommand{\verticeso}{V_0}
\newcommand{\verticesl}{V_1}
\newcommand{\verticesi}{V_i}
\newcommand{\edges}{E}
\newcommand{\edgeso}{E_0}

\newcommand{\edgesi}{E_i}
\newcommand{\cupdot}{\mathbin{\mathaccent\cdot\cup}}

\newcommand{\abs}[1]{\left\lvert#1\right\rvert}
\newcommand{\game}{\mathcal{G}}
\newcommand{\gamegraph}{\ensuremath{G}}
\newcommand{\spec}{\phi}
\newcommand{\specF}{\spec}
\newcommand{\specS}{\varphi}
\newcommand{\play}{\rho}
\newcommand{\p}[1]{\ensuremath{\text{Player}~#1}}

\newcommand{\priority}{\ensuremath{\mathbb{P}}}
\newcommand{\paritygame}{\textsc{Parity}}
\newcommand{\src}{\textsc{src}}

\newcommand{\attr}[3]{\textsf{attr}\ensuremath{^{#3}\left(#1,#2\right)}}
\newcommand{\attri}[2]{\attr{#1}{#2}{i}}
\newcommand{\attrl}[2]{\attr{#1}{#2}{1}}
\newcommand{\attro}[2]{\attr{#1}{#2}{0}}

\newcommand{\Attr}[3]{\textsc{Attr}\ensuremath{^{#3}\left(#1,#2\right)}}
\newcommand{\Attri}[2]{\Attr{#1}{#2}{i}}
\newcommand{\Attrl}[2]{\Attr{#1}{#2}{1}}
\newcommand{\Attro}[2]{\Attr{#1}{#2}{0}}

\newcommand{\pre}{\textsf{pre}}
\newcommand{\win}{\vertices_{\text{win}}}

\newcommand{\strat}{\sigma}

\newcommand{\safeset}{\mathbb{S}}
\newcommand{\ellipsoid}{\mathbb{E}}
\newcommand{\polyhedron}{\mathbb{H}}
\newcommand{\tr}{\mathrm{tr}}
\newcommand{\cols}{\mathrm{cols}}

\newcommand{\livegroup}{\mathcal{H}}

\newcommand{\livegroupSingleN}{H}
\newcommand{\colivegroup}{D}
\newcommand{\safegroup}{S}
\newcommand{\assump}{\ensuremath{\psi}}
\newcommand{\assumpsafe}{\ensuremath{\assump_{\textsc{unsafe}}}}

\newcommand{\assumpgrlive}{\ensuremath{\assump_{\textsc{live}}}}
\newcommand{\assumpdep}{\ensuremath{\assump_{\textsc{colive}}}}

\newcommand{\perslivegroups}{\Lambda}
\newcommand{\persedges}{\ensuremath{\mathtt C}}
\newcommand{\perssource}{\ensuremath{\mathtt S}}
\newcommand{\perstarget}{\ensuremath{\mathtt T}}
\newcommand{\assumpC}{\ensuremath{\assump_{\textsc{cont}}}}
\newcommand{\assumpPers}{\ensuremath{\assump_{\textsc{pers}}}}
\newcommand{\reach}{\textsc{SolveReach}}
\newcommand{\solve}{\textsc{Solve}}

\newcommand{\sol}{\xi}

%---------- TikZ setup ----------
\colorlet{darkgreen}{green!80!black}
\colorlet{darkred}{red!80!black}
\usetikzlibrary{arrows, automata, shapes,positioning,calc}
\tikzset{auto, >= stealth}
\tikzset{every edge/.append style={thick, shorten >= 1pt}}
\tikzset{initial/.style={draw, thick, <-, shorten <=1pt}}
\tikzset{player0/.style = {draw, thick, shape=circle, minimum size=5mm}}
\tikzset{player1/.style = {draw, thick, shape=rectangle, minimum size=5mm}}
\tikzset{bplayer0/.style = {draw, thick, shape=ellipse, minimum size=3mm,text width=0.55cm}}
\tikzset{bplayer1/.style = {draw, thick, shape=rectangle, minimum size=3mm,text width=0.8cm}}

\newcommand\bhpos{2}
\newcommand\bypos{1.2}
\newcommand\hpos{1.5}
\newcommand\ypos{1.6}
\newcommand*\bcircled[1]{~\tikz[baseline=(char.base)]{\node[shape=circle,draw, text width=3mm,align=center, minimum size = 2mm, inner sep = 0pt, fill = black] (char) {\scriptsize \textcolor{white}{\textbf{#1}}}}}

% -- for overview flowchart --
\tikzset{box/.style={rectangle, draw, minimum height=1cm, fill=black!30}}
\tikzset{boxl/.style={rectangle, draw, minimum height=1cm, fill=black!10}}
\newcommand*\circled[1]{~\tikz[baseline=(char.base)]{\node[shape=circle,draw, text width=3.5mm,align=center, minimum size = 3mm, inner sep = 0pt] (char) {\scriptsize \textcolor{black}{\normalsize \textbf{#1}}}}}

\begin{document}
	
	\title{Context-triggered Abstraction-based Control Design}
	
	\author[1*]{S.P.~NAYAK}
	\author[2]{{L.N.}~EGIDIO}
	\author[2]{M.~DELLA ROSSA}
	\author[1]{A.-K.~SCHMUCK}
	\author[2]{R.M.~JUNGERS}
	
	\affil[1]{Max Planck Institute for Software Systems, Kaiserslautern, Germany} 
	\affil[2]{ICTEAM, UCLouvain, Louvain-la-Neuve, Belgium.}

	% \markboth{Context-triggered Abstraction-based Control Design}{S.P.~NAYAK {\itshape ET AL}.}

	\maketitle

	\begingroup\renewcommand\thefootnote{*}
	\footnotetext{First two authors contributed equally. S.~P.~Nayak and A.-K.~Schmuck are partially supported by the DFG project 389792660 TRR 248–CPEC. 
	A.-K.~Schmuck is partially supported by the DFG project SCHM 3541/1-1.
	L.N.~Egidio, M.D.~Rossa, and R.~Jungers are supported by the ERC project under the European Union's Horizon 2020 research and innovation programme under grant agreement No 864017 - L2C.}
	\endgroup

	\begin{abstract}
	We consider the problem of automatically synthesizing a hybrid controller for non-linear dynamical systems which ensures that the closed-loop fulfills an arbitrary \emph{Linear Temporal Logic} specification. Moreover, the specification may take into account logical context switches induced by an external environment or the system itself. Finally, we want to avoid classical brute-force time- and space-discretization for scalability.
	We achieve these goals by a novel two-layer strategy synthesis approach, where the controller generated in the lower layer provides invariant sets and basins of attraction, which are exploited at the upper logical layer in an abstract way. In order to achieve this, we provide new techniques for both the upper- and lower-level synthesis.

	Our new methodology allows to leverage both the computing power of state space control techniques and the intelligence of finite game solving for complex specifications, in a scalable way.
	\end{abstract}
	
	\section{INTRODUCTION}\label{sec:intro}
	%!TEX root = ../main.tex

The problem of synthesizing controllers for different classes of non-linear systems with respect to temporal logic specifications has received considerable attention in the last decades, especially in the context of \emph{cyber-physical systems} (CPS) design. The goal of these methods is to allow for fully automated synthesis of feedback controllers, which enforce temporal logic constraints and hence, to allow for a much larger spectrum of specifications than classical  feedback controller synthesis techniques. In order to achieve this goal, techniques from the formal methods and the control communities need to be combined. 

While there has been enormous progress towards this goal in the last decade, documented by various recent textbooks on this problem, e.g.\ \cite{tabuada2009_book,sanfelice2020hybrid,calin_belta_book}, most of the existing approaches still tackle the overall problem mainly from either the control or the formal methods side. Thereby, the potential of techniques available in the respective other domain is not fully exploited, leading to unsatisfying solutions in settings where low-layer continuous control and high-layer logical decision making are \emph{tightly intertwined.}

Such problems occur for example in the control of autonomous robots deployed in warehouses \cite{amazonRobotics}, under-water inspection \cite{UnderwaterRobots, UnderwaterRobots2} or in rescue and evacuation scenarios \cite{evacuationRobot2,evacuationRobot}. 
% \AKS{todo, someone needs to find good references here. The first one should be on amazone warehouse robots, the second one on this \url{https://eelume.com/}(check out the video, it is very cool!), the person behind this is Kristin Ytterstad Pettersen, the last one, I have no immediate reference in mind.}
In these applications, the robots need to (a) directly compensate \emph{environment uncertainty} during their movement (such as rough terrain or sensor/actuator noise), and (b) strategically react to any \emph{logical context change}, e.g., a newly arriving package that needs to be re-located in the warehouse, a leak in an oil pipeline that needs to be fixed under water, or a door that got closed and needs to be re-opened to reach a target in a rescue scenario. These context changes are triggered by the \emph{external environment} and can occur at any time. They must directly result in (high-level) \emph{strategic reactions} of the robots that trigger new objectives of the (low-level) \emph{feedback control policy} which, on the other hand, is able to correctly actuate  non-trivial non-linear dynamical systems. Control problems with a similar required integration of logical decision making and low-layer feedback control occur for example in sustainable building management \cite{ShaikhNoR}, or smart energy grid operation \cite{SahaJulius2016} or safety-critical medical operations~\cite{WangJung19}. 

This paper presents a novel approach to such integrated control problems, which automatically computes a  \emph{provably correct hybrid controller} that seamlessly reacts to (high-layer) logical context switches. Therein, the main contribution of our work is twofold: the new game-solving formalism we present (i) provides a \emph{certified and reactive interface} between the higher and the lower control layers via \emph{control Lyapunov functions} and (ii) while dismissing \emph{grid-based discretization} of both the input and the state spaces.
% \textcolor{red}{  
On the same line, our approach does not require discretization of time \emph{ab initio}. Rather, it considers time implicitly at the high-level strategy design, and defers the actual discretization of time to the low-level controller design, in an opportunistic way. Thereby, it enhances scalability and avoids numerical problems due to small sampling time intervals. 
% }

 Moreover, the full class of LTL specifications can be considered for a large class of non-linear continuous dynamics.

 % for incorporating relevant information from the lower into the higher-layer synthesis problem.
% 
% by proposing a new technique that integrates automated strategy computation on the higher control layer with low-level feedback control
% 
% . We consider possible changes of the external (logical) \emph{environment}, which are not in control of the user/system, and can thus be interpreted as an input (or disturbance) to which the controller needs to react to. In the underlying continuous control problem, this amounts to drive solutions to a suitable target (while possibly avoiding an unsafe set) corresponding \emph{both} to the logical control specification and the external environment ``state'' at the certain instant of time. In other words, the overall specification requires a (changing) set of targets to be visited in particular ways, depending on time-varying external inputs. 
% 

% 
% In order to concretely illustrate the main challenges of the considered problem class and our main contribution we consider the following a toy example, which will be used for illustration throughout the paper.

% 
% Throughout this paper, we use a toy example variant of these application szenarios to explain 

	\subsection{Motivating Example, Challenges and Contributions}\label{section:motivatingExample}
	%!TEX root = ../main.tex
Throughout this paper, we re-visit the following simple robot control example to outline the challenges and contributions of our new hybrid controller synthesis approach. 

\smallskip
\noindent\textbf{Example.} We consider a simple moving robot $\mathbf{r}$, in a setting composed by two neighboring rooms, connected by a sliding door, as depicted in Fig.~\ref{fig:example}.
There are three target sets: $\cT_1,\cT_2$ in the left room and $\cT_3$ in the right room. An external user (the \emph{environment}), at each instant of time, chooses a mode among $\Mc_i$, $i\in\{1,2,3\}$ indicating the current desired target $\cT_i$ for the robot.  Moreover, the opening status of the door can be controlled by the robot -- entering the target $\cT_1$ or $\cT_3$ opens the door (if it was previously closed) while  entering the target $\cT_2$ closes it (if it was previously open). 
This can be expressed by the LTL formula\footnote{See \cref{subsec:LTL} for an introduction to linear temporal logic (LTL).}
\begin{subequations}\label{equ:intro:spec}
\begin{align}\label{equ:intro:specassump}
% \begin{aligned}
\spec_A = 
% &\globally\left((\Mc_1 \Leftrightarrow \neg \Mc_2 \wedge \neg \Mc_3) \wedge (\Mc_2 \Leftrightarrow \neg \Mc_1 \wedge \neg \Mc_3)\right)\\
&\globally\bigwedge_{1\leq i\leq3}\left( \Mc_i \Leftrightarrow \bigwedge_{1\leq j\neq i\leq 3} \neg\Mc_j\right)\notag\\
\wedge&\globally (\Tc_1 \vee \Tc_3 \Rightarrow \Next \neg \Dc) \wedge \globally (\Tc_2 \Rightarrow \Next \Dc)\notag\\
\wedge &\globally (\Dc \Rightarrow \Dc \weakuntil (\Tc_1 \vee \Tc_3) )
\wedge \globally( \neg \Dc \Rightarrow \neg \Dc \weakuntil \Tc_2 ).
% \end{aligned}
\end{align}
The goal is to design a feedback control policy that reacts to the external environment decisions $\Mc_i$, by moving to the chosen target $\Tc_i$ while adhering to additional safety-constraints, i.e.\ not hitting the walls $\Wc$ (including the door if it is closed). 
This can be expressed by the LTL formula
 \begin{equation}\label{equ:intro:specguarantee}
  \spec_G = \globally\neg \Wc \bigwedge_{i=1,2,3} \left(\finally\globally\Mc_i \Rightarrow \finally\globally \Tc_i\right).
 \end{equation}
 \end{subequations}
%  where 
%  $\Mc_i$'s are interpreted as external propositions signaling the active mode, and $\Tc_i$'s and $\Wc$ are interpreted as observation propositions signaling whether the robot is in the corresponding target set or hitting the walls, respectively.
Summarizing formally, the overall specification for the robot is $\spec_A\Rightarrow\spec_G$, i.e, it needs to guarantee its goal $\spec_G$ while assuming that $\spec_A$ holds. 
\begin{figure}
\centering
\def\svgwidth{0.4\linewidth}
% \includegraphics[width=0.6\linewidth]{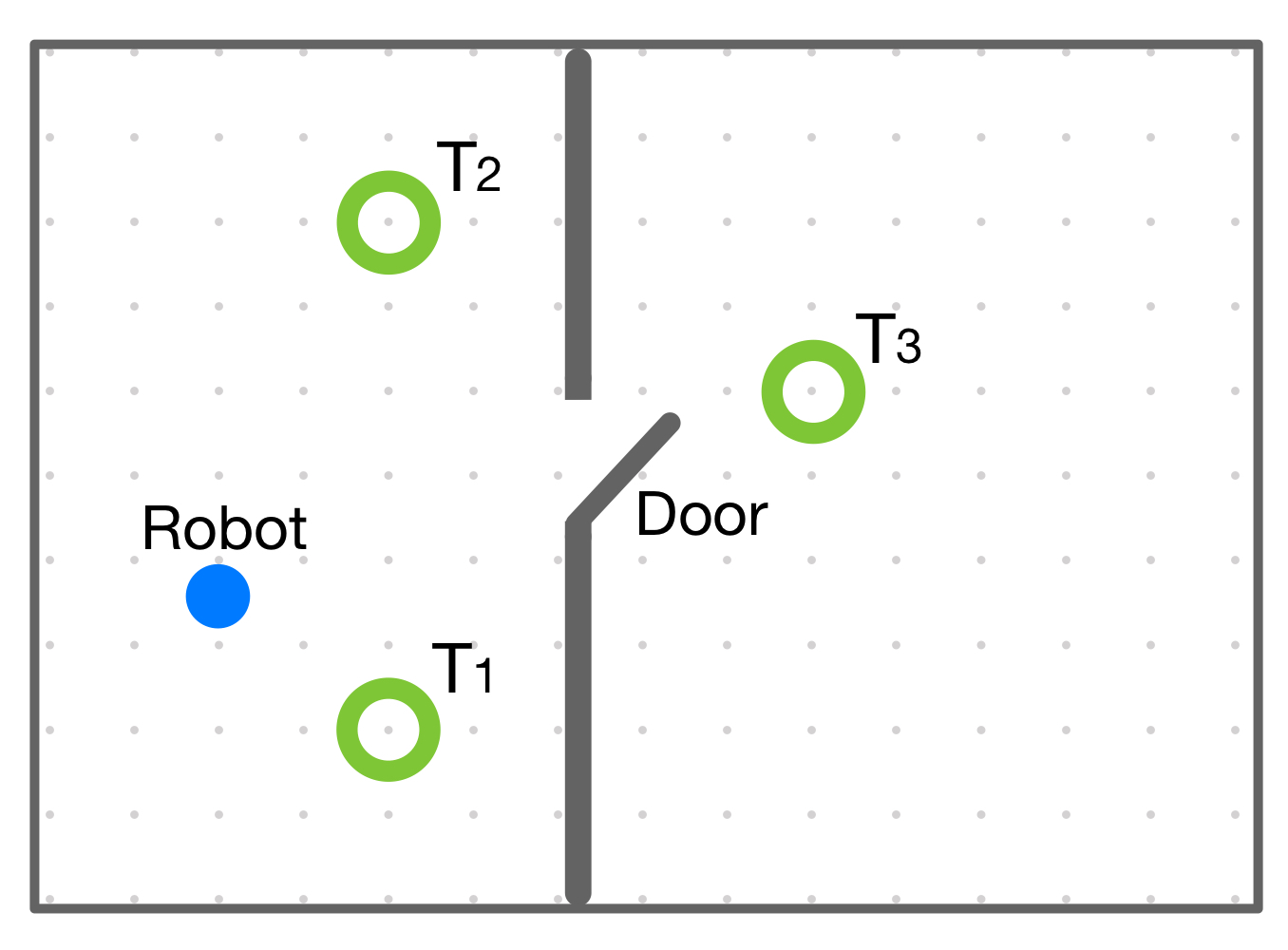}
%% Creator: Inkscape 1.2.2 (b0a84865, 2022-12-01), www.inkscape.org
%% PDF/EPS/PS + LaTeX output extension by Johan Engelen, 2010
%% Accompanies image file 'example.eps' (pdf, eps, ps)
%%
%% To include the image in your LaTeX document, write
%%   \input{<filename>.pdf_tex}
%%  instead of
%%   \includegraphics{<filename>.pdf}
%% To scale the image, write
%%   \def\svgwidth{<desired width>}
%%   \input{<filename>.pdf_tex}
%%  instead of
%%   \includegraphics[width=<desired width>]{<filename>.pdf}
%%
%% Images with a different path to the parent latex file can
%% be accessed with the `import' package (which may need to be
%% installed) using
%%   \usepackage{import}
%% in the preamble, and then including the image with
%%   \import{<path to file>}{<filename>.pdf_tex}
%% Alternatively, one can specify
%%   \graphicspath{{<path to file>/}}
%% 
%% For more information, please see info/svg-inkscape on CTAN:
%%   http://tug.ctan.org/tex-archive/info/svg-inkscape
%%
\begingroup%
  \makeatletter%
  \providecommand\color[2][]{%
    \errmessage{(Inkscape) Color is used for the text in Inkscape, but the package 'color.sty' is not loaded}%
    \renewcommand\color[2][]{}%
  }%
  \providecommand\transparent[1]{%
    \errmessage{(Inkscape) Transparency is used (non-zero) for the text in Inkscape, but the package 'transparent.sty' is not loaded}%
    \renewcommand\transparent[1]{}%
  }%
  \providecommand\rotatebox[2]{#2}%
  \newcommand*\fsize{\dimexpr\f@size pt\relax}%
  \newcommand*\lineheight[1]{\fontsize{\fsize}{#1\fsize}\selectfont}%
  \ifx\svgwidth\undefined%
    \setlength{\unitlength}{125.60382392bp}%
    \ifx\svgscale\undefined%
      \relax%
    \else%
      \setlength{\unitlength}{\unitlength * \real{\svgscale}}%
    \fi%
  \else%
    \setlength{\unitlength}{\svgwidth}%
  \fi%
  \global\let\svgwidth\undefined%
  \global\let\svgscale\undefined%
  \makeatother%
  \begin{picture}(1,0.85925143)%
    \lineheight{1}%
    \setlength\tabcolsep{0pt}%
    \put(0,0){\includegraphics[width=\unitlength]{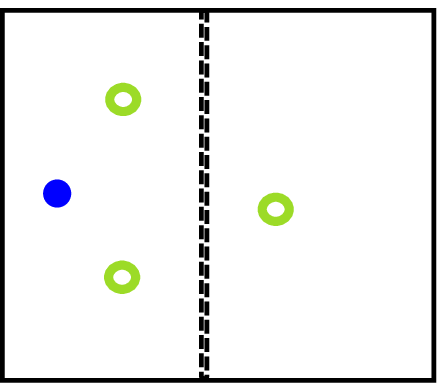}}%
    \put(0.24827397,0.11338743){\color[rgb]{0,0,0}\makebox(0,0)[lt]{\lineheight{1.25}\smash{\begin{tabular}[t]{l}$\Tc_1$\end{tabular}}}}%
    \put(0.59293944,0.27192337){\color[rgb]{0,0,0}\makebox(0,0)[lt]{\lineheight{1.25}\smash{\begin{tabular}[t]{l}$\Tc_3$\end{tabular}}}}%
    \put(0.24320047,0.52378346){\color[rgb]{0,0,0}\makebox(0,0)[lt]{\lineheight{1.25}\smash{\begin{tabular}[t]{l}$\Tc_2$\end{tabular}}}}%
    \put(0.51103159,0.71885604){\color[rgb]{0,0,0}\makebox(0,0)[lt]{\lineheight{1.25}\smash{\begin{tabular}[t]{l}$\Dc$\end{tabular}}}}%
    \put(0.04186939,0.33073131){\color[rgb]{0,0,0}\makebox(0,0)[lt]{\lineheight{1.25}\smash{\begin{tabular}[t]{l}\small{Robot}\end{tabular}}}}%
  \end{picture}%
\endgroup%

\caption{Motivating example: A robot must navigate to and remain at targets $\cT_1$, $\cT_2$ or $\cT_3$ as directed by an external environment which imposes respective modes $\Mc_1$, $\Mc_2$, and $\Mc_3$, while avoiding any collision with the walls $\Wc$ and with the door $\Dc$ (if it is closed).}\vspace{-0.3cm}
\label{fig:example}
\end{figure}

\smallskip
\noindent\textbf{Challenges.} This example showcases three main challenges that are tackled by our new controller synthesis approach. 

\emph{First}, the environment can change the mode \emph{at any time}. Considering a real application where targets might be far away from each other, we would like the robot to immediately adapt its motion towards the new target, and not only after \enquote{completing} the previously assigned task of reaching another target. We achieve this \emph{direct reactivity}, by autonomously switching the low-layer controller in reaction to a mode change. This, however, requires caution to avoid well-known instability problems in switched control settings. 

\emph{Second}, as the robot itself is controlling a part of the logical context (by being able to open and close the door), a hybrid controller cannot naively switch between low-layer controllers for different targets based on the active mode.
% 
% In this example, even if it is possible to design feedback control laws to guide the robot towards the targets $\cT_1,\cT_2,\cT_3$, the interplay between the control specifications and the external context is crucial: 
If, for example, the desired target is set to be equal to $\cT_3$ and the robot is currently in the left room while the door is closed, the robot should automatically decide to first visit the target $\cT_1$ to open the door. Scaling this to applications (e.g., in warehouses) where many logical requirements interact, requires a principled way to design a correct \emph{strategy} for the robot to react to context changes such that a given formal specification, for instance $\spec_A\Rightarrow\spec_G$, is satisfied. 

\emph{Third}, it is important that the low-layer control design  does not simply implement what \emph{should} be done (i.e., which target should be reached) but also what \emph{should not} be done. For example, if the robot is in the left room moving towards $\cT_3$ while the door is open, it must not pass over $\cT_2$, as this would close the door. In addition, the door can be both an obstacle and a target, dependent on the current context.

To design a correct-by-construction hybrid controller tackling the last two challenges, one needs (i) a formally correct mechanism to translate strategic choices from the higher layer to feedback-control problems (with suitable guarantees) in the lower layer and (ii) incorporate all necessary information about the workspace and the low-layer closed-loop properties into the high-layer strategy synthesis problem.

\smallskip
\noindent\textbf{Contribution.} This paper achieves these two goals by a new game-solving formalism for high-layer strategy synthesis, which (i) computes \emph{strategy templates} instead of single strategies and (ii) allows for \emph{progress group augmentations}.
We show that (i) strategy templates provide a \emph{certified top-down interface} by allowing a direct translation into \emph{context-dependent reach-while-avoid} (RWA) controller synthesis problems, which, in turn, can be certifiably solved via control Lyapunov functions. This leads to provably correct low-layer controllers implementing high-layer strategy choices. Further, we show that (ii) progress group augmentations provide a \emph{certified bottom-up interface} that enables a non-conservative and discretization-free incorporation of low-layer closed-loop properties into the higher-layer strategy synthesis game. 

	\subsection{Literature Review}
	%!TEX root = ../main.tex
Existing approaches tackling the outlined integrated controller synthesis problem, can roughly be divided into three different research lines. \emph{First}, discretization-based abstraction techniques can be used to incorporate low-level dynamics into the high-level strategy synthesis games (see e.g., \cite{tabuada2009_book,ReissigWeberRungger_2017_FRR} for an overview and  \cite{Scots,ARCS,Mascot,pFaces,li2018rocs} for tool support). These approaches are able to handle the full problem class we tackle, but are known to suffer heavily from the curse of dimensionality and from conservatism introduced by the abstraction.
\emph{Second}, both the specification and the dynamics of the system can be mapped into a large optimization problem that searches for an optimal control law ensuring that both the logical specification and the dynamical constrains are satisfied (see e.g. \cite{belta2019formal} for a survey). These methods, however, scale poorly with the number of logical constrains and cannot handle external environment inputs.
\emph{Third}, a constrained system can be generated, which searches for certificates on the lower level dynamical system to enforce a temporal specification (see e.g.\ \cite[Ch.12]{sanfelice2020hybrid} for an overview). This approach is usually restricted to particular classes of logical specifications and non-linear dynamics. %To overcome these issues, there have been recent efforts to combine techniques from different approaches  to leverage their individual advantages, e.g. \cite{NilssonAmes_18}. 

Within this paper, we mainly follow the third approach utilizing certificates, in particular control Lyapunov functions, to realize reach-while-avoid objectives. What distinguishes our work from existing ones (e.g., \cite{SahaJulius2016,Jagtap2021,He2020,Xiao2021}) is the presence of logical inputs operated by the external environment. In the absence of these, the resulting synthesis problem reduces to a \emph{temporal logic planning problem}, which does not require a reactive strategy on the higher layer, i.e., a single plan can be computed and executed in an \emph{open-loop fashion}. Our approach produces \emph{closed-loop} controllers in both layers instead.

While recent methods combining certificates with high-granularity abstractions (e.g.\ \cite{NilssonAmes_18}) also produce closed-loop solutions, there, environment inputs can only be handled at transition points between abstract states. In our example, the robot would need to complete one motion (reaching a particular target) before it can receive a new objective, leading to an unsatisfying closed-loop behavior.

In addition, our new game solving formalism is also related to other work in the reactive synthesis community. While strategy templates have been very recently introduced in \cite{strategyTemplatereport,PermissiveAssumptions}, progress group annotations appeared previously in \cite{progress_groups} for a restricted class of temporal specifications and only induced by uncontrolled dynamics. Further, \cite{RaynaRupak2014} also tackles the problem of reactive control for dynamical systems via parity games, but only presents sufficient conditions for the existence of certificates and controllers, while our method is fully constructive.

	\section{PRELIMINARIES}\label{sec:Prelim}
	In this section we recall, in a condensed form,  the main concepts and results from dynamical control systems theory and formal methods settings.
	\subsection{Dynamical Systems}\label{subsec:DynamicalSys}
	%!TEX root = ../main.tex
Let us introduce the state-space setting and the main stabilization/control techniques that we consider in order to achieve the logical specifications described in previous sections. First, we introduce the notion of continuous-time control systems considered in this manuscript.

\begin{definition}\label{defn:ControlSystem}
A (continuous-time) control system is defined by a triple $\cS:=(X,U,f)$ where:
\begin{itemize}[leftmargin=*]
\item the open set $X\subseteq\R^{n_x}$ is the \emph{admissible state space}, of dimension $n_x\in \N$; 
\item the set $U\subseteq \R^{n_u}$ is the \emph{input space}, of dimension $n_u\in \N$;
\item the function $f\in \scrC^1(\R^{n_x}\times \R^{n_u},\R^{n_x})$ describes the \emph{system dynamics}, defined by
\begin{equation}\label{eq:DiffEq}
\dot x=f(x, u).
\end{equation}
	\end{itemize}
\end{definition} 
Given a control system $\cS:=(X,U,f)$ and a measurable function $u: X\to U$, a \emph{solution} of $\cS$ for $u$ starting at $x\in X$ is a function $\sol_{x,u}:[0,T)\to X$  (for some
$T>0$ and possibly $T = +\infty$) such that $\sol_{x,u}(0)=x$, $\sol_{x,u}(t)\in X$ for all $t\in [0,T)$  and $\dot \sol_{x,u}(t)=f\big(\sol_{x,u}(t),u(\sol_{x,u}(t))\big)$ for almost all $t\in [0,T)$. 

To cope with \textit{reach-while-avoid} objectives, we must design control policies driving the solutions to desired targets, possibly avoiding obstacles/staying in safe regions. Thus, we aim to design feedback control strategies, using the formalism of \emph{control Lyapunov functions} (CLF). Let us recall in what follows the main definitions and concepts from CLF-based feedback design literature (for an 
overview, see~\cite{Art83,ClarkeLedRif2000,Clarke11}). To ease notation, we denote by $\scrC^1(X,\R)$ the set of continuously differentiable functions from $X$ to $\R$; given a function $w:X\to \R$ and any $c\in \R$, we denote by $X_w(c):=\{x\in X\;\vert\;w(x)\leq c\}$ the $c$-sublevel set of $w$,  
\begin{definition}\label{def:clf}
Let us consider a compact set $X_T\subset X$ named the \emph{target}. A function $w\in \scrC^1(X,\R)$ is a \emph{control Lyapunov function} (CLF) for system~\eqref{eq:DiffEq} with respect to $X_T$ if there exist $0<c<C$ and $\rho>0$ such that
\begin{align}
 X_{w}(c)&\subseteq X_T\;~\wedge~\;X_{w}(C)\subseteq X , \label{eq:Containement} \\
\inf_{u\in U}\inp{\nabla w(x)}{f(x,u)}&\leq -\rho w(x),\;\forall x\in X_w(C)\setminus X_w(c).\label{eq:Decreasing}
\end{align}
In this case, the set $X_w:=X_w(C)$ is the \emph{basin of attraction} of $w$.
If $X=\R^{n_x}$, $w$ is radially unbounded and inequality~\eqref{eq:Decreasing} holds in $\R^{n_x}\setminus X_w(c)$, then $w$ is said to be a \emph{global} CLF.
\end{definition}
 Intuitively, the condition~\eqref{eq:Decreasing} implies that, whenever $x\in X_w\setminus X_w(c)$, there exists a $u\in U$ for which the directional derivative of $w$ along the vector $f(x,u)$  is strictly negative, and thus the value of the Lyapunov function is decreasing along solutions of~\eqref{eq:DiffEq} following such direction. This observation motivates the following CLF-based design result.
\begin{lemma}\label{lemma:CLFBasedFeedback}
Consider a control system $\cS:=(X,U,f)$, a compact target set  $X_T\subset X$, and suppose that $w\in \scrC^1(X,\R)$ is a CLF in the sense of Definition~\ref{def:clf}.  Consider a continuous $u:X_w\to U$ satisfying
 \begin{equation}\label{eq:u_CLF}
 	\inp{\nabla w(x)}{f(x,u(x))}\leq -\rho w(x),\,\forall x\in X_w\setminus X_w(c),
 \end{equation}
then, for all $x\in X_w$, it holds that $\sol_{x,u}(t)\in X_w$ for all $t\in \R_+$ and $\exists\;T_{x}\geq0$ such that $\sol_{x,u}(t)\in X_w(c),\, \forall t\geq T_{x}$.
\end{lemma}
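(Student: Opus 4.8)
The plan is to use the CLF condition~\eqref{eq:Decreasing}—which holds for the continuous feedback $u$ via~\eqref{eq:u_CLF}—to show that $w$ acts as a Lyapunov-type certificate along every solution, and then to translate the differential inequality on $w$ into the two claimed conclusions: invariance of $X_w$ and finite-time entry into $X_w(c)$. First, I would fix $x\in X_w$ and consider the maximal solution $\sol_{x,u}$ on its interval of existence $[0,T)$, which exists because $f$ is $\scrC^1$ and $u$ is continuous (so the right-hand side $f(\cdot,u(\cdot))$ is continuous, guaranteeing local existence of Carathéodory/Peano solutions). Define $V(t) := w(\sol_{x,u}(t))$; this is absolutely continuous, and for almost every $t$ where $\sol_{x,u}(t)\in X_w\setminus X_w(c)$ we have $\dot V(t) = \inp{\nabla w(\sol_{x,u}(t))}{f(\sol_{x,u}(t),u(\sol_{x,u}(t)))} \leq -\rho\, V(t)$ by~\eqref{eq:u_CLF}.

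Next I would establish invariance of $X_w = X_w(C)$. The key observation is that $w$ cannot exceed $C$ along the trajectory as long as the trajectory stays in $X_w$: whenever $V(t) > c$, $\dot V(t) \leq -\rho V(t) < 0$, so $V$ is strictly decreasing on the region $\{V > c\}$; combined with $V(0) \leq C$, a standard comparison/continuity argument shows $V(t) \leq C$ for all $t$ in the domain, hence $\sol_{x,u}(t) \in X_w(C) = X_w$. (More carefully: let $\tau = \sup\{t : \sol_{x,u}(s)\in X_w \text{ for } s\le t\}$; on $[0,\tau)$ the bound $V(t)\le \max\{V(0),c\}\le C$ holds because on any subinterval where $V>c$ it is decreasing, and on $\{V\le c\}$ trivially $V\le c<C$; by continuity $V(\tau)\le C$, and since $X_w(C)$ is relatively closed in $X_w$ and $X_w\subseteq X$ is open via~\eqref{eq:Containement}, the trajectory cannot leave, so $\tau$ cannot be finite unless it equals $T$.) This also shows the trajectory stays in a compact set, so by the standard escape-time argument $T = +\infty$, i.e., the solution is global.

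Then I would prove finite-time entry into $X_w(c)$. Suppose for contradiction that $\sol_{x,u}(t) \notin X_w(c)$, i.e., $V(t) > c$, for all $t \geq 0$. Then $\dot V(t) \leq -\rho V(t)$ holds for a.e.\ $t\ge 0$, and Grönwall's inequality (the comparison lemma for the scalar ODE $\dot y = -\rho y$) gives $V(t) \leq V(0) e^{-\rho t}$. Since $V(0) \leq C$ and $\rho > 0$, this forces $V(t) \to 0$, so in particular $V(t) < c$ for $t$ large enough—contradicting $V(t) > c$ for all $t$. Hence there is some $T_x \geq 0$ with $\sol_{x,u}(T_x) \in X_w(c)$. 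To upgrade this to $\sol_{x,u}(t)\in X_w(c)$ for \emph{all} $t\ge T_x$, I would argue that $X_w(c)$ is itself forward invariant: on its boundary (where $V = c > c$ is false, so we look at the limiting behavior) the inequality $\dot V \le -\rho V < 0$ holds on $X_w\setminus X_w(c)$, so $V$ cannot increase through the level $c$—formally, if the trajectory ever had $V(t_1) > c$ for some $t_1 > T_x$, let $t_0 = \sup\{t \le t_1 : V(t) \le c\}$; then $V > c$ on $(t_0, t_1]$, so $V$ is strictly decreasing there, contradicting $V(t_1) > c = V(t_0)$.

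The main obstacle I anticipate is the careful regularity bookkeeping: $u$ is only assumed continuous (not locally Lipschitz), so solutions need not be unique, and I must phrase everything in terms of an arbitrary solution $\sol_{x,u}$ while ensuring that $V = w\circ\sol_{x,u}$ is absolutely continuous with $\dot V$ given by the chain rule almost everywhere (this is fine since $w\in\scrC^1$ and $\sol_{x,u}$ is absolutely continuous, but it must be invoked cleanly). The second delicate point is handling the behavior exactly on the level set $\{w = c\}$ and ensuring the invariance arguments do not implicitly assume $V$ is differentiable or the trajectory is unique; the comparison-lemma formulation via $\dot V \le -\rho V$ a.e.\ plus absolute continuity sidesteps this. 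Everything else—escape time, Grönwall, relative closedness of sublevel sets—is routine.
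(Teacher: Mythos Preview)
Your proposal is correct and follows exactly the approach the paper indicates: the paper does not give a detailed proof but simply states that the result ``follows from classic Lyapunov theory and the comparison argument,'' referring to standard references. Your argument spells out precisely this reasoning---Gr\"onwall/comparison for the decay of $V=w\circ\sol_{x,u}$, sublevel-set invariance, and finite-time entry into $X_w(c)$---so the approaches coincide.
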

 The proof follows from classic Lyapunov theory and the comparison argument, therefore, we refer to~\cite{Sontag83,ClarkeLedRif2000} or related literature for a detailed demonstration. 
 
We note that Definition~\ref{def:clf} considers basins of attraction $X_w$ which are sublevel sets of CLFs. Hence, these sets are \emph{safe by construction}, that is, all solutions under a control $u$ satisfying~\eqref{eq:u_CLF} will always stay inside $X_w$ (in addition to eventually reaching $X_w(c)$). As such, the CLFs considered in this paper allow to enforce \emph{reach-while-avoid} objectives, by provably \emph{avoiding} an unsafe region \emph{while} \emph{reaching} a target region within the state space. As the computation of such CLFs can introduce some conservatism, we note that more general approaches, such as control Lyapunov barrier functions (see e.g.~\cite{Xiao2021,Jagtap2021,Clark21}) can similarly be used for the purpose of guaranteeing safety, if no property of convergence is required.
% Also, the fact that $X_w(c)$ is a sublevel set of $w$ ensures that, once a system trajectory enters it, there always exist control actions keeping this trajectory in $X_w(c)$ and thus, by~\eqref{eq:Containement}, in the target $T$. 
% Indeed, the existence of a CLF as given in Definition~\ref{def:clf}, defines the feedback control policy
% \begin{equation}\label{eq:u_CLF}
% 	u(x) \in\arg\min_{u\in U} \nabla w^\top(x) f(x,u)
% \end{equation}
% ensuring the attractiveness and forward-invariance of $X_w(c)$ for trajectories starting in $X_w$.
% Immediately prompting a stabilizing control law, CLFs are powerful tools to design safe controllers.
% 
%
\begin{remark}[CLFs-based Feedback design: Literature review]
	Definition~\ref{def:clf} is stated in a form particularly suited for our purposes and many extensions/modifications are possible.

First of all, let us point out that some technical issues can arise, even in the restricted context of Definitions~\ref{defn:ControlSystem} and~\ref{def:clf}, when considering feedback control laws satisfying~\eqref{eq:u_CLF}. Indeed, functions $u:X\to U$ satisfying~\eqref{eq:u_CLF} can be necessarly discontinuous and thus special care should be provided in defining tailored solution concepts for the closed loop $\dot x=f(x,u(x))$.
For the interested reader, this technical topic is discussed in~\cite[Section 8]{Clarke11}.  In the affine-control case, i.e. when $U=\R^{n_u}$ and $f(x,u)=h(x)+g(x)u$ for some functions $h:\scrC^1(X,\R^{n_x})$ and $g\in \scrC^1(X,\R^{n_x\times n_u})$, a smooth CLF as in Definition~\ref{def:clf} induces a \emph{continuous} feedback law, as defined in~\cite{Sontag89} and well summarized in~\cite{RomJay16}. 
Moreover, for notational simplicity, in Definition~\ref{def:clf} we impose to the candidate CLF the \emph{continuous differentiability} property.
	This hypothesis can be relaxed considering locally Lipschitz candidate control Lyapunov functions.
	In this case, in~\eqref{eq:Decreasing}, Dini-derivatives or Clarke gradient formalism should be used, since the classical gradient is not defined for locally Lipschitz functions.
	We want to stress that, for the classical stabilizability problem of control systems, it is necessary, in order to avoid any conservatism, to consider non-smooth (but locally Lipschitz) CLFs, see~\cite{ClarkeLedRif2000} and references therein. 
	% Another important feature of Definition~\ref{def:clf} is that the decreasing property on the derivative of $w$ in~\eqref{eq:Decreasing} is not imposed inside the $c$-sublevel set, allowing to ensure convergence to a (in general non-singleton and non-connected) compact set.
	% This provides us with a framework rich enough to tackle, with a unique formalism, both \emph{practical} stabilizability, i.e. stabilizability with respect to non-trivial targets, and the classic case of stabilization with respect to an equilibrium point. 
\end{remark}

	\subsection{Linear Temporal Logic}\label{subsec:LTL} 
	%!TEX root = ../main.tex

In this section, we introduce the syntax and semantics of Linear Temporal Logic (LTL) in order to formally describe the logical specifications. For a complete overview, we refer to~\cite[Chapter~5]{BaierKatoen}.

\smallskip
\noindent\textbf{Atomic Propositions.} An atomic proposition is a boolean variable (i.e., a variable that can either be $\true$ or $\false$) which signals important information to the higher-layer logical control layer. In this paper, we consider three different (finite) sets of atomic propositions: 
(i)~\emph{state propositions} $\AP_S$,
(ii)~\emph{observation propositions} $\AP_O$, and
(iii)~\emph{control propositions} $\AP_C$. 
State propositions (e.g., $\Tc_1$, $\Tc_2$, $\Tc_3$ in \cref{fig:example}) are associated with a \emph{subset of the state space} s.t.\ $\cT_i\in \AP_S$ is $\true$ at time $t$ if the current state $x(t)$ of the underlying dynamical system is within this subset\footnote{With a slight abuse of notation we denote the state subset associated with a state proposition by the same symbol.}, i.e. $x(t)\in\cT_i\subseteq X$. 
Observation propositions $\AP_O$ denote all other aggregated information \emph{observed} by the logical controller from the underlying continuous control system (e.g., $\Dc$ in \cref{fig:example}) and the external environment (e.g., $\Mc_1$, $\Mc_2$, and $\Mc_3$ in \cref{fig:example}).
Control propositions $\AP_C$ denote a finite set of feedback control strategies that the high-level logical controller can choose (which will be introduced in \cref{sec:algo:low-to-high}).
% 
% We also use observation propositions to model environment disturbances or partial observation restrictions induced by the abstraction. Intuitively, $\AP_O$ models all influence neither under control of the system nor the logical controller. 
We denote by $\AP:=\AP_S\cup\AP_O\cup\AP_C$ the set of all propositions. % in the system. 

% \begin{example}
% Consider the motivating example given in \cref{fig:example}. As described in \cref{sec:intro}, 
% the propositions $\Tc_1$, $\Tc_2$, $\Tc_3$, and $\Wc$ are state  propositions as they depend uniquely on the robot configuration (i.e., the system state). More precisely, $\cT_i\in\AP_S$ is $\true$ if the current state is in target region $\cT_i$. 
% Furthermore, the propositions $\Mc_1$, $\Mc_2$, and $\Mc_3$ are observation propositions as they are different modes controlled by an external environment. 
% In addition, we have another observation proposition $\Dc$ that signals whether the door is closed, i.e., $\Dc$ is $\true$ if the door is closed, which can be observed by the system. 
% Moreover, we can have a set of control propositions signaling whether the system is using a particular feedback controller.\qed
% \end{example}

Given a control system $\cS=(X,U,f)$, 
the state propositions $\AP_S$ define a labelling function $\labelfunc\colon X\to 2^{\AP_S}$ s.t.\ for all $\Xc\in \AP_S$ holds that $\Xc\in \labelfunc(x)\,\Leftrightarrow\,x\in \Xc$. In addition, $\ldist\colon \R_+\to 2^{\AP_O}$
denotes a piecewise-constant and right-continuous\footnote{A function $L:\R_+\to S$, with $S$ a finite set, is piecewise-constant if  it has a finite number
of discontinuities in any bounded subinterval of $\R_+$; it is right-continuous if $\lim_{s\searrow t} \labelfunc(s)=\labelfunc(t)$ for all $t\in \R_+$.}
\emph{logical disturbance function} modelling the sequence of observation propositions acting on the system over time. We collect all logical disturbance functions acting on $S$ in the set $\Ldist$.
% and
% the atomic propositions in $\AP_O$ observed over time
% can be described by the \emph{labelling functions} 
% $\labelfunc\colon X\to 2^{\AP_S}$ and
% $\labelfunc_O\colon \R_+\to 2^{\AP_O}$, respectively. 
% We assume that, for all  $\Xc\in \AP_S$ we have $\Xc\in \labelfunc(x)\,\Leftrightarrow\,x\in \Xc$ and that $\labelfunc_O$ is 
% We denote by $\labelfunc:=\{\labelfunc,\labelfunc_O\}$ the labelling functions over $\cS$.

% \noindent\textbf{Transition Systems.}
% A transition system over a set of atomic propositions $\AP$ is a tuple $\Aut=(\vertices,M,\Delta,L)$ where $Q$ is a finite set of vertices, $M$ is a finite set of moves, $\Delta: Q\times M \rightarrow 2^Q$ is a non-deterministic transition function, and $L\colon \vertices \rightarrow 2^\AP$ is a vertex labelling function. A run over $\Aut$ is a sequence $\play = q_0q_1\hdots\in Q^\omega$ s.t. for all $i\geq 0$, $q_{i+1}\in\Delta(q_i,m_i)$ for some $m_i\in M$.

\smallskip
\noindent\textbf{Traces.}
For a set $A$, we write $A^\omega$ to denote the set of all infinite sequences $a_0a_1\hdots$ with $a_i\in A$ for each $i\geq 0$.

Then, a \emph{trace} over a set of atomic propositions $\AP$ is an infinite sequence $\trace = l_0l_1\hdots \in (2^\AP)^\omega$. Sometimes we also write $\trace = p_0p_1\hdots \in \AP^\omega $ to denote the trace $\{p_0\}\{p_1\}\hdots$.
% 
% Furthermore, 
Given a control system $\cS$ with labelling function $\labelfunc$, a trace $l_0l_1\hdots$ over $\AP_S\cup\AP_O$ is said to be \emph{generated} by a trajectory $\xi:\R_+\to X$ (of the underlying dynamical system) under disturbance $\ldist\colon \R_+\to 2^{\AP_O}$, if there exists an infinite sequence of time points $\tau_0,\tau_1,\hdots$ for which it holds that:
\begin{itemize}[leftmargin=*]
    \item $\tau_0 = 0$, $\tau_i < \tau_{i+1}$, and $\tau_i$ goes to $\infty$ as $i$ goes to $\infty$,
    
    \item for all $i\in\mathbb{N}$, $t\in [\tau_i,\tau_{i+1})$,  $\,\labelfunc(\xi(t)) \cup \ldist(t) = l_i$ holds.

    %\item for each $i\in\mathbb{N}$, there exists $\tau\in [\tau_i,\tau_{i+1}]$ such that:
   % \begin{itemize}
    %    \item for all $t\in [\tau_i,\tau)$, we have $\labelfunc_O(t)\cup \labelfunc_O(\Phi(t)) = l_i$,
    %    \item for all $t\in (\tau,\tau_{i+1}]$, we have $\labelfunc_O(t)\cup \labelfunc_O(\Phi(t)) = l_{i+1}$,
   %     \item $\labelfunc_O(t)\cup \labelfunc_O(\Phi(t)) = l$ for some $l\in\{l_i,l_{i+1}\}$.
  %  \end{itemize}
\end{itemize}
We write $\traces_{\labelfunc,\ldist}(\xi)$ to denote the set of all traces generated by $\xi$ under $\labelfunc$ and $\ldist$.

\smallskip
\noindent\textbf{Linear Temporal Logic (LTL).} 
We consider requirement specifications written in Linear Temporal Logic~\cite{LTL}. LTL formulas over a set of atomic propositions $\AP$ are given by the grammar
\[\spec \Coloneqq p \mid \specF\vee\specS \mid \neg \spec \mid \Next\specF \mid \specF \until \specS,\] 
where $p\in \AP$ and $\specS$ is an LTL formula. 

A trace $\trace = l_0l_1\hdots \in (2^\AP)^\omega$ is defined to \emph{satisfy} an LTL formula~$\spec$, written as $\trace\vDash\spec$, recursively as follows:
\begin{itemize}[leftmargin=*]
    \item $\trace\vDash p$ if $p\in l_0$;
    \item $\trace\vDash \specF \vee \specS$ if  $\trace\vDash \spec$ or  $\trace\vDash \specS$;
    \item $\trace \vDash \neg \spec$ if $\trace\not\vDash \spec$;
    \item $\trace\vDash\Next \spec$ if $l_1l_2\hdots\vDash \spec$;
    \item $\trace\vDash\specF \until \specS$ if there exists $k\geq 0$ such that $l_il_{i+1}\hdots\vDash \specF$ for all $i< k$ and $l_kl_{k+1}\hdots\vDash\specS$.
\end{itemize}
Furthermore, we define $\true \coloneqq p \vee \neg p$, $\false \coloneqq \neg \true$, and the usual additional operators $\specF \wedge \specS \coloneqq \neg (\neg \specF \vee \neg \specS)$, $\specF \Rightarrow \specS \coloneqq \neg \specF \vee \specS$, 
$\finally \spec \coloneqq \true \until \spec$, $\globally \spec \coloneqq \neg \finally \neg \spec$, and $\specF \weakuntil \specS  \coloneqq (\specF \until \specS) \vee \globally \specF$ for LTL formulas. 
We also use a set of LTL formulas $\{\spec_1,\spec_2,\ldots,\spec_k\}$ as an LTL formula which represents the disjunction of all formulas in it.

\subsection{Games on Graphs}\label{section:games_strategy_template}
In this section, we define the games on graphs and related techniques which will be used to compute a high-level logical controller satisfying  a given LTL specifications. 

\smallskip
\noindent\textbf{Game Graphs.}
A (labelled) \emph{game graph} over a set of atomic propositions $\AP$ is a tuple $\gamegraph= (\vertices,\edges, \labelfuncG)$ consisting of a finite set of \emph{vertices} $ \vertices$ partitioned into two sets: $\p{0}$'s (controller player) vertices and $\p{1}$'s (environment player) vertices, a set of \emph{edges} $ \edges\subseteq \vertices\times\vertices  $, and a labelling function $\labelfuncG\colon \vertices\rightarrow2^\AP$. We write $\verticesi$ to denote $\p{i}$'s vertices, and $\edgesi$ to denote the edges with source in $\verticesi$, i.e., $\edgesi = \edges \cap (\verticesi \times\vertices)$. 
% Moreover, for a set $\vertices'\subseteq\vertices$, we define $\edges(\vertices')$ and $\edgesi(\vertices')$ as the set of end-vertices of the edges $\edges\cap (\vertices'\times\vertices)$ and $\edgesi\cap (\vertices'\times\vertices)$, respectively.
% \mdr{Check with Satya}
% \satya{I couldn't find anyplace (other than the next line) where we use $\edges(V')$, so I removed the definition.)}

% Without loss of generality, we assume that for every $\vertex\in \vertices$ there exists $\vertex'\in \vertices$ s.t. $(\vertex,\vertex')\in \edges$.
A (\p{i}) \emph{dead-end} is a ($\p{i}$) vertex $v$ such that there is no edge from $v$, i.e., $\edges \cap (v\times V) = \emptyset$.  
A \emph{play} from a vertex $\vertex_0$ is a finite or infinite sequence of vertices $\play=\vertex_0\vertex_1\ldots \in \vertices^\omega$ such that $(\vertex_k,\vertex_{k+1})\in \edges$ for all $k\in \N$.

\smallskip
\noindent\textbf{Games.}
A (alternating) two-player game is a pair $\game = (\gamegraph,\textsc{Win})$ consisting of a game graph $\gamegraph=(\vertices,\edges,\labelfuncG)$ such that $\edges \cap (\verticesi\times\verticesi) = \emptyset$ and a winning condition $\textsc{Win}\subseteq V^\omega$. Every winning condition that we consider in this paper can equivalently be expressed as an LTL formula\footnote{We sometimes abuse notation by using the same symbol for the LTL formula and its semantics. An LTL formula $\phi_{\textsc{Win}}$ should not be confused with the control objective $\spec$ over the set $\AP$ defined in \cref{subsec:LTL}. }
$\phi_{\textsc{Win}}$ over a set of propositions interpreted as subsets of $V$ and we use both characterizations interchangeably. %, s.t. $\play\vDash\phi_{\textsc{Win}}$ iff $\play\in\textsc{WIN}$.
A play $\play$ is \emph{winning} if $\play$ ends in a $\p{1}$ dead-end or $\play\in\textsc{Win}$ (or equivalently $\play\vDash\phi_{\textsc{Win}}$). %Note that the labelling function $\labelfuncG$ is not relevant for the game.

% A \emph{strategy} for $\p{i}$, is a function $\strat\colon \vertices^*\verticesi\to \vertices_{1-i}$ such that  
% $(q,\strat(\play q))\in \edges$ holds for every $\play q \in \vertices^*\verticesi$. 
% % 
% Given a  strategy $\strat$ for $\p{i}$, a $\strat$-play is a play $\play=q_0q_1\hdots$ such that $q_{k-1}\in \verticesi$ implies $q_{k} = \strat(q_0\hdots q_{k-1})$ for all $k$.
% % 
% A $\p{i}$ strategy $\strat$ is \emph{memoryless} if $\strat(\play q)=\strat(q)$ holds for every $\play q \in \vertices^*\verticesi$. 
% Hence, such a memoryless strategy for $\p{i}$ can be represented by a function $\strat\colon\verticesi\to\vertices_{1-i}$.
A (memoryless) \emph{strategy} for $\p{i}$, is a function $\strat\colon \verticesi\to \vertices_{1-i}$ such that  
$(v,\strat(v))\in \edges$ holds for every $v \in \verticesi$. 
Given a  strategy $\strat$ for $\p{i}$, a $\strat$-play is a play $\play=v_0v_1\hdots$ s.t.\ $v_{k-1}\in \verticesi$ implies $v_{k} = \strat(v_{k-1})$ for all $k$.

A $\p{0}$ strategy $\strat$ is \emph{winning from a vertex} $\vertex$ if every $\strat$-play from $\vertex$ are winning. 
Moreover, if such a strategy exists for a vertex $\vertex$, then that vertex $\vertex$ is said to be \emph{winning}. We collect all such winning vertices in the \emph{winning region}; and a $\p{0}$ strategy is said to be \emph{winning} if it is winning from every vertex $\vertex$ in the winning region.
Note that we have defined winning strategies only for $\p{0}$ as only $\p{0}$ wants to satisfy the specification in such a (zero-sum) game.

\smallskip
\noindent\textbf{Parity Games.}
A \emph{parity game} is a game with a \emph{parity winning condition} $\paritygame(\priority)$ defined via a priority function $ \priority: V\to [0,d] $ that assigns to each vertex a priority. A play $\play=\vertex_0\vertex_1\ldots $ is winning w.r.t.\ $\paritygame(\priority)$ if the maximum priority seen infinitely often along $\play$ is even. 
The parity winning condition $\paritygame(\priority)$ can be represented by an LTL formula whose atomic propositions are subsets $P_i\subseteq V$ collecting all states with priority $i$, yielding%
% \footnote{We slightly abuse notation by referring to both the LTL formula and its semantics with $\paritygame(\priority)$. Also, $\paritygame(\priority)$ should not be confused with the control objective $\spec$ given for the overall synthesis problem over the set of athomic propositions $\AP$ defined in \cref{subsec:LTL}. }
%, i.e., iff $\max\left(\priority\left(\mathrm{Inf}(\play)\right)\right)$ is even, where $\mathrm{Inf}(\play) := \{\vertex\in V \mid \vertex_k=\vertex \mbox{ for infinitely many }k\in\mathbb{N}\}$.

%Parity specifications are of central importance in the study of synthesis problems as they are general enough to model a huge class of qualitative requirements of cyber-physical systems.

% A parity game is a game $\game=(\gamegraph,\spec)$ with parity specification $\spec = \paritygame(\priority)$ as defined below
\[
    % \paritygame(\priority) \coloneqq
    \bigwedge_{\text{odd }i\in [0;k]} \left(\globally\lozenge P_i \implies \bigvee_{\text{even }j\in [i+1;k]} \globally\lozenge P_j\right).
\]
% where $ P_i=\{\vertex\in \vertices\mid \priority(\vertex)=i \} $ for some priority function $ \priority: V\to [0,d] $ that assigns to each vertex a priority. Intuitively, a play satisfies such a parity specification if the maximum of priorities seen infinitely often is even. 

\smallskip
\noindent\textbf{LTL to Parity Games.}
It is well-known\footnote{We refer the reader to standard textbooks, e.g. \cite{LTLgamesBook}, for more details on LTL, graph games and their connection.} that every LTL formula $\spec$ over some finite proposition set $\AP$ can be translated into an equivalent (labeled) parity game $\game=(\gamegraph,\paritygame(\priority))$. This translation requires a partition of $\AP = \AP_0\cupdot\AP_1$ such that $\p{i}$ (i.e., the controller or the environment player, respectively) chooses the propositions in $\AP_i$. We will see that for the synthesis problems that we consider in this paper, this partition is naturally given.
In addition, plays $\play=\vertex_0\vertex_1\hdots\in V^\omega$ are translated into traces $\trace=l_0l_1\hdots \in (2^\AP)^\omega$ (called \emph{generated} by $\play$) via the labeling function $\labelfuncG$ of $\gamegraph$, s.t.\ 
$l_i = \labelfuncG(\vertex_{2i+1})\cup\labelfuncG(\vertex_{2i+2})$ for each $i\geq 0$. 
Furthermore, we say a game $\game$ or game graph $\gamegraph$ is \emph{total} w.r.t. $\AP'\subseteq \AP$ if for every trace $\trace'$ over $\AP'$, there exists a trace $\trace$ generated by a play in $\gamegraph$ such that $\trace|_{\AP'} = \trace'$.

With this, we recall the following well-known result.
\begin{lemma}[{\cite[Section~4]{spot_ltlsynt}}]\label{lemma:LTL2parity}
Every LTL formula $\spec$ over $\AP = \AP_0 \cupdot \AP_1$ can be translated into a parity game $\game=((\vertices,\edges,\labelfuncG),\paritygame(\priority))$ with $\labelfuncG:=\verticesi\rightarrow 2^{\AP_{1-i}}$ such that $\game$ is total w.r.t. $\AP$.
Moreover, a play is winning in $\game$ iff its generated trace satisfies $\spec$.
\end{lemma}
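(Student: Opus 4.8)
The plan is to factor the statement through the classical translation of LTL into deterministic parity automata and then to ``unfold'' such an automaton into an alternating game, splitting each input letter into the part chosen by \p{0} and the part chosen by \p{1}. Concretely, I would first invoke the classical result that every LTL formula $\spec$ over $\AP$ admits an equivalent deterministic parity automaton $\mathcal{A} = (Q, 2^{\AP}, q_0, \delta, \mathrm{pri}_{\mathcal{A}})$ over the alphabet $2^{\AP}$, i.e.\ with $\mathcal{L}(\mathcal{A}) = \{\trace \in (2^\AP)^\omega \mid \trace \vDash \spec\}$, where $\mathrm{pri}_{\mathcal{A}}\colon Q \to [0,d]$ and a run is accepting iff the largest priority seen infinitely often is even. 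This determinization is the only heavy ingredient and is used as a cited black box; everything else is structural.

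Next I would build $\gamegraph = (\vertices, \edges, \labelfuncG)$ from $\mathcal{A}$. Put $\verticesl := \{\vertex_0\} \cup (Q \times 2^{\AP_0})$ with $\vertex_0$ a fresh initial vertex, and $\verticeso := Q \times 2^{\AP_1}$. Define $\labelfuncG(\vertex_0) := \emptyset$, $\labelfuncG((q,a_0)) := a_0$ for $(q,a_0) \in \verticesl$, and $\labelfuncG((q,a_1)) := a_1$ for $(q,a_1) \in \verticeso$; this respects the required signature $\labelfuncG\colon \verticesi \to 2^{\AP_{1-i}}$. The environment's edges (out of $\verticesl$) are $\vertex_0 \to (q_0, a_1)$ and $(q, a_0) \to (q, a_1)$, one for each $a_1 \in 2^{\AP_1}$; the controller's edges (out of $\verticeso$) are $(q, a_1) \to (\delta(q, a_0 \cup a_1),\, a_0)$, one for each $a_0 \in 2^{\AP_0}$. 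By construction $\edges \cap (\verticesi \times \verticesi) = \emptyset$ and there are no dead-ends, so every play is infinite. Finally set $\priority(\vertex_0) := \mathrm{pri}_{\mathcal{A}}(q_0)$ and $\priority((q, \cdot)) := \mathrm{pri}_{\mathcal{A}}(q)$ on all other vertices, and take $\paritygame(\priority)$ as the winning condition; the $\AP_0 \cupdot \AP_1$ partition needed by Lemma~\ref{lemma:LTL2parity} is exactly the one already fixed.

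For correctness, totality w.r.t.\ $\AP$ is immediate: given any $\trace = l_0 l_1 \cdots \in (2^\AP)^\omega$, the choices $a_1 := l_i \cap \AP_1$ and $a_0 := l_i \cap \AP_0$ induce a play $\vertex_0 \vertex_1 \vertex_2 \cdots$ with $\vertex_{2i+1} = (q_i, l_i \cap \AP_1)$ and $\vertex_{2i+2} = (q_{i+1}, l_i \cap \AP_0)$, where $q_0 q_1 \cdots$ is the run of $\mathcal{A}$ on $\trace$; since $\labelfuncG(\vertex_{2i+1}) \cup \labelfuncG(\vertex_{2i+2}) = (l_i \cap \AP_1) \cup (l_i \cap \AP_0) = l_i$, this play generates exactly $\trace$. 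Conversely, any play determines such a choice sequence and hence a unique generated trace $\trace$, and a straightforward induction shows that the even-indexed vertices $\vertex_0, \vertex_2, \vertex_4, \ldots$ carry precisely the states $q_0, q_1, q_2, \ldots$ of the run of $\mathcal{A}$ on $\trace$. As each $\mathrm{pri}_{\mathcal{A}}(q_i)$ then appears on exactly the two consecutive vertices $\vertex_{2i}, \vertex_{2i+1}$, the largest priority seen infinitely often along the play equals that of the run; hence the play is winning iff the run is accepting iff $\trace \vDash \spec$, which gives both conclusions.

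The main obstacle I anticipate is not any deep argument but the bookkeeping forced by the paper's convention that a single trace letter $l_i$ is split across two consecutive vertices under the ``crossed'' labelling $\labelfuncG\colon \verticesi \to 2^{\AP_{1-i}}$: one must place the automaton transition on the controller edge (i.e.\ after both halves of $l_i$ are fixed) and duplicate each automaton state's priority onto the paired game vertices so that the parity condition is preserved verbatim. The LTL-to-deterministic-parity-automaton step is the only technically heavy component, and it is entirely standard and invoked as a black box.
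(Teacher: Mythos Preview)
Your construction is correct and is precisely the standard ``split each input letter into environment half and controller half, then unfold the deterministic parity automaton into an alternating arena'' construction underlying LTL synthesis tools such as the one cited. The bookkeeping you worry about (placing the automaton transition after both halves of $l_i$ are fixed, and duplicating each automaton priority onto the two paired game vertices) is handled correctly.

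Note, however, that the paper does \emph{not} give its own proof of this lemma: it is explicitly introduced as a ``well-known result'' and attributed to~\cite[Section~4]{spot_ltlsynt}, so there is no in-paper argument to compare against. Your proposal is exactly the kind of proof one would find behind that citation, and it matches the paper's conventions (the crossed labelling $\labelfuncG\colon \verticesi \to 2^{\AP_{1-i}}$, the definition of generated trace via $l_i = \labelfuncG(\vertex_{2i+1}) \cup \labelfuncG(\vertex_{2i+2})$, and the totality notion). In short: nothing to fix, and nothing to compare---you have supplied a proof the paper simply outsources.
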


With \cref{lemma:LTL2parity}, the problem of computing a logical controller which satisfies a given specification $\spec$ in interaction with an uncontrolled environment reduces to computing a winning strategy in a parity game $\game$. 
% . This translation requires, in addition to an LTL formula over $\AP$, a partition of $\AP = \AP_0\cupdot\AP_1$ such that $\p{i}$ controls the propositions in $\AP_i$. 
% The players take turns in the game to perform their moves, which consists in assigning \true or \false to their propositions in order to match the label of the vertex leading to a desired next edge.
% In particular, each vertex of $\p{i}$ is labelled by a set of propositions in $\AP_{1-i}$.
% Hence, given a play, taking the union of the propositions both player choose in every two turns (i.e., one turn of each player) gives us a trace generated by this play. 
% Formally, given a game graph $\gamegraph = (\vertices,\edges,\labelfuncG)$, a trace $\trace=l_0l_1\hdots$ over $\AP$ is said to be \emph{generated} by a play $\play=\vertex_0\vertex_1\hdots$ if $l_i = \labelfuncG(\vertex_{2i})\cup\labelfuncG(\vertex_{2i+1})$ for each $i\geq 0$. 
% We refer the reader to standard textbooks, e.g. \cite{LTLgamesBook}, for more details on LTL, graph games and their connection.
% Moreover, the translation from an LTL formula to a parity game can be formalized as follows.

\subsection{Strategy Templates}\label{sec:Prelim:StrategyTemplates}
% \smallskip
% \noindent\textbf{Strategy templates.}
While it is well known how to compute a \emph{single} winning strategy for a parity game $\game$, it was recently shown that \emph{strategy templates}~\cite{strategyTemplatereport}, which characterize an infinite number of winning strategies in a succinct manner, are particularly useful in the context of CPS control design. They are utilized within this paper to obtain a novel translation of high-level logical control actions into low-level feedback controllers.

\emph{Strategy templates} are constructed from three types of local edge conditions, i.e., \emph{safety}, \emph{co-live} and \emph{live-group} templates.
Formally, given a game $\game=(\gamegraph = (\vertices,\edges,\labelfuncG),\textsc{Win})$, a strategy template is a tuple $(\safegroup,\colivegroup,\livegroup)$ consisting of a set of \emph{unsafe} edges $\safegroup\subseteq \edgeso$, a set of \emph{co-live} edges $\colivegroup\subseteq \edgeso$, and a set of live-groups $\livegroup \subseteq 2^{\edgeso}$. This strategy template can also be represented by an LTL formula $\assump = \assumpsafe(\safegroup) \wedge \assumpdep(\colivegroup) \wedge \assumpgrlive(\livegroup)$, where
\begin{align*}
    &\assumpsafe(\safegroup) \coloneqq \bigwedge_{e\in \safegroup} \globally\neg e,\\
    &\assumpdep(\colivegroup) \coloneqq \bigwedge_{e\in \colivegroup} \finally\globally\neg e,~\text{and}\\
    &\assumpgrlive(\livegroup) \coloneqq \bigwedge_{\livegroupSingleN\in\livegroup} \globally\finally \src(\livegroupSingleN) \Rightarrow \globally\finally \livegroupSingleN.
\end{align*}
Here, an edge $e = (u,v)$ represents the LTL formula $u\wedge\Next v$, and $\src(\livegroupSingleN)$ is the source set $\{\vertex_1,\vertex_2,\ldots,\vertex_k\}$ of the edges in the  live-group $\livegroupSingleN = \{(\vertex_1,\vertex_1'),(\vertex_2,\vertex_2'),\ldots,(\vertex_k,\vertex_k')\}\in\livegroup$. 

A $\p{0}$'s strategy $\strat$ satisfies a strategy template $\assump$ if it is winning in the game $(\gamegraph,\assump)$.
Intuitively, $\p{0}$'s strategy $\strat$ satisfies a strategy template $(\safegroup,\colivegroup,\livegroup)$ if every $\strat$-play $\play$ satisfies the following: 
\begin{enumerate}[label=(\roman*)]
    \item\label{item:strategyTemplate:1} $\play$ never uses the unsafe edges in~$\safegroup$;
    \item\label{item:strategyTemplate:2} eventually, $\play$ stops using the co-live edges in~$\colivegroup$; and
    \item\label{item:strategyTemplate:3} if $\play$ visits $\src(\livegroupSingleN)$ infinitely many times, then it also uses the edges in $\livegroupSingleN$ infinitely many times.
\end{enumerate}

Moreover, a strategy template $\assump$ is \emph{winning} if every strategy satisfying $\assump$ is winning in the original game~$\game$.
Note that sources of all the edges in these templates are $\p{0}$'s vertices. %, and hence, such winning strategy templates represents a class of (potentially) infinitely many winning strategies by local edge conditions~\ref{item:strategyTemplate:1}-\ref{item:strategyTemplate:3}.
The algorithm to compute a winning strategy template in a parity game lies in same time complexity class as the standard algorithm, i.e., Zielonka's algorithm~\cite{Zielonka98}, for solving parity games. This leads to the following result:
\begin{lemma}[\cite{strategyTemplatereport}]\label{lemma:strategytemplate}
Given a parity game with game graph $\gamegraph=(\vertices,\edges,\labelfuncG)$ and priority function $\priority\colon\vertices\to[0,d]$, a winning strategy template can be computed in $\mathcal{O}\left(\abs{\vertices}^{d+\mathcal{O}(1)}\right)$ time.
\end{lemma}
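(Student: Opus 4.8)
The plan is to recall the construction of~\cite{strategyTemplatereport}, which augments Zielonka's classical recursive algorithm~\cite{Zielonka98} so that, in addition to the winning region, it returns a winning strategy template $(\safegroup,\colivegroup,\livegroup)$, while keeping the same asymptotic running time because the augmentation only adds polynomial‑time bookkeeping (set unions and attractor computations) at each recursive call. Throughout, correctness means exactly that every $\p{0}$ strategy satisfying the returned template is winning in $\game$, i.e.\ the local edge conditions~\ref{item:strategyTemplate:1}--\ref{item:strategyTemplate:3} suffice to certify the parity condition.

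First I would set up the two base cases that actually create template edges. In every subgame, all $\p{0}$-edges leaving the current winning region $W$ are put into $\safegroup$, which by~\ref{item:strategyTemplate:1} already forces conforming plays to stay in $W$. For a \emph{B\"uchi} subgame (priorities in $\{1,2\}$, accepting set $P_2$), I would compute the standard B\"uchi fixpoint for $W$ and, inside $W$, a layered attractor toward $P_2$; the $\p{0}$-edges that strictly decrease the attractor rank form a live‑group added to $\livegroup$, so that~\ref{item:strategyTemplate:3} guarantees a conforming play staying in $W$ and visiting such a source infinitely often makes progress toward $P_2$ infinitely often. Dually, for a \emph{co‑B\"uchi} subgame (priorities $\{0,1\}$) the $\p{0}$-edges inside $W$ whose avoidance is needed to eventually stop seeing the bad set are put into $\colivegroup$, and~\ref{item:strategyTemplate:2} discharges the objective.

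Next I would carry out the recursion exactly as in Zielonka's algorithm: let $h$ be the largest priority present, $N$ the vertices of priority $h$, and $i = h \bmod 2$; remove the $\p{i}$-attractor of $N$, recurse on the remaining subgame, and perform the usual case split on the parity of $h$ (keeping $N$ and its attractor on the winning side, or recursing again on the complement on the losing side). The template for the whole game is the componentwise union of the templates produced by the recursive calls together with the edges contributed by the attractor and base‑case steps. Correctness I would then prove by induction on $d$ and $\abs{\vertices}$, mirroring the correctness proof of Zielonka's algorithm but reasoning uniformly about the whole family of template‑conforming strategies rather than a single fixed strategy.

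I expect this last step to be the main obstacle: one has to show that the \emph{local}, positional‑flavoured edge conditions are strong enough to certify the \emph{global} parity condition, and in particular to control the interaction of live‑groups introduced at different recursion depths — a play may satisfy a live‑group vacuously simply by ceasing to visit its source. The invariant to maintain is that a conforming play either eventually stays within a subgame where a lower‑priority template takes over, or is forced by its live‑groups to visit an even priority infinitely often; once this invariant is established the complexity claim is immediate, since the recursion tree coincides with Zielonka's and each node performs only attractor computations and set operations, i.e.\ $\mathrm{poly}(\abs{\vertices})$ work, yielding the stated $\mathcal{O}\!\left(\abs{\vertices}^{d+\mathcal{O}(1)}\right)$ bound. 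For the full argument I would refer to~\cite{strategyTemplatereport}.
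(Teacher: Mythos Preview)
The paper does not give its own proof of this lemma: it is stated as a cited result from~\cite{strategyTemplatereport} and no argument is supplied beyond the citation. Your sketch of augmenting Zielonka's recursion with template bookkeeping is consistent with the approach of that reference, so there is nothing in the present paper to compare it against; your proposal is an appropriate expansion of what the paper merely cites.
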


% We refer the reader to the original paper on strategy template~\cite{strategyTemplatereport} for more details.

	\section{PROBLEM STATEMENT}\label{sec:ProblemStatement}
	%!TEX root = ../main.tex

This section gives a formal definition of the problem we are tackling in this paper.
Our goal is to automatically synthesize a reactive hybrid controller that operates a non-linear control system based on external logical inputs. %The controller is designed to meet a specific requirement expressed in Linear Temporal Logic (LTL) using a finite set of logical input and observation predicates.	Unlike methods that require time and space discretization, our algorithm operates directly on the continuous model of the system. To do so, it creates a game graph that incorporates all the essential information about the underlying dynamics. This will be clarified in what follows but, first, let us formalize the main problem to be solved.
Towards a formal problem statement, we first define \emph{a hybrid state-feedback control policy} which controls a system $\cS$ while reacting to logical context switches induced by the sequence of observation propositions $\ldist\in\Ldist$ acting on $\cS$ as logical disturbances.

% \begin{definition}\label{defn:SolutionsHybrid}
%  Let $\cS=(X,U,f)$ be a control system with labelling functions $\labelfunc:X\to 2^{\AP_S}$ and  
% $\labelfunc_O\colon \R_+\to 2^{\AP_O}$. A \emph{hybrid or context-dependent state-feedback policy} is a function $p:X\times2^{\AP_S}\cup 2^{\AP_O}\to U$. A \emph{solution} of $\cS$  for $p$ starting at $x\in X$ is a function\footnote{If $x$ and $p$ are clear from the context they are omitted for simplicity.} $\sol_{x,p}:[0,T)\to X$  (for some
% $T>0$ and possibly $T = +\infty$) such that $\sol_{x,p}(0)=x$, $\sol_{x,p}(t)\in X$ for all $t\in [0,T)$ and $\dot \sol_{x,p}(t)=f(\sol_{x,p}(t),p(\sol_{x,p}(t),\labelfunc(\sol_{x,p}(t))\cup\labelfunc_O(t))\,)$ for almost all $t\in [0,T)$. 
% \end{definition}
% \AKS{removed $\labelfunc$ to fit fig.\ref{Fig:BlockDiagram}.}
\begin{definition}\label{defn:SolutionsHybrid}
 Let $\cS=(X,U,f)$ be a control system and $\ldist\colon \R_+\to 2^{\AP_O}$ a disturbance function. A \emph{hybrid state-feedback policy} is a function $p:\R_+\times X\times \Ldist\to U$. A \emph{solution} of $\cS$ for $p$ starting at $x\in X$ under $\ldist$ is a function % \footnote{If $x$ and $p$ are clear from the context they are omitted for simplicity.} 
 $\sol_{x,p,\ldist}:[0,T)\to X$  (for some
$T>0$ and possibly $T = +\infty$) such that $\sol_{x,p,\ldist}(0)=x$, $\sol_{x,p,\ldist}(t)\in X$ for all $t\in [0,T)$ and $\dot \sol_{x,p,\ldist}(t)=f(\sol_{x,p,\ldist}(t),p(t,\sol_{x,p,\ldist}(t),\ldist(t))\,)$ for almost all $t\in [0,T)$. 
\end{definition}

This leads us to the following problem statement. 

\begin{problem}\label{prob:MainProb}
Given a control system $\cS=(X,U,f)$ with labelling function $\labelfunc\colon X\to 2^{\AP_S}$ and an LTL specification $\spec$ over the predicates $\AP_S\cup\AP_O$, find a set of \emph{winning initial conditions}  $X_{\text{win}}\subseteq X$ and  hybrid state-feedback policy $p:\R_+\times X\times \Ldist\rightarrow U$ s.t.\ 
for all $x \in X_{\text{win}}$, all disturbance functions $\ldist\in\Ldist$ and all solutions $\sol_{x,p,\ldist}$, it holds that
% $u(t)=p(\Phi(t),\labelfunc_O(t)\cup\labelfunc_O(t))$ ensures:
\begin{enumerate}
	\item[(i)] $\sol_{x,p,\ldist}(t)\in X_{\text{win}}$ for all $t\in\R_+$, and
	\item[(ii)] every trace $\trace \in \traces_{\labelfunc,\ldist}(\sol_{x,p,\ldist})$ satisfies $\spec$.
\end{enumerate}
\end{problem}
The remainder of this paper illustrates our solution to Problem~\ref{prob:MainProb} by first providing an overview of the entire multi-step synthesis algorithm in \cref{sec:ControlStrategy}, then highlighting additional details for selected steps in~\cref{subsec:DiscreteLayer} and~\cref{sec:ImplementDetails}, and showing simulation results for the motivating example from \cref{section:motivatingExample} in \cref{sec:RunningExample}.
% \AKS{this is not correct yet, because you have not defined trajectories defined by feedback controllers. You only define it for a fixed control input function. In the previous section you give $x$ and $u$ and define $\Phi_{x,u}$. In the previous version of this problem statement you defined $u$ (i.e., $p$) via $\Phi$, which is not even defined if there was no $u$ before. Can you please fix this?}\mdr{I redefined solutions in Section II for time-varying feedbacks laws}
% \AKS{also, why is it important that $\cS$ has an initial state? If this is important we need to ensure that $x_0\in X_w$. No?}\mdr{No, indeed we don't need initial conditions, we just want to find a set of winning initial states ($X_w$)!....if you agree, feel free to delete these comments}

    \section{SYNTHESIS OVERVIEW}\label{sec:ControlStrategy}
	%!TEX root = ../main.tex
This section overviews our automated synthesis procedure which consists of five steps which are schematically depicted in \cref{fig:overview}. First, in \cref{sec:algo:high} (\cref{fig:overview}, green) we solve a high-level logical game induced by the specification. Then, in \cref{sec:algo:high-to-low} (\cref{fig:overview}, pink) we build a \emph{top-down} interface which allows us to translate strategic choices from the logical level into certified low-level feedback control policies. Afterwards, in \cref{sec:algo:low-to-high} (\cref{fig:overview}, cyan), we build a \emph{bottom-up} interface to include relevant information about the low-level closed-loop dynamics into the logical synthesis game via \emph{augmentations}. We then solve the resulting \emph{augmented} high-level synthesis game in \cref{subsec:solveaugmentedgame} (\cref{fig:overview}, violet). Finally, in \cref{subsec:hybridcontroller} (\cref{fig:overview}, orange), the obtained winning strategy is used to construct a hybrid controller which is proven to solve \cref{prob:MainProb}.

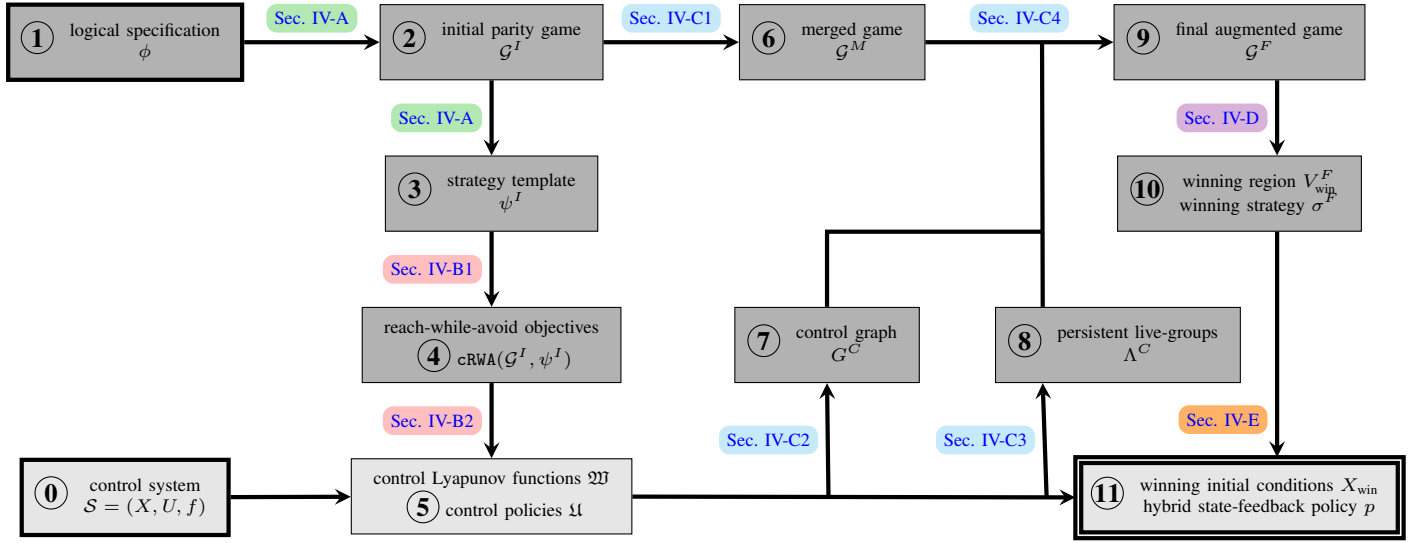
\begin{figure*}
    \scriptsize
    \centering
    \begin{tikzpicture}[node distance=1.5cm]
        % Nodes
        \node[box,ultra thick] (v1) {\circled{1}\begin{tabular}{c}logical specification\\$\spec$\end{tabular}};
        \node[box] (v2) [right=1.8cm of v1] {\circled{2}\begin{tabular}{c}initial parity game\\$\game^I$\end{tabular}};
        \node[box] (v3) [below=1cm of v2] {\circled{3}\begin{tabular}{c}strategy template\\$\assump^I$\end{tabular}};
        \node[box] (v4) [below=1cm of v3] {\begin{tabular}{c}reach-while-avoid objectives\\\circled{4} $\allRWA$\end{tabular}};
        \node[boxl] (v5) [below=1cm of v4] {\begin{tabular}{c}control Lyapunov functions $\allCLF$\\\circled{5} control policies $\allU$\end{tabular}};
        \node[boxl,ultra thick] (v0) [below=5 cm of v1] {\circled{0}\begin{tabular}{c}control system\\$\cS=(X,U,f)$\end{tabular}};
        \node[box] (v6) [right=1.8cm of v2] {\circled{6}\begin{tabular}{c}merged game\\ $\game^M$\end{tabular}};
        \node[box] (v7) [right=of v4] {\circled{7}\begin{tabular}{c}control graph\\ $\gamegraph^C$\end{tabular}};
        \node[box] (v8) [right=1cm of v7] {\circled{8}\begin{tabular}{c}persistent live-groups\\ $\perslivegroups^C$\end{tabular}};
        \node[box] (v9) [right=2.5cm of v6] {\circled{9}\begin{tabular}{c}final augmented game\\ $\game^F$\end{tabular}};
        \node[box] (v10) [below=1cm of v9] {\circled{10}\begin{tabular}{c}winning region $\win^F$\\ winning strategy $\strat^F$\end{tabular}};
        \node[boxl,ultra thick,accepting] (v11) [below=3cm of v10] {\circled{11}\begin{tabular}{c}winning initial conditions  $X_{\text{win}}$\\  hybrid state-feedback policy $p$\end{tabular}};

        % paths
        \draw[->, ultra thick] (v1) -- (v2) node[fill=green!70!black!30,inner sep=3pt, midway, above=0.1cm,rounded corners] {\hyperref[sec:algo:high]{Sec.~\ref*{sec:algo:high}}};
        \draw[->, ultra thick] (v2) -- (v3) node[fill=green!70!black!30,inner sep=3pt, midway, left=0.1cm,rounded corners] {\hyperref[sec:algo:high]{Sec.~\ref*{sec:algo:high}}};

        \draw[->, ultra thick] (v3) -- (v4) node[fill=pink,inner sep=3pt,midway, left=0.1cm,rounded corners] {\hyperref[step3]{Sec.~\ref*{step3}}};
        \draw[->, ultra thick] (v4) -- (v5) node[fill=pink,inner sep=3pt,midway, left=0.1cm,rounded corners] {\hyperref[step4]{Sec.~\ref*{step4}}};
        \draw[->, ultra thick] (v0) -- (v5);

        \draw[->, ultra thick] (v2) -- (v6) node[fill=cyan!20,inner sep=3pt,midway, above=0.1cm,rounded corners] {\hyperref[step4a]{Sec.~\ref*{step4a}}};
        \draw[->, ultra thick] (9.35cm,-6.05cm) -- (v7.south) node[fill=cyan!20,inner sep=3pt,midway, left=0.1cm,rounded corners] {\hyperref[step4b]{Sec.~\ref*{step4b}}};
        \draw[->, ultra thick] (12.25cm,-6.05cm) -- ([xshift=-1cm]v8.south) node[fill=cyan!20,inner sep=3pt,midway, left=0.1cm,rounded corners] {\hyperref[step5]{Sec.~\ref*{step5}}};
        \draw[->, ultra thick] (v5) -- (v11);

        \draw[->, ultra thick] (v6) -- (v9) node[fill=cyan!20,inner sep=3pt,midway, above=0.1cm,rounded corners] {\hyperref[step6]{Sec.~\ref*{step6}}};
        \draw[-, ultra thick] ([xshift=-1cm]v8.north) -- ([xshift=-0.95cm]v9.west);
        \draw[-, ultra thick] (v7.north) |- ([xshift=-1cm,yshift=1cm]v8.north);

        \draw[->, ultra thick] ([xshift=0.5cm]v9.south) -- ([xshift=0.5cm]v10.north) node[fill=violet!30,inner sep=3pt,midway, left=0.1cm,rounded corners] {\hyperref[subsec:solveaugmentedgame]{Sec.~\ref*{subsec:solveaugmentedgame}}};
        \draw[->, ultra thick] ([xshift=0.5cm]v10.south) -- ([xshift=0.5cm]v11.north) node[fill=orange!60,inner sep=3pt,midway, left=0.1cm, yshift=-1cm,rounded corners] {\hyperref[subsec:hybridcontroller]{Sec.~\ref*{subsec:hybridcontroller}}};

    \end{tikzpicture}
 \caption{Flowchart illustrating the overall algorithm given in Section~\ref{sec:ControlStrategy}. Nodes \protect\circled{0},\protect\circled{1} are the inputs and node \protect\circled{11} is the output of our synthesis method. 
  High-level and low-level synthesis steps are colored in dark and light grey, respectively, and discussed in the sections indicated at the arrows.
 }\label{fig:overview}
\end{figure*}

\subsection{High-Level Logical Synthesis}\label{sec:algo:high}
%!TEX root = ../main.tex    
This initial step only considers the (high-level) logical strategy synthesis problem induced by the LTL specification~$\spec$ (realizing the green marked transitions in \cref{fig:overview}). As formalized in \cref{prob:MainProb}, the specification $\spec$ only contains state and observation propositions, i.e., $\AP=\AP_S\cup\AP_O$. The definition of control propositions $\AP_C$ is part of our synthesis framework and will be discussed in \cref{sec:algo:high-to-low}. 

In order to use \cref{lemma:LTL2parity} to construct the \emph{initial parity game $\game^I$} from $\spec$, we need to divide $\AP$ into controller (player $0$) and environment (player $1$) propositions. To do this, we optimistically assume that the controller can instantly activate/deactivate all state propositions in $\AP_S$, thus defining \ $\AP_0 := \AP_S$. This ignores the dynamics of $\cS$ and how the state propositions are geometrically represented in the state-space. 
This is done on purpose to enable a \emph{lazy} synthesis framework -- our framework only adds aspects of both the dynamics and the geometric constraints which show to be \emph{relevant} to the synthesis problem in a later step, discussed in \cref{sec:algo:low-to-high}. 

As observation propositions are not under the control of the system or the controller, they are naturally interpreted as environment propositions, i.e., $\AP_1 := \AP_O$. 
Intuitively, the initial game $\game^I$ constructed from $\spec$ via \cref{lemma:LTL2parity} reveals all logical dependencies of propositions relevant to the synthesis problem at hand. After constructing $\game^I$ from $\spec$ (i.e., going from \circled{1} to \circled{2} in \cref{fig:overview}), we can directly apply the algorithm from \cite{strategyTemplatereport} to synthesize a winning strategy template $\assump^I$ on $\game^I$ (i.e., going from \circled{2} to \circled{3} in \cref{fig:overview}) as discussed in \cref{sec:Prelim:StrategyTemplates}.

This gives the following result which is a direct consequence of \cref{lemma:LTL2parity} and the definition of strategy templates.

\begin{proposition}\label{prop:result:high}
Given the LTL specification $\spec$ over $\AP=\AP_S\cup\AP_O$ translated into an initial parity game $\game^I$ that is total w.r.t. $\AP$ via \cref{lemma:LTL2parity} and a winning strategy template $\assump^I$  for $\game^I$ the following holds: 
for every $\p{0}$ strategy $\strat$ that satisfies the strategy template $\assump^I$, it holds that the trace generated by a $\strat$-play in the initial game $\game^I$ satisfies the specification~$\spec$.
\end{proposition}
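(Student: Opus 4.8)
The plan is to unpack the two ingredients being cited and observe that the statement is essentially a definitional chaining, with no real content beyond what Lemmas~\ref{lemma:LTL2parity} and~\ref{lemma:strategytemplate} and the definition of a winning strategy template already provide. First I would fix a $\p{0}$ strategy $\strat$ that satisfies the strategy template $\assump^I$, and an arbitrary $\strat$-play $\play = v_0 v_1 \hdots$ in $\game^I$. I need to show that the trace $\trace$ generated by $\play$ (via the labeling function $\labelfuncG$, as in the paragraph on LTL to parity games) satisfies $\spec$.

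The key steps, in order, are: (1) Recall that $\assump^I$ is a \emph{winning} strategy template for $\game^I$ (this is the output of the algorithm of~\cite{strategyTemplatereport} invoked in Proposition~\ref{prop:result:high}'s hypotheses and Lemma~\ref{lemma:strategytemplate}); by definition of ``winning strategy template'' stated in \cref{sec:Prelim:StrategyTemplates}, every $\p{0}$ strategy satisfying $\assump^I$ is winning in the original game $\game^I$. Hence $\strat$ is a winning strategy in $\game^I$. (2) By definition of a winning $\p{0}$ strategy, every $\strat$-play from a vertex in the winning region is winning; so $\play$ is a winning play, i.e., $\play$ ends in a $\p{1}$ dead-end or $\play \vDash \phi_{\textsc{Win}}$ where $\phi_{\textsc{Win}}$ is the parity condition $\paritygame(\priority)$ of $\game^I$. (3) Apply the second conclusion of \cref{lemma:LTL2parity}: since $\game^I$ is the parity game obtained from $\spec$ via that lemma, a play is winning in $\game^I$ iff its generated trace satisfies $\spec$. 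Combining (2) and (3), the trace generated by $\play$ satisfies $\spec$, which is exactly the claim.

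I do not expect a genuine obstacle here — the proposition is flagged in the text as ``a direct consequence of \cref{lemma:LTL2parity} and the definition of strategy templates,'' so the ``hard part'' is only bookkeeping: making sure the winning region is handled correctly (one should note that a strategy satisfying a winning template is winning \emph{from the whole winning region}, so the statement is implicitly about $\strat$-plays starting in $\win^I$; alternatively, if $\game^I$ is constructed so that every vertex is winning, the qualifier is vacuous), and making sure the two notions of ``generated trace'' — the one in the definition of traces of a play and the one in \cref{lemma:LTL2parity} — are literally the same object, which they are by construction. I would write the proof as a short three-sentence argument citing \cref{lemma:strategytemplate}/\cite{strategyTemplatereport} for step~(1), the definition of winning strategy for step~(2), and \cref{lemma:LTL2parity} for step~(3).
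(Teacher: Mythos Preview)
Your proposal is correct and matches the paper's own treatment: the paper does not give a separate proof but states the proposition as ``a direct consequence of \cref{lemma:LTL2parity} and the definition of strategy templates,'' which is precisely the three-step chaining you describe. Your additional remark about restricting to $\strat$-plays starting in the winning region is a useful clarification that the paper leaves implicit.
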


\begin{example}\label{ex:strategyTemplate}
For the example from \cref{section:motivatingExample}, the parity game $\game^I$ is constructed from the LTL specification $\spec = \spec_A\Rightarrow\spec_G$ in \eqref{equ:intro:spec} using \cref{lemma:LTL2parity} 
with $\AP_0 = \{\Tc_1,\Tc_2,\Tc_3,\Wc\}$ and $\AP_1 = \{\Mc_1,\Mc_2,\Mc_3,\Dc\}$. 
A part of the resulting parity game $\game^I$ is depicted in Fig.~\ref{fig:gamegraph}.

\begin{figure}[b]
\centering
\begin{tikzpicture}
	\node[bplayer0,label={[align=center]below:$\{\Mc_2\}$}] (a) at (0, 0) {$a$\bcircled{2}};
	\node[bplayer1,label={[align=center]below:$\{\Tc_2\}$}] (b) at (\bhpos,0) {$b$\bcircled{1}};
	\node[bplayer0,label={[align=center]left:$\{\Mc_1\}$}] (c) at (1.8*\bhpos, \bypos) {$c$\bcircled{0}};
	\node[bplayer0,label={[align=center]below:$\{\Mc_1,\Dc\}$}] (d) at (1.8*\bhpos, -\bypos) {$d$\bcircled{0}};
	\node[bplayer1,label={[align=center]below:$\{\Tc_1\}$}] (e) at (3.2*\bhpos, 0) {$e$\bcircled{0}};
	\node[bplayer1,label={[align=center]right:$\{\Wc\}$}] (f) at (1.8*\bhpos, 0) {$f$\bcircled{1}};

	\path[->] (a) edge (b);
	\path[->] (b) edge[bend left =-10] (d);
	\path[->] (c) edge[dashed,blue!80] (b) edge[bend left =10] (e) edge[dotted,red, line width = 0.05cm] (f);
	\path[->] (d) edge (e) edge[dotted,red, line width = 0.05cm] (f) edge[dashed,blue!80,bend left =-10] (b);
	\path[->] (e) edge[bend left =10] (c) edge[bend left =-60] (a);
\end{tikzpicture}
\caption{Illustration of a part of the \emph{initial parity game} for the motivating example with $\p{1}$ (squares) vertices and $\p{0}$ (circles) vertices containing their priority in a black circle.
A winning strategy template consists of unsafe edges indicated by red dotted arrows and co-live edges indicated by blue dashed arrows.}\label{fig:gamegraph}
\end{figure}
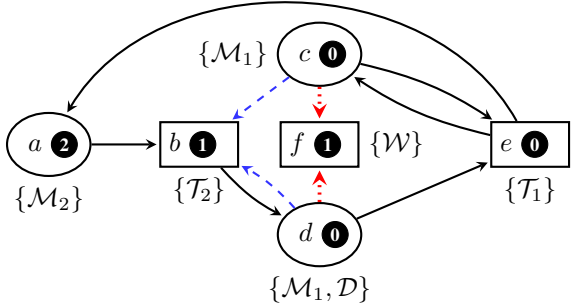

A winning strategy template for the part of the parity game $\game^I$ depicted in \cref{fig:gamegraph} is 
\[\assump^I = \assumpsafe(e_{cf},e_{df}) \wedge \assumpdep(e_{cb},e_{db}),\]
where $e_{vv'}$ denotes the edge from $v$ to $v'$.

The strategy template $\assump^I$ forces the plays to never use the unsafe edges $\{e_{cf},e_{df}\}$ (indicated schematically by dotted red arrows) as they lead to vertex~$f$ where proposition $\Wc$ is true signaling that the robot hits the wall.
Furthermore, $\assump^I$ forces the plays to eventually stop using the co-live edges $\{e_{cb},e_{db}\}$ (indicated schematically by dashed blue arrows). This is because if $\p{0}$ (i.e., the controller) keeps using these edges, then $\p{1}$ (i.e., the environment) can force a play to loop in one of the cycles $(cbde)^\omega$ or $(db)^\omega$ which does not lead to a winning play as the maximum priority seen infinitely often is odd (i.e., $1$) in these cycles.\qed
\end{example}

\subsection{The Top-Down Interface}\label{sec:algo:high-to-low}
%!TEX root = ../main.tex
While \cref{sec:algo:high} utilizes existing techniques from reactive synthesis, this section contains the first technical contribution of the paper which is the translation of strategy templates into certified low-level feedback control policies (realizing the pink marked transitions in \cref{fig:overview}).

\subsubsection{Reach-While-Avoid-Objectives}\label{step3}
The strategy template $\assump^I$ computed in the last step defines, for all $\p{0}$ vertices $v$, eventually required transitions (contained in $\livegroup$) and (eventually) prohibited transitions (contained in $\safegroup$ or $\colivegroup$) for strategies that result in a correct closed-loop behavior.
While the game solving engine assumes that these transitions can be instantaneously enabled (resp. disabled), they actually have to be enforced (resp. prevented) by a suitable actuation of the underlying dynamical system (e.g., the robot). The main observation that we exploit in this step is that the edge constraints for a $\p{0}$ vertex $v$ induced by a strategy template $\assump^I$ naturally translate into \emph{context-dependent reach-while-avoid objectives} for the lower-layer synthesis problem. %, as defined next.

\begin{definition}\label{def:rwa}
    A \emph{context-dependent reach-while-avoid objective} (cRWA) is defined as a triple $\RWA := (\contextRWA,\reachRWA,\avoidRWA)$  where $\contextRWA\subseteq \AP_O$ is the \emph{context}, $\reachRWA\in 2^{\AP_S}$ is the \emph{target set} (to be reached) and $\avoidRWA\in 2^{\AP_S}$ is the \emph{obstacle set} (to be avoided). 
    A control proposition $\Cc\in\AP_C$ is said to \emph{implement} the  \emph{reach-while-avoid objective} $\RWA$ if %and only if
	\begin{equation}
		\spec_\Cc:=\globally(\square (\Cc ~\wedge~\contextRWA) \Rightarrow \Diamond\square \reachRWA ~ \wedge ~  \square \neg\avoidRWA).\label{eq:feas_control}
	\end{equation}
\end{definition}

In practice, the translation of winning strategy templates into reach-while-avoid objectives(i.e., going from \circled{3} to \circled{4} in \cref{fig:overview}) is done per vertex $v\in V_0$ (whose label defines the context) and reflects required and prohibited successors as targets and obstacles in the cRWA, respectively.
In particular, as the final hybrid controller will make strategic decisions corresponding to exactly one transition, we compute cRWA's per required/allowed transition, while collecting all prohibited successors in the obstacles $\avoidRWA$ of these cRWA's, as formalized next.

\begin{definition}\label{def:generatedRWA}
 Let $\game$ be a parity game with game graph $\gamegraph=(V,E,\labelfuncG)$ and winning strategy template $\assump=(\safegroup,\colivegroup,\livegroup)$. For every $\vertex\in V_0$ let $\mathrm{Suc}_\reachRWA(\vertex)=\{\altvertex\in V_1~|~(\vertex,\altvertex)\notin\safegroup\cup\colivegroup\}$. 
 Then, for each $\altvertex\in\mathrm{Suc}_\reachRWA(\vertex)$ we define
 $\RWAa(\vertex,\altvertex) := (\labelfuncG(\vertex),\labelfuncG(\altvertex),\avoidRWAa(\vertex))$ and
 $\RWAe(\vertex,\altvertex) := (\labelfuncG(\vertex),\labelfuncG(\altvertex),\avoidRWAe(\vertex))$
 s.t.\ 
 \begin{itemize}
	\item $\avoidRWAa(\vertex)=\bigcup_{\{\vertex''\in V_1~|~(\vertex,\vertex'')\in\safegroup\}}\labelfuncG(\vertex'')$, and
    \item $\avoidRWAe(\vertex)=\bigcup_{\{\vertex''\in V_1~|~(\vertex,\vertex'')\in\safegroup\cup\colivegroup\}}\labelfuncG(\vertex'')$.
 \end{itemize}
We collect all such cRWA's for the strategy template $\assump$ in the set $\cRWA(\game,\assump)$.
\end{definition}
Intuitively, for such cRWA's, $\avoidRWAa$ consists of the propositions that need to be avoided ``always'', whereas $\avoidRWAe$ consists of the propositions that need to be avoided ``eventually always''.
This definition is illustrated by the follwing example.

\begin{example}\label{ex:RWAs}
Consider the winning strategy template $\assump^I$ computed in \cref{ex:strategyTemplate} for the parity game given in \cref{fig:gamegraph}. 
From vertex~$d$, strategy template $\assump^I$ forces $\p{0}$ to never use edge $e_{df}$ and eventually stop using edge $e_{db}$. That means, $\p{0}$ has to eventually only use edge $e_{de}$ from vertex~$d$. 
The labels of the vertices imply that whenever mode $\Mc_1$ is active and the door is closed, the system ``always'' has to reach $\cT_1$ while avoiding walls $\Wc$ and ``eventually always'' has to reach $\cT_1$ while avoiding both walls $\Wc$ and target $\Tc_2$.
This leads to the cRWA's $\RWAa(d,e) =(\labelfuncG(d),\labelfuncG(e),\avoidRWAa(d))$ and $\RWAe(d,e) =(\labelfuncG(d),\labelfuncG(e),\avoidRWAe(d))$, where $\labelfuncG(d) = \{\Mc_1,\Dc\}$, $\labelfuncG(e)=\{\Tc_1\}$, $\avoidRWAa(d) = \{\Wc\}$, and $\avoidRWAe(d) = \{\Wc,\Tc_2\}$.\qed
\end{example}

\subsubsection{Feedback-Control Policies} \label{step4}
Within this step, we utilize existing techniques to synthesize a feedback-control policy $u: X\to U$ associated to cRWA problem $\RWA= (\contextRWA,\reachRWA,\avoidRWA)$ (i.e., going from \circled{4} to \circled{5} in \cref{fig:overview}), s.t.\ all traces generated by solutions of $\cS$ for $u$ satisfy \eqref{eq:feas_control}, given that $\Cc$ and $\kappa$ are true for all  $t\in\R_+$, where $\Cc\in\AP_C$ is a controller proposition that flags that the feedback control policy $u$ is currently applied to $\cS$. 
This part of our controller design strategy comes with unavoidable conservatism. Indeed, it is well-known that very particular cases of the control problems that we tackle here face strong controllability barriers, such as undecidability and NP-hardness (see~\cite{BloTsi2000}). For this reason, we rely here on control techniques that are intrinsically conservative, but provide, when they converge, a satisfactory solution.

As an example of such approaches, which fits particularly well with our setting, we utilize existing techniques based on control Lyapunov functions (CLF), as introduced in \cref{subsec:DynamicalSys}, to define $u$ from an $\RWA = (\contextRWA,\reachRWA,\avoidRWA)$. This is achieved by constructing a CLF
$w:X\to \R$ (recall Definition~\ref{def:clf}) w.r.t.\ to the target $\reachRWA$ and enforcing that the basin of attraction $X_w\subseteq X$ excludes $\avoidRWA$, i.e.\ $\avoidRWA\cap X_w=\emptyset$.

We thus have the following definition.
\begin{definition}\label{def:RWAtoPolicy}
Given the control system $\cS=(X,U,f)$, consider a cRWA $\RWA= (\contextRWA,\reachRWA,\avoidRWA)$. We say that a CLF $w$ (as in \cref{def:clf}) with basin of attraction $X_w$ and the corresponding feedback map $u_w:X_w\to U$ satisfying conditions in \cref{lemma:CLFBasedFeedback} are \emph{associated to $\RWA$} if $X_w\cap \avoidRWA=\emptyset$ and $X_w(c)\subseteq \reachRWA$.
\end{definition}

\cref{subsec:ContinuousLayer} will discuss a particular technique to synthesize $X_w$ and $u_w$ realizing a cRWA for particular classes of dynamical systems and state propositions. For any such realization of a cRWA  we have the following guarantees on the resulting closed-loop system under a constant context, i.e., w.r.t.\ a trivial distrubance function $\Upsilon:=\kappa^\omega$, which are a direct consequence of \cref{lemma:CLFBasedFeedback} and \cref{def:RWAtoPolicy}.

\begin{proposition}\label{prop:result:high-to-low}
Given the control system  $\cS=(X,U,f)$ with labelling function $\labelfunc$, let $\RWA= (\contextRWA,\reachRWA,\avoidRWA)$ be a cRWA
and let $u_w:X_w\to U$ be a feedback-control policy induced by a CLF $w$ associated to $\RWA$ with basin of attraction $X_w$. Then, for all $x \in X_w$ and for all solutions $\sol_{x,u_w}$ of $\cS$, it holds that 
\begin{enumerate}[leftmargin=*]
 \item[(i)] $\sol_{x,u_w}(t)\in X_w$ for all $t\in\R_+$,\label{prop:result:high-to-low:item1}
	\item[(ii)] every trace $\trace \in \traces_{\labelfunc,\ldist}(\sol_{x,u_w})$ satisfies $\phi_{\Cc_w}$ in \eqref{eq:feas_control}, with $\Cc_w\in \AP_C$ being the control proposition associated to $w$ and $\Upsilon:=\kappa^\omega$ inducing a constant context.
\end{enumerate}
\end{proposition}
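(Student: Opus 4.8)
The statement is explicitly flagged as ``a direct consequence of \cref{lemma:CLFBasedFeedback} and \cref{def:RWAtoPolicy},'' so the plan is to unpack the two definitions and invoke the lemma, keeping careful track of which propositions are active along a trajectory. First I would fix $x\in X_w$ and an arbitrary solution $\sol_{x,u_w}$ of $\cS$ under the feedback $u_w$. Since $w$ is a CLF associated to $\RWA$ (\cref{def:RWAtoPolicy}), it satisfies \cref{def:clf} with constants $0<c<C$ and $\rho>0$, and $u_w:X_w\to U$ satisfies \eqref{eq:u_CLF}. I would then apply \cref{lemma:CLFBasedFeedback} directly: for every $x\in X_w$ the solution stays in $X_w$ for all $t\in\R_+$, which is exactly item~(i), and moreover there is $T_x\geq 0$ with $\sol_{x,u_w}(t)\in X_w(c)$ for all $t\geq T_x$.

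For item~(ii), I would translate these two facts into the satisfaction of $\phi_{\Cc_w}$. The key step is matching the geometric containments from \cref{def:RWAtoPolicy} with the labelling: since $X_w\cap\avoidRWA=\emptyset$ and the trajectory remains in $X_w$ for all time, no obstacle proposition in $\avoidRWA$ is ever true along $\sol_{x,u_w}$, so $\labelfunc(\sol_{x,u_w}(t))$ never contains any $\Xc\in\avoidRWA$; this gives the $\globally\neg\avoidRWA$ conjunct. Since $X_w(c)\subseteq\reachRWA$ and $\sol_{x,u_w}(t)\in X_w(c)$ for all $t\geq T_x$, every target proposition in $\reachRWA$ (interpreted as the corresponding state subset) holds from time $T_x$ onward, which gives $\finally\globally\reachRWA$. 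Because the disturbance is the constant function $\Upsilon=\kappa^\omega$, the observation part of every label is exactly $\kappa$ at all times, and the control proposition $\Cc_w$ is true for all $t\in\R_+$ by hypothesis, so the antecedent $\globally(\square(\Cc_w\wedge\kappa))$ is satisfied and I have established the consequent; hence any generated trace $\trace\in\traces_{\labelfunc,\ldist}(\sol_{x,u_w})$ satisfies \eqref{eq:feas_control}. One small bookkeeping point is to check that the sequence of sampling times $\tau_0,\tau_1,\dots$ in the definition of a generated trace can be chosen consistently with the (finitely many, in any bounded interval) changes of $\labelfunc(\sol_{x,u_w}(t))$ so that the trace is well defined and reflects the above containments; this is routine given that $\sol_{x,u_w}$ is continuous and the state propositions correspond to fixed subsets of $X$.

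I do not anticipate a genuine obstacle here — the argument is essentially a dictionary translation between the dynamical-systems statement of \cref{lemma:CLFBasedFeedback} and the LTL semantics of \eqref{eq:feas_control}. The only place that needs a little care is making the phrase ``every trace generated by $\sol_{x,u_w}$'' precise: a single trajectory can generate several traces depending on the choice of $\tau_i$'s, but all of them agree on the infinitary properties ($\globally\neg\avoidRWA$ and $\finally\globally\reachRWA$) that matter, so the conclusion holds uniformly. I would therefore present the proof as: (1) invoke \cref{lemma:CLFBasedFeedback} to get invariance of $X_w$ and eventual entry into $X_w(c)$; (2) use $X_w\cap\avoidRWA=\emptyset$ for the safety conjunct; (3) use $X_w(c)\subseteq\reachRWA$ for the reachability conjunct; (4) note the context is constantly $\kappa$ and $\Cc_w$ is assumed true throughout, so the implication in \eqref{eq:feas_control} is vacuously-or-actively satisfied along the trace.
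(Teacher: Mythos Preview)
Your proposal is correct and matches the paper's approach: the paper does not give a written proof of this proposition at all, merely stating that it is ``a direct consequence of \cref{lemma:CLFBasedFeedback} and \cref{def:RWAtoPolicy},'' which is exactly the derivation you have spelled out. Your unpacking of invariance of $X_w$ for item~(i) and of the two containments $X_w\cap\avoidRWA=\emptyset$ and $X_w(c)\subseteq\reachRWA$ for the two conjuncts in item~(ii) is precisely what the paper leaves implicit.
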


\begin{example}\label{example:CLFs}
Consider the robot example given in \cref{fig:example}, the cRWAs $\RWAa(d,e)$ and $\RWAe(d,e)$ as given in \cref{ex:RWAs}. A possible set of corresponding CLFs $w_a$ and $w_e$ with basins of attraction $X_a$ and $X_e$, respectively are depicted in \cref{fig:basins}.\qed
\end{example}

\begin{figure}
	\centering
	%% Creator: Inkscape 1.2.2 (b0a84865, 2022-12-01), www.inkscape.org
%% PDF/EPS/PS + LaTeX output extension by Johan Engelen, 2010
%% Accompanies image file 'basin.eps' (pdf, eps, ps)
%%
%% To include the image in your LaTeX document, write
%%   \input{<filename>.pdf_tex}
%%  instead of
%%   \includegraphics{<filename>.pdf}
%% To scale the image, write
%%   \def\svgwidth{<desired width>}
%%   \input{<filename>.pdf_tex}
%%  instead of
%%   \includegraphics[width=<desired width>]{<filename>.pdf}
%%
%% Images with a different path to the parent latex file can
%% be accessed with the `import' package (which may need to be
%% installed) using
%%   \usepackage{import}
%% in the preamble, and then including the image with
%%   \import{<path to file>}{<filename>.pdf_tex}
%% Alternatively, one can specify
%%   \graphicspath{{<path to file>/}}
%% 
%% For more information, please see info/svg-inkscape on CTAN:
%%   http://tug.ctan.org/tex-archive/info/svg-inkscape
%%
\begingroup%
  \makeatletter%
  \providecommand\color[2][]{%
    \errmessage{(Inkscape) Color is used for the text in Inkscape, but the package 'color.sty' is not loaded}%
    \renewcommand\color[2][]{}%
  }%
  \providecommand\transparent[1]{%
    \errmessage{(Inkscape) Transparency is used (non-zero) for the text in Inkscape, but the package 'transparent.sty' is not loaded}%
    \renewcommand\transparent[1]{}%
  }%
  \providecommand\rotatebox[2]{#2}%
  \newcommand*\fsize{\dimexpr\f@size pt\relax}%
  \newcommand*\lineheight[1]{\fontsize{\fsize}{#1\fsize}\selectfont}%
  \ifx\svgwidth\undefined%
    \setlength{\unitlength}{71.55414704bp}%
    \ifx\svgscale\undefined%
      \relax%
    \else%
      \setlength{\unitlength}{\unitlength * \real{\svgscale}}%
    \fi%
  \else%
    \setlength{\unitlength}{\svgwidth}%
  \fi%
  \global\let\svgwidth\undefined%
  \global\let\svgscale\undefined%
  \makeatother%
  \begin{picture}(1,1.23686948)%
    \lineheight{1}%
    \setlength\tabcolsep{0pt}%
    \put(0,0){\includegraphics[width=\unitlength]{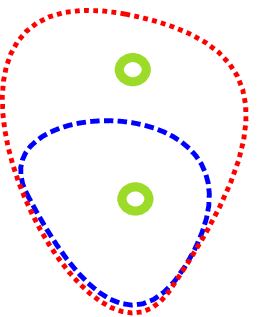}}%
    \put(0.60161939,0.29267923){\color[rgb]{0,0,0}\makebox(0,0)[lt]{\lineheight{1.25}\smash{\begin{tabular}[t]{l}$\Tc_1$\end{tabular}}}}%
    \put(0.57025247,0.81803353){\color[rgb]{0,0,0}\makebox(0,0)[lt]{\lineheight{1.25}\smash{\begin{tabular}[t]{l}$\Tc_2$\end{tabular}}}}%
    \put(0.12852899,0.53353364){\color[rgb]{0,0,1}\makebox(0,0)[lt]{\lineheight{1.25}\smash{\begin{tabular}[t]{l}$X_e$\end{tabular}}}}%
    \put(0.07612123,0.99023107){\color[rgb]{1,0,0}\makebox(0,0)[lt]{\lineheight{1.25}\smash{\begin{tabular}[t]{l}$X_a$\end{tabular}}}}%
  \end{picture}%
\endgroup%

	\caption{
	$X_a$ (region enclosed by red dotted line) and $X_e$ (region enclosed by blue dashed line) illustrate possible basins of attraction for the CLFs implementing the cRWAs $\RWAa(d,e)$ (ensuring to reach $\Tc_1$ while avoiding only the walls) and $\RWAe(d,e)$ (ensuring to reach $\Tc_1$ while avoiding walls and $\Tc_2$), respectively from \cref{ex:RWAs}.
	}\label{fig:basins}
\end{figure}

\subsection{The Bottom-Up Interface}\label{sec:algo:low-to-high}
%!TEX root = ../main.tex
The synthesis procedure from \cref{sec:algo:high-to-low} results in a finite set $\allCLF$ of CLFs with a finite set $\allU$ of control policies, such that each control policy $u_w\in \allU$ (resulting from a CLF $w\in\allCLF$) is equipped with a basin of attraction $X_w\subseteq X$, associated to a given $\RWA\in \allRWA$ resulting from a particular edge in the high-level synthesis game $\game^I$. This implies that whenever $w$ is non-global, i.e., if $X_w\subsetneq X$, the control policy $u_w$ cannot be applied anywhere.

Thinking back to the logical strategy computed in \cref{sec:algo:high}, policy $u_w$ must be used when its corresponding cRWA $\RWA$ for an edge $e$ is \enquote{activated} by a logical control strategy, \enquote{choosing} the edge $e$ in $\game^I$. By constructing the cRWA's for winning edges as defined in \cref{def:generatedRWA}, we essentially equip the resulting controller with a direct actuation capability of the underlying dynamical system -- it must choose between available feedback-control policies. To reflect this change of actuation capabilities in the higher-level game, we introduce a controller proposition $\Cc_w\in\AP_C$ for every available feedback-control policy $u_w$ which flags that $u_w\in \allU$ should be used to actuate $\cS$. Further, as every $u_w$ is equipped with a basin of attraction $X_w$, the resulting hybrid controller is implementable only  if the current continuous state $x$ is in $X_w$
We therefore need to track this information in the synthesis game. 
For this purpose, we introduce a new state proposition $\Xc_w$ for every $u_w\in \allU$ that flags whether the state is in its basin of attraction, and we define $\AP_S^+:=\AP_S\cup \bigcup_{w\in \allCLF}\Xc_w$ as the set of all state propositions including all additional state propositions $\Xc_w$'s.

The next four steps provide an algorithm that ensures that this information gets translated from the lower to the higher layer in a certified way (realizing the cyan marked transitions in \cref{fig:overview}), such that the resulting higher-layer synthesis game allows to synthesize a hybrid controller that solves \cref{prob:MainProb}.

\subsubsection{Changing Actuation Capabilities}\label{step4a}
As discussed before, in the initial game, the controller can activate/deactivate all state propositions in $\AP_S$.
However, in order to prepare the high-layer initial game $\game^I$ from \cref{sec:algo:high} for the incorporation of a refined system model, we need to incorporate the control propositions $\AP_C$ and make sure that these are the only propositions the controller can choose with its strategy, leading to the desired direct actuation of lower-level feedback control policies.
In particular, first, we need to ensure that all state propositions and observation propositions can only be activated/deactivated by the environment player.

This is achieved by updating the initial game to a merged game $\game^M$ (i.e., going from \circled{2} to \circled{6} in \cref{fig:overview}) while preserving the parity condition and a one-to-one correspondence between the traces generated by plays in $\game^I$ and the ones generated by plays in $\game^M$.

\begin{definition}\label{def:mergedgame}
Given an initial game $\game^I = (\gamegraph^I,\paritygame(\priority^I))$ with game graph $\gamegraph^I=(\vertices^I,\edges^I,\labelfuncG^I)$, the \emph{merged game} $\game^M = (\gamegraph^M,\paritygame(\priority^M))$ with game graph $\gamegraph^M=(\vertices^M,\edges^M,\labelfuncG^M)$ is constructed as follows.
\begin{itemize}[leftmargin=*]
    \item The set of $\p{1}$ vertices is preserved, i.e., $\verticesl^M = \verticesl^I$ s.t.\ for each $v\in\verticesl^M$, $\priority^M(v) = \priority^I(v)$ and $\labelfuncG^M(v) = \emptyset$.
  \item For every pair of $\p{1}$ vertices $v_1,v_2\in\verticesl^I$ connected via a $\p{0}$ vertex $v_0\in\verticeso^I$, i.e.,$(v_1,v_0),(v_0,v_2)\in \edges^I$, we add:
    \begin{itemize}
        \item a unique $\p{0}$ vertex $v\in\verticeso^M\setminus\verticeso^I$ with $\labelfuncG^M(v) = \labelfuncG^I(v_0)\cup\labelfuncG^I(v_2)$ and $\priority^M(v) = \priority^I(v_0)$, 
        \item new edges $(v_1,v),(v,v_2)\in\edges^M\setminus\edges^I$.
    \end{itemize}
\end{itemize}
\end{definition}

This leads to the following lemma. 

\begin{lemma}\label{lemma:mergedgame}
Let $\game^I$ be the parity game constructed from $\spec$ over $\AP$ as in \cref{prop:result:high} and $\game^M$ its merged version constructed via \cref{def:mergedgame}. Then $\game^M$ is total w.r.t. $\AP$, and every winning play in $\game^M$ generates a trace which satisfies~$\spec$.
\end{lemma}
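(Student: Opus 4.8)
The plan is to establish a tight correspondence between plays of $\game^I$ and plays of $\game^M$ that preserves both the generated traces and the parity winning condition, and then invoke \cref{lemma:LTL2parity} (via \cref{prop:result:high}, which already tells us $\game^I$ is total w.r.t.\ $\AP$ and that winning plays in $\game^I$ generate traces satisfying $\spec$). The construction in \cref{def:mergedgame} only splits each $\p{0}$ vertex $v_0$ of $\game^I$ into one copy per outgoing edge $(v_0,v_2)$, relabeling the copy with $\labelfuncG^I(v_0)\cup\labelfuncG^I(v_2)$ and keeping priority $\priority^I(v_0)$; the $\p{1}$ layer is untouched except that its labels are moved to $\emptyset$. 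So structurally $\game^M$ is obtained from $\game^I$ by a priority- and (essentially) label-preserving refinement of $\p{0}$ vertices.

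First I would make precise the bijection on plays. Given an (infinite) play $\rho = v_0^{(1)} v_0^{(0)} v_1^{(1)} v_1^{(0)} \cdots$ in $\game^I$ — alternating between $\p{1}$ and $\p{0}$ vertices, which is forced since $\edges^I\cap(\verticesi^I\times\verticesi^I)=\emptyset$ — map it to the play $\rho'$ in $\game^M$ obtained by replacing each $\p{0}$ vertex $v_k^{(0)}$ occurring between $\p{1}$ vertices $v_k^{(1)}$ and $v_{k+1}^{(1)}$ by the unique new vertex $v\in\verticeso^M$ associated to the pair $(v_k^{(0)},v_{k+1}^{(1)})$ in \cref{def:mergedgame}; the $\p{1}$ vertices are carried over unchanged. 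By construction the edges $(v_k^{(1)},v)$ and $(v,v_{k+1}^{(1)})$ lie in $\edges^M$, so $\rho'$ is a valid play, and conversely every play in $\game^M$ arises this way (each new $\p{0}$ vertex has a unique incoming and unique outgoing target-relevant connection to the $\p{1}$ layer by Definition~\ref{def:mergedgame}), giving a bijection. The two key invariants to check along this bijection are: (i) \emph{priorities agree}, i.e.\ $\priority^M$ along $\rho'$ equals $\priority^I$ along $\rho$ vertex-for-vertex, since $\p{1}$ priorities are preserved and each new $\p{0}$ vertex inherits $\priority^I(v_0)$; hence the maximal priority seen infinitely often is the same, so $\rho$ is winning iff $\rho'$ is; and (ii) \emph{generated traces agree}, i.e.\ using the trace-generation convention $l_i = \labelfuncG(\vertex_{2i+1})\cup\labelfuncG(\vertex_{2i+2})$: in $\game^I$ this is $\labelfuncG^I(v_i^{(1)})\cup\labelfuncG^I(v_i^{(0)})$, while in $\game^M$ it is $\labelfuncG^M(v_i^{(1)})\cup\labelfuncG^M(v)=\emptyset\cup(\labelfuncG^I(v_i^{(0)})\cup\labelfuncG^I(v_{i+1}^{(1)}))$ — here one must be careful about the index shift, and verify that over the whole trace the multiset of atomic propositions contributed at each position matches, so that the generated $\omega$-word over $2^\AP$ is identical. (This relabeling to $\p{0}$ vertices is exactly the standard trick of shifting environment labels onto the controller's successor vertex; it changes which vertex carries a label but not the resulting trace.)

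With the bijection and these two invariants in hand, both conclusions follow: \emph{totality w.r.t.\ $\AP$} transfers because $\game^I$ is total w.r.t.\ $\AP$ (\cref{prop:result:high}) and every trace generated by a play of $\game^I$ is, by invariant (ii), also generated by the corresponding play of $\game^M$; and a \emph{winning play in $\game^M$} corresponds under the bijection to a winning play in $\game^I$ (invariant (i)), whose generated trace satisfies $\spec$ by \cref{prop:result:high}, and that is the same trace by invariant (ii). I expect the main obstacle to be the bookkeeping in invariant (ii): getting the index shift in the trace-generation formula exactly right and confirming that moving the $\p{1}$ labels into the merged $\p{0}$ vertices (rather than keeping them on $\p{1}$ vertices) genuinely produces the identical $\omega$-trace rather than a shifted or reassociated one — in particular checking the very first position of the trace and the handling of $\p{1}$ dead-ends. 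Everything else is a routine transfer along a structure-preserving map.
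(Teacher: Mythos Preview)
Your proposal is correct and follows essentially the same approach as the paper: build the play bijection induced by \cref{def:mergedgame}, verify that priorities and generated traces coincide position-by-position, and then transfer both totality and the winning-implies-$\spec$ property from $\game^I$ via \cref{prop:result:high}. Your explicit trace computation in invariant~(ii) has a small index slip---the convention $l_i=\labelfuncG(\vertex_{2i+1})\cup\labelfuncG(\vertex_{2i+2})$ pairs the $\p{0}$ vertex at position $2i{+}1$ with the \emph{next} $\p{1}$ vertex at position $2i{+}2$, yielding $l_i=\labelfuncG^I(v_i^{(0)})\cup\labelfuncG^I(v_{i+1}^{(1)})$ in $\game^I$ and the identical expression in $\game^M$ (since $\labelfuncG^M$ vanishes on $\p{1}$ vertices and the new $\p{0}$ vertex carries exactly $\labelfuncG^I(v_i^{(0)})\cup\labelfuncG^I(v_{i+1}^{(1)})$)---but you rightly flag this as the delicate step, and once the indices are corrected the argument is exactly the paper's.
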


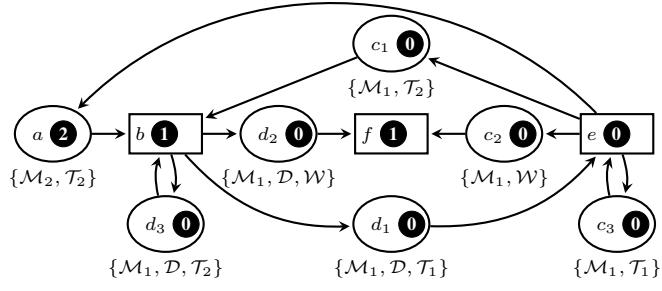
\begin{figure*}
	\scriptsize
	\centering
	\begin{tikzpicture}
		\node[bplayer0,label={[align=center]below:$\{\Mc_2,\Tc_2\}$}] (a) at (0, 0) {$a$\bcircled{2}};
		\node[bplayer1] (b) at (\hpos,0) {$b$\bcircled{1}};
	
		\node[bplayer0,label={[align=center]below:$\{\Mc_1,\Tc_2\}$}] (c1) at (3*\hpos, \bypos) {$c_1$\bcircled{0}};
		\node[bplayer0,label={[align=center]below:$\{\Mc_1,\Wc\}$}] (c2) at (4*\hpos, 0) {$c_2$\bcircled{0}};
		\node[bplayer0,label={[align=center]below:$\{\Mc_1,\Tc_1\}$}] (c3) at (5*\hpos, -\bypos) {$c_3$\bcircled{0}};
		
		\node[bplayer0,label={[align=center]below:$\{\Mc_1,\Dc,\Tc_1\}$}] (d1) at (3*\hpos, -\bypos) {$d_1$\bcircled{0}};
		\node[bplayer0,label={[align=center]below:$\{\Mc_1,\Dc,\Wc\}$}] (d2) at (2*\hpos, 0) {$d_2$\bcircled{0}};
		\node[bplayer0,label={[align=center]below:$\{\Mc_1,\Dc,\Tc_2\}$}] (d3) at (1*\hpos, -\bypos) {$d_3$\bcircled{0}};
	
		\node[bplayer1] (e) at (5*\hpos, 0) {$e$\bcircled{0}};
		\node[bplayer1] (f) at (3*\hpos, 0) {$f$\bcircled{1}};
	
		\path[->] (a) edge (b);
		\path[->] (b) edge[bend left =15] (d3) edge[bend left=-25] (d1) edge (d2);
		\path[->] (c1) edge (b);
		\path[->] (c2) edge (f);
		\path[->] (c3) edge[bend left=15] (e);
		\path[->] (d1) edge[bend left=-25] (e);
		\path[->] (d2) edge (f);
		\path[->] (d3) edge[bend left=15] (b);
		\path[->] (e) edge[bend left =15] (c3) edge (c1) edge (c2) edge[bend left=-45] (a);
	\end{tikzpicture}
	\caption{Corresponding merged game for the initial game given in \cref{fig:gamegraph}, where labels of $\p{1}$ vertices are empty sets.}\label{fig:mergedgame}
	\end{figure*}

\begin{proof}
Let $\play = v_0v_1\cdots$ be a winning play in $\game^M$ with $v_{2k}\in V^M_1$ for every $k\geq 0$, and let $\trace = l_0l_1\cdots$ be the trace generated by the play $\play$. Then by construction, vertices $v_{2k}$ also belong to $V^I_1$ with same priority, i.e., $\priority^M(v_{2k}) = \priority^I(v_{2k})$ for every $k\geq 0$. Furthermore, for every $v_{2k+1}\in V^M_0$, there exists a corresponding vertex $v_{2k+1}'\in V^I_0$ that connects the vertices $v_{2k}$ and $v_{2k+2}$ in the game $\game^I$ such that $\priority^M(v_{2k+1}) = \priority^I(v_{2k+1}')$ and $\labelfuncG^M(v_{2k+1}) = \labelfuncG^I(v_{2k+1})\cup\labelfuncG^I(v_{2k+2})$.
Hence, the play $\play' = v_0v_1'v_2\cdots$ is a winning play in game $\game^I$ as maximum priority seen infinitely often in $\play'$ w.r.t. $\priority^I$ is same as the maximum priority seen infinitely often in $\play$ w.r.t $\priority^M$.
Now, let $\trace' = l_0'l_1'\cdots$ be the trace generated by $\play'$ in $\game^I$, then by construction of game $\game^I$, $\trace'$ satisfies the specification $\spec$.
Moreover, since $\labelfuncG^M(v_{2k+2}) = \emptyset$ for every $k\geq 0$, we have, by definition, $l_k = \labelfuncG^M(v_{2k+1})\cup \labelfuncG^M(v_{2k+2}) = \labelfuncG^M(v_{2k+1})$. Therefore, $l_k'= \labelfuncG^I(v_{2k+1})\cup\labelfuncG^I(v_{2k+2}) = l_k$. So, $\trace = \trace'$, and hence, $\trace$ satisfies the specification $\spec$.

Using similar arguments, it can be shown that for every play in $\game^I$, there exists a corresponding play in $\game^M$ that generates the same trace.
Hence, as $\game^I$ is total w.r.t. $\AP$, so is $\game^M$.
\end{proof}

\begin{example}
Consider the initial game $\game^I$ given in \cref{fig:gamegraph}. Then the resulting merged game $\game^M$ is depicted in \cref{fig:mergedgame}.
As shown in the figure, $\p{1}$ vertices, i.e., vertices $b,e,f$, are preserved with same priorities but empty labels.
For every pair of $\p{1}$ vertices connected via a $\p{0}$ vertex in $\game^I$, there is a new vertex with label containing all necessary propositions that connects the pair in $\game^M$, e.g., for vertex $b$ and $f$ connected via $d$ in $\game^I$, the new vertex $d_2$ containing labels of both $d$ and $f$ connects vertex $b$ and $f$ .\qed
\end{example}

Note that we still have not explicitly incorporated the control propositions in the merged game. In the next steps, we will introduce the control propositions that are realizable by low-level feedback control and incorporate them into the high-level game graph.

\subsubsection{Control Graph Construction}\label{step4b}
In this step we construct a game graph that captures the interplay of the environment and observation propositions contained in the context $\contextRWA$ of a given cRWA (i.e., going from \circled{5} to \circled{7} in \cref{fig:overview}) with the newly introduced control and state propositions $\Cc_w\in \AP_C$ and $\Xc_w\in \AP_S^+$.
Intuitively, this graph captures which context changes an application of a particular feedback control policy $u_w$ for a CLF $w$ (triggered by $\Cc_w$) might cause. When composed with the modifided game graph $G^M$ from \cref{sec:algo:low-to-high} this leads to the \emph{lazy} refinement of the logical synthesis game discussed earlier, which only includes relevant information about the low-level feedback control loop. 

Let us denote the cRWA's for which the CLF $w$ was synthesized by $\RWA_w = (\contextRWA_w,\reachRWA_w,\avoidRWA_w)$. 
Consider $\AP_S^+\supseteq\AP_S$ the set of all state propositions including all additional state propositions $\Xc_w$'s as defined above, and $\labelfunc^+\colon X \rightarrow 2^{\AP_S^+}$ be an extended version of labelling function $\labelfunc$ defined by $\labelfunc^+(x)=\{\Xc\in \AP^+_S\;\vert\;x\in\Xc\}$,  (and thus, $\labelfunc^+(x)\cap\AP_S = \labelfunc(x)$ for all $x\in X$).

\begin{definition}\label{def:controlgraph}
	Given the control system  $\cS:=(X,U,f)$ with labelling function $\labelfunc^+$ and the set $\allCLF$ of all CLFs computed as before,
	the \emph{control game graph} $\gamegraph^C=(V^C,E^C,\labelfuncG^C)$ with $\labelfuncG^C\colon \vertices\rightarrow2^{\AP_S^+\cup\AP_O}$ is defined as follows.
	\begin{enumerate}[leftmargin=*]
		\item\label{item:def:controlgraph:1} For each CLF $w\in\allCLF$, there are two $\p{1}$ vertices in $\verticesl^C$, a \emph{transition} vertex and an \emph{invariant} vertex, both with label $\{\Cc_w\}$.
		\item\label{item:def:controlgraph:2} For every subset of propositions $c \subseteq {\AP_O\cup \AP_S^+}$, there is a Player 0 vertex $v\in \verticeso^C$ with $\labelfuncG^C(v)=c$ iff there exists $x\in X$ such that $c\cap \AP_S^+ = \labelfunc^+(x)$.
		\item\label{item:def:controlgraph:3} From each invariant vertex $v\in \verticesl^C$ of some CLF $w$, there is an edge $(v,v')$ to $v'\in \verticeso^C$ 
  		iff $\reachRWA_w\subseteq \labelfuncG^C(v')$.
		\item\label{item:def:controlgraph:4} From each transition vertex $v\in \verticesl^C$ of some CLF $w$, there is an edge $(v,v')$ to $v'\in \verticeso^C$  iff $\Xc_w\in\labelfuncG^C(v')$.
		\item\label{item:def:controlgraph:5} From each $\p{0}$ vertex $v\in \verticeso^C$ with $\Xc_w \in \labelfuncG^C(v)$ and $\contextRWA_w = \labelfuncG^C(v)\cap\AP_O$ for some CLF $w$, 
		if $\reachRWA_w \subseteq \labelfuncG^C(v)$, then 
		there is an edge to the invariant vertex of $w$, else, then there is an edge to the transition vertex of~$w$.
		\end{enumerate}
\end{definition}

The construction of $\gamegraph^C$ via \cref{def:controlgraph} translates some characteristics of the low-level continuous closed loop system captured by  \cref{prop:result:high-to-low} into the higher-layer synthesis game.
In addition, it ensures that a logical controller actuating a control policy $u_w$ via control proposition $\Cc_w$ can only do so if context $\contextRWA_w$ is true and the continuous system is in the basin of attraction $X_w$ (signaled by the system proposition $\Xc_w$ being true). 
These translations can be formalized via LTL formulas which are ensured to hold true on every play over $\gamegraph^C$ as formalized in the next lemma.

\begin{lemma}\label{lemma:controlgraph}
Given the premises of \cref{def:controlgraph}, it holds for every trace $\trace$ over $\gamegraph^C$ and every CLF $w\in\allCLF$ with basin of attraction $\Xc_w$, cRWA $\RWA_w := (\contextRWA_w,\reachRWA_w,\avoidRWA_w)$ and associated controller $\Cc_w$, that 
\begin{eqnarray}
	&\globally (\Xc_w \Rightarrow \neg\avoidRWA_w),\label{eq:lemma:controlgraph1}\\
	&\globally (\Cc_w \Rightarrow \Xc_w \wedge \contextRWA_w),\label{eq:lemma:controlgraph2}\\
	&\globally (\reachRWA_w \wedge \Cc_w \Rightarrow \Next \reachRWA_w).\label{eq:lemma:controlgraph3}\\
	&\globally (\Xc_w \wedge \Cc_w \Rightarrow \Next \Xc_w).\label{eq:lemma:controlgraph4}
\end{eqnarray}
\end{lemma}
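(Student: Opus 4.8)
The plan is to verify each of the four LTL formulas by inspecting the edge structure of $\gamegraph^C$ as prescribed by \cref{def:controlgraph}. Fix a trace $\trace = l_0 l_1 \cdots$ over $\gamegraph^C$; by definition it is generated by some play $\play = v_0 v_1 \cdots$, with $l_i = \labelfuncG^C(v_{2i+1}) \cup \labelfuncG^C(v_{2i+2})$, alternating between $\p{1}$ vertices (the transition/invariant vertices carrying a single control label $\{\Cc_w\}$) and $\p{0}$ vertices (carrying a subset of $\AP_O \cup \AP_S^+$). The key structural fact I will use throughout is that, by items~\ref{item:def:controlgraph:1} and~\ref{item:def:controlgraph:2}, a control proposition $\Cc_w$ can appear only in the label of a transition or invariant vertex of $w$, while state and observation propositions appear only in $\p{0}$-vertex labels; consecutive positions of $\trace$ therefore pick up information from a $\p{0}$ vertex and its $\p{1}$ successor/predecessor in a controlled way.

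First I would establish~\eqref{eq:lemma:controlgraph2}. Suppose $\Cc_w$ holds at position $i$ of $\trace$. Then either $v_{2i+1}$ or $v_{2i+2}$ is a transition or invariant vertex of $w$. In fact $v_{2i+1}$ must be such a vertex (it is the unique $\p{1}$ vertex contributing to $l_i$, and $v_{2i+2}$ is a $\p{0}$ vertex, which cannot carry $\Cc_w$), so its predecessor $v_{2i}$ is a $\p{0}$ vertex with an edge into it; by item~\ref{item:def:controlgraph:5} this forces $\Xc_w \in \labelfuncG^C(v_{2i})$ and $\contextRWA_w = \labelfuncG^C(v_{2i}) \cap \AP_O$. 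Since $v_{2i}$ is the $\p{1}$-successor chain's $\p{0}$ predecessor, its label is exactly $l_{i-1}$'s $\p{0}$ contribution — but I actually need $\Xc_w \wedge \contextRWA_w$ at position $i$, not $i-1$. Here I use items~\ref{item:def:controlgraph:3} and~\ref{item:def:controlgraph:4}: the successor $v_{2i+2}$ of the transition (resp. invariant) vertex of $w$ satisfies $\Xc_w \in \labelfuncG^C(v_{2i+2})$ (resp. $\reachRWA_w \subseteq \labelfuncG^C(v_{2i+2})$, which combined with $\Xc_w(c) \subseteq \reachRWA_w$-type containment and the way the $\p{0}$ vertex is defined in item~\ref{item:def:controlgraph:2} from some $x \in X$ still gives $\Xc_w$). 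So $\Xc_w$ holds at position $i$; the context claim $\contextRWA_w$ follows because the observation propositions in $l_i$ come from $v_{2i+1}$'s label, which is $\{\Cc_w\}$, contributing nothing — so I must instead read $\contextRWA_w$ from $v_{2i+2}$, and item~\ref{item:def:controlgraph:5} constrains this via the edge $v_{2i} \to v_{2i+1}$. This is the subtle bookkeeping point: I will need to carefully track whether the relevant $\p{0}$ vertex is the predecessor or successor of a given control vertex, and chase the definitions so that the required propositions land at the same index $i$ of the trace.

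Next, \eqref{eq:lemma:controlgraph1} follows purely from item~\ref{item:def:controlgraph:2}: any $\p{0}$ vertex $v$ with $\Xc_w \in \labelfuncG^C(v)$ arises from some $x \in X$ with $\labelfunc^+(x) = \labelfuncG^C(v) \cap \AP_S^+$, hence $x \in X_w$; and by \cref{def:RWAtoPolicy} the basin $X_w$ is disjoint from $\avoidRWA_w$, so $x \notin \avoidRWA_w$, giving $\neg\avoidRWA_w$ in the label — and since $\Xc_w$ and members of $\avoidRWA_w$ only ever appear on $\p{0}$ vertices, this transfers directly to every position of $\trace$. For \eqref{eq:lemma:controlgraph3} and \eqref{eq:lemma:controlgraph4}, the "next-step" structure, I would argue: if $\Cc_w$ holds at position $i$ together with $\reachRWA_w$ (resp. $\Xc_w$), then $v_{2i+1}$ is the invariant (resp. transition or invariant) vertex of $w$ — distinguishing invariant from transition by whether $\reachRWA_w$ is present, per item~\ref{item:def:controlgraph:5} — and its outgoing edge, by item~\ref{item:def:controlgraph:3} (resp. \ref{item:def:controlgraph:4}), forces $\reachRWA_w \subseteq \labelfuncG^C(v_{2i+2})$ (resp. $\Xc_w \in \labelfuncG^C(v_{2i+2})$). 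Then, because the $\p{0}$ vertex $v_{2i+2}$ contributes these propositions to both $l_i$ and — after the environment's move at $v_{2i+3}$, which is again a control vertex contributing no state propositions — potentially $l_{i+1}$, I need one more step: the $\p{0}$ vertex feeding position $i+1$ is $v_{2i+4}$, reached from $v_{2i+3}$; to conclude $\reachRWA_w$ (resp. $\Xc_w$) at $i+1$ I must show $v_{2i+4}$ also carries it. This requires that the relevant control vertex still governs step $i+1$, i.e., that $\Cc_w$ persists — which is exactly the hypothesis $\Cc_w$ at position $i$ combined with how the trace is generated; alternatively, the claim is really about the single edge out of the control vertex $v_{2i+1}$ landing in $v_{2i+2}$, and the $\Next$ refers to that adjacency within the play. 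I would reconcile the play-level adjacency with the trace-level $\Next$ carefully.

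\textbf{Main obstacle.} The genuinely delicate part is the index-alignment between play vertices and trace positions: a control vertex carries $\Cc_w$ but no state/observation data, and a $\p{0}$ vertex carries the state/observation data but no control proposition, yet $l_i$ is their union — so establishing that $\Cc_w$, $\Xc_w$, $\contextRWA_w$, and $\reachRWA_w$ all co-occur at the same trace index $i$ (for \eqref{eq:lemma:controlgraph2}), and that the "$\Next$" in \eqref{eq:lemma:controlgraph3}–\eqref{eq:lemma:controlgraph4} corresponds to the correct edge in the play, demands a meticulous case analysis over which of $v_{2i+1}, v_{2i+2}$ is the control vertex and tracing items~\ref{item:def:controlgraph:3}–\ref{item:def:controlgraph:5} in the right direction. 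Once the bookkeeping is set up correctly, each individual implication is immediate from the corresponding clause of \cref{def:controlgraph} together with \cref{def:RWAtoPolicy}.
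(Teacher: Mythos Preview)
Your approach is the paper's: verify each formula by reading off the edge constraints of \cref{def:controlgraph}. But you have inverted the parity of the play, and that is the sole source of the ``main obstacle'' you describe. In the paper's convention (cf.\ the proof of \cref{lemma:mergedgame}), plays begin at a $\p{1}$ vertex, so $v_{2k}\in V_1^C$ and $v_{2k+1}\in V_0^C$; therefore in $l_i=\labelfuncG^C(v_{2i+1})\cup\labelfuncG^C(v_{2i+2})$ the $\p{0}$ vertex is $v_{2i+1}$ and the control ($\p{1}$) vertex is $v_{2i+2}$, not the other way around. With this alignment, $\Cc_w\in l_i$ forces $v_{2i+2}$ to be a vertex of $w$, and item~\ref{item:def:controlgraph:5} applied to the edge $(v_{2i+1},v_{2i+2})$ immediately gives $\Xc_w\in\labelfuncG^C(v_{2i+1})$ and $\contextRWA_w=\labelfuncG^C(v_{2i+1})\cap\AP_O$, both already sitting in $l_i$. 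That is \eqref{eq:lemma:controlgraph2} in one line. For \eqref{eq:lemma:controlgraph3}--\eqref{eq:lemma:controlgraph4} the hypothesis at index $i$ lives on the pair $(v_{2i+1},v_{2i+2})$, and items~\ref{item:def:controlgraph:3}--\ref{item:def:controlgraph:5} applied to the outgoing edge $(v_{2i+2},v_{2i+3})$ place the required proposition on $v_{2i+3}\in V_0^C$, which contributes to $l_{i+1}$; no chase to $v_{2i+4}$ is needed.

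Your attempted repair under the reversed parity has an actual gap: when the control vertex is the \emph{invariant} vertex of $w$, item~\ref{item:def:controlgraph:3} only guarantees $\reachRWA_w\subseteq\labelfuncG^C(v')$ for the $\p{0}$ successor $v'$, not $\Xc_w\in\labelfuncG^C(v')$; the containment $X_w(c)\subseteq\reachRWA_w$ from \cref{def:RWAtoPolicy} points the wrong way for what you need, and nothing in the definitions forces a state in $\reachRWA_w$ to lie in the basin $X_w$. The same item places no constraint on the observation part of $\labelfuncG^C(v')$, so your argument for $\contextRWA_w$ at index $i$ does not go through either. Fix the parity and each of the four implications becomes a single edge lookup, exactly as in the paper.
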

\begin{proof}
Let $\play=v_0v_1\cdots$ be a play in $\gamegraph^C$ and $\trace=l_0l_1\cdots$ be the trace generated by $\play$. We need to show that $\trace$ satisfies \eqref{eq:lemma:controlgraph1}-\eqref{eq:lemma:controlgraph4}.
By \cref{def:RWAtoPolicy}, for each $w\in\allCLF$, $X_w\cap\avoidRWA_w=\emptyset$. Then, by item~\ref{item:def:controlgraph:2}, for each $i\geq 0$, if $\Xc_w\in\labelfuncG^C(v_i)$ then $\avoidRWA_w\cap\labelfuncG^C(v_i)=\emptyset$. 
Hence, $\trace$ satisfies \eqref{eq:lemma:controlgraph1}.
Next, by item~\ref{item:def:controlgraph:5}, if $\Cc_w\in\labelfuncG^C(v_{i+1})$ for some $i\geq 0$, then $\Xc_w\in\labelfuncG^C(v_i)$ and $\contextRWA_w = \labelfuncG^C(v_i)\cap\AP_O$. Hence, $\trace$ also satisfies~\eqref{eq:lemma:controlgraph2}.
Next, by item~\ref{item:def:controlgraph:3} and item~\ref{item:def:controlgraph:5}, if $\reachRWA_w\subseteq\labelfuncG^C(v_i)$ and $\Cc_w\in\labelfuncG^C(v_{i+1})$ for some $i\geq 0$, then $\reachRWA_w\subseteq\labelfuncG^C(v_{i+2})$. 
Hence, $\trace$ also satisfies \eqref{eq:lemma:controlgraph3}.
Similarly, by item~\ref{item:def:controlgraph:4} and \ref{item:def:controlgraph:5}, if $\Xc_w\in\labelfuncG^C(v_i)$ and $\Cc_w\in\labelfuncG^C(v_{i+1})$ for some $i\geq 0$, then $\Xc_w\in\labelfuncG^C(v_{i+2})$. 
Hence, $\trace$ satisfies \eqref{eq:lemma:controlgraph4}.
\end{proof}

Intuitively, given the premises of \cref{lemma:controlgraph}, equations \eqref{eq:lemma:controlgraph1}-\eqref{eq:lemma:controlgraph4} ensures the following low-level properties on the game graph level.
First, 
\eqref{eq:lemma:controlgraph1} ensures that the basin of attaction $\Xc_w$ does not have an intersection with the avoid region $\avoidRWA_w$.
Next, \eqref{eq:lemma:controlgraph2} ensures that the controller $\Cc_w$ can only be applied if the system is within the corresponding basin of attaction $\Xc_w$ and the context $\kappa_w$ holds. Note that this does not restrict the environment from changing the context right after the feedback control policy associated with $\Cc_w$ was applied.
Finally, \eqref{eq:lemma:controlgraph3}-\eqref{eq:lemma:controlgraph4} ensures that if the system is within the target region $\reachRWA_w$ (resp. the basin of attaction $\Xc_w$) and the controller $\Cc_w$ is applied, the system cannot leave $\reachRWA_w$ (resp.\ $\Xc_w$).

In total, the control game graph $G^C$ models all the state proposition sequences generated by a trajectory $\xi$ triggered by the controller policies associated with $\allCLF$ as in \cref{prop:result:high-to-low}.
Furthermore, it also models the logical disturbances received as inputs via the disturbance function $\ldist\in\Ldist$. This is formalized by the next lemma which directly follows by item~\ref{item:def:controlgraph:2}-\ref{item:def:controlgraph:4} of \cref{def:controlgraph}.
\begin{lemma}\label{lemma:controlgraphtotal}
Given the premises of \cref{def:controlgraph}, 
one of the following holds for every disturbance function $\ldist\in\Ldist$
\begin{itemize}
	\item for some play in $\gamegraph^C$, its generated trace $\trace$ satisfies that $\trace|_{\AP_O}=\ldist$, or
 	\item for some play in $\gamegraph^C$ ending in a $\p{0}$ dead-end, its generated trace $\trace$ satisfies that $\trace|_{\AP_O}$ is a prefix of $\ldist$.
\end{itemize}

\end{lemma}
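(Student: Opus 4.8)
The plan is to fix an arbitrary disturbance function $\ldist\in\Ldist$ and explicitly build a play in $\gamegraph^C$ (or a finite play ending in a $\p{0}$ dead-end) whose generated trace, restricted to $\AP_O$, equals $\ldist$ (or a prefix thereof). Since $\ldist$ is piecewise-constant and right-continuous, it induces a (possibly finite, possibly infinite) sequence of distinct consecutive observation-proposition values $o_0 o_1 o_2 \cdots$ with $o_k\in 2^{\AP_O}$; as $\ldist$ takes only finitely many values on any bounded interval and $\Ldist$ collects such functions, this sequence is well-defined. The play we construct will, after the $k$-th "logical step", be at a $\p{0}$ vertex whose $\AP_O$-label is exactly $o_k$; such a vertex always exists by item~\ref{item:def:controlgraph:2} of \cref{def:controlgraph}, because for any $o\subseteq\AP_O$ and any consistent state labelling $\labelfunc^+(x)$ there is a $\p{0}$ vertex with label $o\cup\labelfunc^+(x)$.

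First I would set up the induction on $k$. The base case picks an arbitrary $\p{0}$ vertex $v$ with $\labelfuncG^C(v)\cap\AP_O = o_0$ — again available by item~\ref{item:def:controlgraph:2}. For the inductive step, suppose the play has reached a $\p{0}$ vertex $v$ with $\labelfuncG^C(v)\cap\AP_O=o_k$. By item~\ref{item:def:controlgraph:5}, from $v$ there is an edge to an invariant or transition vertex of some applicable CLF $w$ (whenever $\Xc_w\in\labelfuncG^C(v)$ and $\contextRWA_w=\labelfuncG^C(v)\cap\AP_O$), and from that $\p{1}$ vertex items~\ref{item:def:controlgraph:3}–\ref{item:def:controlgraph:4} give edges to $\p{0}$ vertices whose $\AP_S^+$-labels are consistent with the closed-loop behaviour under $u_w$; crucially, the $\AP_O$-component of the successor $\p{0}$ vertex is unconstrained by items~\ref{item:def:controlgraph:3}–\ref{item:def:controlgraph:4}, so I may choose it to be $o_{k+1}$ (this is exactly where it matters that the environment freely controls $\AP_O$, mirroring the remark after \cref{lemma:controlgraph} that nothing restricts the environment from changing the context). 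If such a two-edge extension exists, I take it; the resulting $\p{0}$ vertex has $\AP_O$-label $o_{k+1}$, closing the induction. If no applicable CLF exists at $v$, then $v$ is a $\p{0}$ dead-end, and the finite play constructed so far falls into the second case of the lemma, its $\AP_O$-trace being the prefix $o_0\cdots o_k$ of $\ldist$.

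The main obstacle I anticipate is verifying that the successor $\p{0}$ vertex demanded in the inductive step actually exists as a vertex of $\gamegraph^C$ with the desired $\AP_O$-label — i.e., that item~\ref{item:def:controlgraph:2}'s existence clause ("iff there exists $x\in X$ such that $c\cap\AP_S^+=\labelfunc^+(x)$") is compatible with the constraints imposed by items~\ref{item:def:controlgraph:3}–\ref{item:def:controlgraph:4} on the $\AP_S^+$-component. This is precisely the content of \cref{prop:result:high-to-low}: a trajectory of $\cS$ under $u_w$ stays in $X_w$ and eventually reaches $X_w(c)\subseteq\reachRWA_w$, so the state-labels seen along such a trajectory correspond to genuine states $x\in X$, hence to genuine $\p{0}$ vertices. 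One also has to handle the bookkeeping for time: the sequence $o_0 o_1\cdots$ may be finite (if $\ldist$ is eventually constant) while the constructed play is infinite — in that case one simply keeps $o_{k}$ fixed for all large $k$, and the trace satisfies $\trace|_{\AP_O}=\ldist$ exactly. Once these points are checked, the lemma follows; I would note at the end that the statement is indeed, as the paper claims, a direct consequence of items~\ref{item:def:controlgraph:2}–\ref{item:def:controlgraph:4}, with \cref{prop:result:high-to-low} ensuring the state-label consistency.
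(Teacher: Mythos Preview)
Your proposal is correct and follows the same approach the paper sketches in its one-line justification (that the lemma ``directly follows by items~\ref{item:def:controlgraph:2}--\ref{item:def:controlgraph:4} of \cref{def:controlgraph}''). You correctly identify the key point --- that items~\ref{item:def:controlgraph:3} and~\ref{item:def:controlgraph:4} leave the $\AP_O$-component of the successor $\p{0}$ vertex unconstrained --- and appropriately fill in the details (invoking item~\ref{item:def:controlgraph:5} for the $\p{0}$-to-$\p{1}$ step and \cref{prop:result:high-to-low} for state-label consistency) that the paper omits.
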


\begin{example}\label{example:controlgraph}
For the CLFs $w_a$ and $w_e$ given in \cref{example:CLFs} with basins of attraction $X_a$ and $X_e$ as shown in \cref{fig:basins},
the corresponding control game graph (without the $\p{0}$ dead-ends) is depicted in \cref{fig:controlgraph}.
As in the figure, the transition vertices of $w_a$ and $w_e$ are vertices $a$ and $c$, respectively, and the invariant vertices are vertices $g$ and $i$, respectively. 
Note that both CLFs have context $\{\Mc_1,\Dc\}$. Hence, 
vertices with a label that contains $\Xc_a$ or $\Xc_e$ but not the propositions $\Mc_1$ or $\Dc$ are $\p{0}$ dead-ends (no outgoing edges are defined from them).
For simplicity, those vertices are not shown in \cref{fig:controlgraph}.\qed
\end{example}
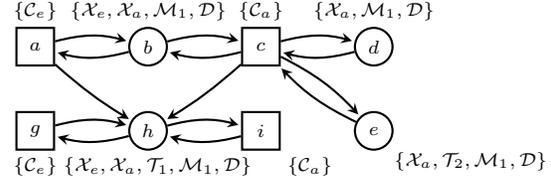
\begin{figure}
	\centering
	\scriptsize
	\begin{tikzpicture}
		\node[player1,label={[align=center]above:$\{\Cc_e\}$}] (a) at (0, 0) {$a$};
		\node[player0,label={[align=center]above:$\{\Xc_e,\Xc_a,\Mc_1,\Dc\}$}] (b) at (\hpos,0) {$b$};
		\node[player1,label={[align=center]above:$\{\Cc_a\}$}] (c) at (2*\hpos, 0) {$c$};
		\node[player0,label={[align=center]above:$\{\Xc_a,\Mc_1,\Dc\}$}] (d) at (3*\hpos, 0) {$d$};
		\node[player0,label={[align=center]below right:$\{\Xc_a,\Tc_2,\Mc_1,\Dc\}$}] (e) at (3*\hpos, -0.7*\ypos) {$e$};
		
		\node[player1,label={[align=center]below:$\{\Cc_e\}$}] (g) at (0*\hpos, -0.7*\ypos) {$g$};
		\node[player0,label={[align=center]below:$\hspace*{1em}\{\Xc_e,\Xc_a,\Tc_1,\Mc_1,\Dc\}$}] (h) at (1*\hpos, -0.7*\ypos) {$h$};
		\node[player1,label={[align=center]below right:$\{\Cc_a\}$}] (i) at (2*\hpos, -0.7*\ypos) {$i$};

		\path[->] (a) edge[bend left=15] (b) edge[bend left=-5] (h);
		\path[->] (b) edge[bend left=15] (a) edge[bend left=15] (c);
		\path[->] (c) edge[bend left=15] (b) edge[bend left=15] (d) edge[bend left=5] (h) edge[bend left=10] (e);
		\path[->] (d) edge[bend left=15] (c);
		\path[->] (e) edge[bend left=10] (c);
		
		\path[->] (g) edge[bend left=15] (h);
		\path[->] (h) edge[bend left=15] (g) edge[bend left=15] (i);
		\path[->] (i) edge[bend left=15] (h);
	\end{tikzpicture}
	\caption{The corresponding control game graph (without $\p{0}$ dead-ends) for the basins of attraction in \cref{fig:basins}.}\label{fig:controlgraph}
	\end{figure}

While we could now take the product of $\gamegraph^C$ with $\game^M$ from the previous step in order to obtain the new, refined logical synthesis game, we note that this typically does not lead to a game that actually has a winning strategy. The reason for this lies in the fact that the modification of $\game^I$ to $\game^M$ gives the right to trigger state propositions to the environment, i.e., now the controller actuates $\AP_C$ and gets \enquote{notified} by the underlying dynamical systems via a triggering of $\AP_S$'s that the actuated controller actually resulted in the (hopefully desired) state proposition change. 
From a two-player game perspective, the environment could now use its additional power to prevent the robot to reach the target. E.g., in \cref{fig:controlgraph}, starting from vertex~$b$, if the controller keeps using the control policy for CLF $w_e$, then the environment can force the play to loop between vertex $a$ and $b$ instead of reaching target $\Tc_1$ represented by vertex~$h$.
This is because the resulting logical game still misses essential information about the low-level closed loop dynamics under a given feedback-control policy. We thus incorporate, in what follows, the information captured by item (ii) of \cref{prop:result:high-to-low}.

\subsubsection{Persistent Live-Groups}\label{step5}
In order to capture item (ii) of \cref{prop:result:high-to-low} in the logical synthesis game,  we construct so called \emph{persistent liveness constraints} (i.e., going from \circled{5} to \circled{8} in \cref{fig:overview}) to annotate the control game graph $\game^C$ which are inspired by progress groups from \cite{progress_groups}.

\begin{definition}\label{def:perslivegroups}
Given a game graph $\gamegraph = (\vertices,\edges)$, a \emph{persistent live-group} is a tuple $(\perssource,\persedges,\perstarget)$ consisting of sets $\perssource,\perstarget\subseteq \vertices$ and $\persedges\subseteq \edgeso$ such that $\perstarget\subseteq \perssource$.
The constraints represented by such a persistent live group is expressed by the following LTL formula
\begin{equation}\label{eq:assumpPers}
	\assumpPers(\perssource,\persedges, \perstarget) \coloneqq \globally \big(\globally (\perssource\wedge \assumpC(\persedges)) \Rightarrow \finally \perstarget\big),  
\end{equation}
where $\assumpC(\persedges) \coloneqq \src(\persedges) \Rightarrow \persedges$.
Moreover, the constraints represented by a set $\perslivegroups$ of persistent live-groups is denoted by $\assumpPers(\perslivegroups) \coloneqq \bigwedge_{(\perssource,\persedges,\perstarget)\in\perslivegroups} \assumpPers(\perssource,\persedges, \perstarget)$.
\end{definition}

Intuitively, $\assumpC(\persedges)$ ensures that edges in $\persedges$ are chosen when possible, as this is only possible for $\p{0}$ vertices in $\perssource$.
Furthermore, \eqref{eq:assumpPers} ensures that persistently choosing the edges in $\persedges$ from the source vertices $\perssource$ will eventually lead us to a vertex in $\perstarget$.

For a CLF $w\in\allCLF$, we construct a persistent live-group $(\perssource_w,\persedges_w, \perstarget_w)$ that captures \cref{prop:result:high-to-low} in the following way. Given the control graph $\gamegraph^C$ as defined before, and a CLF $w\in \allCLF$, first, the  persistent activation of $\Cc_w$ is captured via the set $\persedges_w$ collecting all ($\p{0}$) edges that end in vertices with labeled by $\Cc_w$, i.e.,
\begin{equation}\label{eq:persedges}
	\persedges_w = \edges \cap \big(\vertices\times \{\vertex\in \vertices \mid \Cc_w\in\labelfuncG^C(\vertex)\}\big).
\end{equation}
Always choosing an edge from $\persedges_w$ will force $\Xc_w$ to remain true within the same context $\contextRWA_w$, which is captured by the set $\perssource_w$ collecting all ($\p{0}$) vertices labeled by $\Xc_w$ and propositions in $\contextRWA_w$, and all ($\p{1}$) vertices labeled by $\Cc_w$, i.e., 
\begin{multline}\label{eq:perssource}
	\perssource_w = \{\vertex\in \vertices \mid \Xc_w\in\labelfuncG^C(\vertex), \contextRWA_w = \labelfuncG^C(\vertex)\cap\AP_O\}\\
	\cup \{\vertex\in \vertices \mid \Cc_w\in\labelfuncG^C(\vertex)\}.
\end{multline}
Finally, we know that always choosing an edge from $\persedges_w$ will eventually lead us to a vertex where $\reachRWA_w$ is true, captured by the set $\perstarget_w$ collecting all vertices labeled by $\reachRWA_w$, i.e., 
\begin{equation}\label{eq:perstarget}
	\perstarget_w = \{\vertex\in\vertices \mid  \reachRWA_w = \labelfuncG^C(\vertex) \cap \AP_S \}.
\end{equation}

\begin{example}
For example, consider the control game graph shown in \cref{fig:controlgraph} for \cref{example:controlgraph}.
For CLF $w_1$ of $\RWA_1$, the corresponding persistent live-group is $(\perssource,\persedges,\perstarget)$, where $\perssource = \{a,b,g,h\}$ corresponds to the region of basin of attraction for $w_1$ with context $\contextRWA_1=\{\Mc_1,\Dc\}$ being true, $\persedges=\{e_{ba},e_{hg}\}$ corresponds to the edges that represent using the control policy $u_w$, and $\perstarget=\{h\}$ corresponds to the target region of $\RWA_1$, i.e., vertices labeled by $\Tc_1$ .\qed
\end{example}

Given the set $\allCLF$ of all CLFs as given before, we collect all the corresponding persistent live-groups for the CLFs in $\allCLF$ in the set $\perslivegroups^C$. 
With the persistent live-group assumptions $\perslivegroups^C$, the control game graph $\gamegraph^C$ also ensures that item (ii) of \cref{prop:result:high-to-low} holds at a higher level as formalized below.
\begin{lemma}\label{lemma:augmentedgamegraph}
Let $\gamegraph^C$ be a control graph as in \cref{def:controlgraph} and $\allCLF$ a set of CLFs with persistent live-groups $(\perssource_w,\persedges_w,\perstarget_w)$ for all $w\in\allCLF$ as in \eqref{eq:perssource}-\eqref{eq:perstarget}. Then a play over $\gamegraph^C$ satisfies $\assumpPers(\perssource_w,\persedges_w,\perstarget_w)$, if and only if its generated trace satisfies
\begin{equation}\label{eq:augmentedgamegraph1}
	\globally (\globally (\Xc_w \wedge \contextRWA_w \wedge \Cc_w) \Rightarrow \finally \reachRWA_w).
\end{equation}
Moreover, \eqref{eq:augmentedgamegraph1} along with \eqref{eq:lemma:controlgraph1}-\eqref{eq:lemma:controlgraph4} ensures that every trace generated by plays in $\gamegraph^C$ satisfying $\assumpPers(\perssource_w,\persedges_w,\perstarget_w)$ also satisfies $\spec_{\Cc_w}$ in \eqref{eq:feas_control}.
Conversely, every trace satisfying \eqref{eq:feas_control} is generated by a play in $\gamegraph^C$ satisfying $\assumpPers(\perssource_w,\persedges_w,\perstarget_w)$.
\end{lemma}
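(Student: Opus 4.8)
The plan is to prove the three assertions of the lemma in sequence: (a) the bi-implication between a play satisfying $\assumpPers(\perssource_w,\persedges_w,\perstarget_w)$ and its trace satisfying \eqref{eq:augmentedgamegraph1}; (b) that \eqref{eq:augmentedgamegraph1} together with \eqref{eq:lemma:controlgraph1}--\eqref{eq:lemma:controlgraph4} implies $\spec_{\Cc_w}$ from \eqref{eq:feas_control}; and (c) the converse realizability claim, that any trace satisfying \eqref{eq:feas_control} is generated by some play over $\gamegraph^C$ that also satisfies the persistent live-group formula. Throughout, I would fix a play $\play = v_0 v_1 \cdots$ over $\gamegraph^C$ and its generated trace $\trace = l_0 l_1 \cdots$, and translate vertex-level membership statements into proposition-level statements via $\labelfuncG^C$, exactly as in the proof of \cref{lemma:controlgraph}.

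For part (a), I would unfold both LTL formulas. The formula $\assumpPers(\perssource_w,\persedges_w,\perstarget_w) = \globally(\globally(\perssource_w \wedge \assumpC(\persedges_w)) \Rightarrow \finally \perstarget_w)$ with $\assumpC(\persedges_w) = \src(\persedges_w) \Rightarrow \persedges_w$. The key bridge is the following claim: along a play, $\globally(\Xc_w \wedge \contextRWA_w \wedge \Cc_w)$ holding (from some point) is equivalent to $\globally(\perssource_w \wedge \assumpC(\persedges_w))$ holding (from the corresponding point). One direction uses the definitions \eqref{eq:perssource} and \eqref{eq:persedges}: a vertex labeled $\Xc_w$ with context $\contextRWA_w$, or labeled $\Cc_w$, lies in $\perssource_w$, and continually seeing $\Cc_w$ at the $\p{1}$ vertices forces the $\p{0}$ edges into $\persedges_w$ whenever they are available (i.e., $\src(\persedges_w)$ is visited). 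For the other direction I would use \cref{def:controlgraph}, items \ref{item:def:controlgraph:4} and \ref{item:def:controlgraph:5}, to argue that persistently staying in $\perssource_w$ and using $\persedges_w$ forces $\Xc_w \wedge \contextRWA_w \wedge \Cc_w$ to hold at alternating positions, and then combine with \eqref{eq:lemma:controlgraph2} and \eqref{eq:lemma:controlgraph4} to fill in the remaining positions. Once the antecedents are matched, the consequents $\finally \perstarget_w$ and $\finally \reachRWA_w$ coincide by \eqref{eq:perstarget} (which identifies $\perstarget_w$ with the vertices whose $\AP_S$-label is exactly $\reachRWA_w$), so the two $\globally(\cdots \Rightarrow \finally \cdots)$ formulas are equivalent.

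For part (b), I would show $\spec_{\Cc_w} = \globally(\globally(\Cc_w \wedge \contextRWA_w) \Rightarrow \finally\globally \reachRWA_w \wedge \globally \neg \avoidRWA_w)$ follows from \eqref{eq:augmentedgamegraph1} and \eqref{eq:lemma:controlgraph1}--\eqref{eq:lemma:controlgraph4}. Fix a suffix where $\globally(\Cc_w \wedge \contextRWA_w)$ holds. By \eqref{eq:lemma:controlgraph2}, $\Cc_w$ forces $\Xc_w$, so in fact $\globally(\Xc_w \wedge \contextRWA_w \wedge \Cc_w)$ holds on that suffix; then \eqref{eq:augmentedgamegraph1} gives $\finally \reachRWA_w$, and once $\reachRWA_w$ holds, \eqref{eq:lemma:controlgraph3} (a $\Cc_w$-invariance of $\reachRWA_w$) upgrades this to $\finally\globally \reachRWA_w$. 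For the safety conjunct, \eqref{eq:lemma:controlgraph2} gives $\Xc_w$ throughout, and then \eqref{eq:lemma:controlgraph1} gives $\globally \neg \avoidRWA_w$. Putting these together yields $\spec_{\Cc_w}$.

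For part (c), the converse, I would take any trace $\trace$ satisfying \eqref{eq:feas_control} and construct a matching play over $\gamegraph^C$: since \cref{def:controlgraph} item \ref{item:def:controlgraph:2} realizes every consistent state-proposition subset as a $\p{0}$ vertex and items \ref{item:def:controlgraph:3}--\ref{item:def:controlgraph:5} connect them through the transition/invariant vertices of $w$ in accordance with the membership constraints that \eqref{eq:feas_control} already guarantees (this is essentially the totality argument underlying \cref{lemma:controlgraphtotal}), such a play exists; then the forward direction of part (a) plus part (b)'s reasoning, read in reverse where needed, shows this play satisfies $\assumpPers(\perssource_w,\persedges_w,\perstarget_w)$. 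I expect the main obstacle to be part (a), specifically the careful bookkeeping needed to match the antecedents of the two $\globally$-guarded formulas across the alternation of $\p{0}$ and $\p{1}$ vertices — ensuring that "$\Cc_w$ seen persistently" and "edges of $\persedges_w$ used whenever in $\src(\persedges_w)$" really are interchangeable given the edge structure of \cref{def:controlgraph}, and that no spurious escape (such as the looping-between-$a$-and-$b$ behavior noted after \cref{example:controlgraph}) can violate the equivalence once the liveness assumption is imposed. The other parts are routine LTL manipulations chaining the already-established invariants from \cref{lemma:controlgraph}.
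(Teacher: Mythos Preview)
Your proposal is correct and, for parts (a) and (b), follows essentially the same route as the paper: the paper also fixes a play $\play$ and its trace $\trace$, dispatches part (a) by rewriting \eqref{eq:assumpPers} in terms of propositions via the definitions \eqref{eq:persedges}--\eqref{eq:perstarget}, and proves part (b) by exactly the chain you describe (use \eqref{eq:lemma:controlgraph2} to add $\Xc_w$ to the antecedent, apply \eqref{eq:augmentedgamegraph1} to get $\finally\reachRWA_w$, then \eqref{eq:lemma:controlgraph3} to upgrade to $\finally\globally\reachRWA_w$, and \eqref{eq:lemma:controlgraph1} for $\globally\neg\avoidRWA_w$).

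For part (c) you do more than the paper. The paper reads the converse as a statement about the \emph{already-fixed} play $\play$ generating $\trace$, not as an existence claim requiring a fresh construction. Its argument is a one-liner: since $\finally\globally\reachRWA_w \Rightarrow \finally\reachRWA_w$, formula \eqref{eq:feas_control} trivially implies \eqref{eq:augmentedgamegraph1}, and then part (a) immediately yields that $\play$ satisfies $\assumpPers(\perssource_w,\persedges_w,\perstarget_w)$. Your totality-based construction is not wrong, but it is unnecessary; the simpler reading and argument suffice here.
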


\begin{proof}
Let $\play=v_0v_1\cdots$ be a play in $\gamegraph^C$ and $\trace=l_0l_1\cdots$ be the trace generated by $\play$.
By the definition of the persistent live-groups as in \eqref{eq:persedges}-\eqref{eq:perstarget}, rewriting \eqref{eq:assumpPers} in terms of propositions gives us that, $\play$ satisfies $\assumpPers(\perssource_w,\persedges_w,\perstarget_w)$ if and only if trace $\trace$ satisfies \eqref{eq:augmentedgamegraph1}.
Furthermore, by \cref{lemma:controlgraph}, the trace $\trace$ also satisfies \eqref{eq:lemma:controlgraph1}-\eqref{eq:lemma:controlgraph4}.

Now, suppose $\trace$ satisfied \eqref{eq:augmentedgamegraph1}, then we need to show that $\trace$ also satisfies $\spec_{\Cc_w}$ in \eqref{eq:feas_control}. It suffices to show that for every $k\geq 0$, the trace $\trace_k = l_kl_{k+1}\cdots$ satisfies the following:
\[\square (\Cc_w ~\wedge~\contextRWA_w) \Rightarrow \Diamond\square \reachRWA_w ~ \wedge ~  \square \neg\avoidRWA_w.\]
Suppose $\trace_k$ satisfies $\square (\Cc_w\wedge\contextRWA_w)$. 
Then, every $j\geq k$, $l_j$ satisfies $\Cc_w$, which implies, by \eqref{eq:lemma:controlgraph2}, $l_j$ also satisfies $\Xc_w$.
Moreover, by \eqref{eq:lemma:controlgraph1}, $l_j$ also satisfies $\neg\avoidRWA_w$ for each $j\geq 0$.
Therefore, trace $\trace_k$ satisfies both $\square (\Cc_w \wedge\contextRWA_w\wedge \Xc_w)$ and $\square\neg\avoidRWA_w$, which then implies, by \eqref{eq:augmentedgamegraph1}, $\trace_k$ also satisfies $\finally\reachRWA_w$.
That means, there exists $m\geq k$ such that $l_m$ satisfies $\reachRWA_w$.
As $l_m$ also satisfies $\Cc_w$, by \eqref{eq:lemma:controlgraph3}, $l_{m+1}$ satisfies $\reachRWA_w$. 
Using the same argument inductively, we can show that $l_i$ satisfies $\reachRWA_w$ for all $i\geq m$.
Therefore, $\trace_k$ satisfies both $\finally\globally\reachRWA_w$ and $\globally\neg\avoidRWA_w$.
Conversely, suppose $\trace$ satisfies \eqref{eq:feas_control}, then we need to show that $\play$ satisfies $\assumpPers(\perssource_w,\persedges_w,\perstarget_w)$.
It is enough to show that $\trace$ satisfies \eqref{eq:augmentedgamegraph1}, which trivially follows from \eqref{eq:feas_control}.
\end{proof}

\subsubsection{Final Augmented Parity Game}\label{step6}
Given the three ingredients from the last steps, we are now ready to construct the final augmented (parity) game (i.e., going from \circled{6},\circled{7},\circled{8} to \circled{9} in \cref{fig:overview}) which serves a new logical synthesis game for the final hybrid controller and is defined next.
\begin{definition}\label{def:augmentedGame}
An \emph{augmented game} $\game$ is a tuple $(\gamegraph, \spec, \perslivegroups)$ consisting of a game graph $\gamegraph$, a set of persistent live-groups $\perslivegroups$ over $\gamegraph$ and an LTL specification $\spec$. Moreover, an augmented game $(\gamegraph, \spec, \perslivegroups)$ is equivalent to the game $(\gamegraph, \assumpPers(\perslivegroups)\Rightarrow \spec)$.
\end{definition}

Let us now describe how the final augmented parity game, i.e., an augmented game with parity specification, is constructed. Recall that $\verticesi^M$ and $\verticesi^C$ are the vertices of $\p{i}$ in game graph $\gamegraph^M$ and $\gamegraph^C$, respectively.
\begin{definition}\label{def:finalgame}
Given the merged game $\game^M$, control game graph $\gamegraph^C$, and persistent live-groups $\perslivegroups^C$ as computed before, the \emph{final augmented parity game} $\game^F = (\gamegraph^F,\paritygame(\priority^F),\perslivegroups^F)$ with $\gamegraph^F = (\vertices^F,\edges^F,\labelfuncG^F)$ is constructed by taking the product of the game $\game^M$ and the tuple $(\gamegraph^C,\perslivegroups^C)$ as follows:
\begin{itemize}[leftmargin=*]
\item $\vertex = (\vertex^M,\vertex^C)\in \verticesi^F$ with label $\labelfuncG^F(\vertex) = \labelfuncG'(\vertex^M) \cup \labelfuncG^C(\vertex^C)$ if $\vertex^M\in\verticesi^M$, $\vertex^C\in\verticesi^C$, and $\labelfuncG'(\vertex^M)|_{\AP_O\cup\AP_S} = \labelfuncG^C(\vertex^C)|_{\AP_O\cup\AP_S}$;
\item there is an edge $(\vertex_1,\vertex_2)\in \edges^F$ from $\vertex_1 = (\vertex^M_1,\vertex^C_1)$ to $\vertex_2 = (\vertex^M_2,\vertex^C_2)$ if $(\vertex^M_1,\vertex^M_2)\in \edges^M$ and $(\vertex^C_1,\vertex^C_2)\in E^C$;
\item for vertex $\vertex = (\vertex^M,\vertex^C)\in\vertices^F$, $\priority(\vertex) = \priority^M(\vertex^M)$;
\item $(\perssource,\persedges,\perstarget) \in \perslivegroups^F$ if there exists a $(\perssource^C,\persedges^C,\perstarget^C) \in \perslivegroups^C$ such that:
\begin{itemize}
    \item $\perssource = \vertices^F \cap (\vertices^M\times\perssource^C)$,
    \item $\perstarget = \vertices^F \cap (\vertices^M\times\perstarget^C)$,
    \item for every edge $e = (\vertex_1,\vertex_2)\in\edges^F$ with $\vertex_1 = (\vertex^M_1,\vertex^C_1)$ and $\vertex_2 = (\vertex^M_2,\vertex^C_2)$, it holds $e\in \persedges$ if and only if $(\vertex^C_1,\vertex^C_2)\in \persedges^C$.
\end{itemize}
\end{itemize}
\end{definition}

As the priority function $\priority^F$ is defined by the priority function $\priority^M$ of the merged game $\game^M$ and every winning play in $\game^F$ satisfying $\assumpPers(\perslivegroups^F)$ needs to satisfy the parity condition $\paritygame(\priority^F)$, the next proposition directly follows from \cref{lemma:mergedgame}.
\begin{proposition}\label{prop:low-high-winnnig}
Given the LTL specification $\spec$, initial game $\game^I$, and the final game $\game^F$ with persistent live-groups $\perslivegroups^F$ as in \cref{def:finalgame}, 
suppose $\trace$ be a trace generated by a winning play satisfying $\assumpPers(\perslivegroups^F)$ in $\game^F$, then 
$\trace$ satisfies the specification $\spec$.
\end{proposition}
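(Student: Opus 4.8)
The plan is to treat this as an almost immediate corollary of \cref{lemma:mergedgame}: the final augmented game $\game^F$ is, by \cref{def:finalgame}, a product of the merged game $\game^M$ with the annotated control graph $(\gamegraph^C,\perslivegroups^C)$, so a play of $\game^F$ projects to a play of $\game^M$, and the parity condition of $\game^F$ is literally inherited from that of $\game^M$. Concretely, let $\play^F = \vertex_0\vertex_1\cdots$ be a winning play in $\game^F$ with $\play^F\vDash\assumpPers(\perslivegroups^F)$, and write $\vertex_k = (\vertex_k^M,\vertex_k^C)$ according to the product structure. By \cref{def:augmentedGame}, being winning in $\game^F=(\gamegraph^F,\paritygame(\priority^F),\perslivegroups^F)$ means being winning in $(\gamegraph^F,\assumpPers(\perslivegroups^F)\Rightarrow\paritygame(\priority^F))$; together with $\play^F\vDash\assumpPers(\perslivegroups^F)$ this yields $\play^F\vDash\paritygame(\priority^F)$. (I treat $\play^F$ as infinite; a finite winning play ends in a $\p{1}$ dead-end, and the argument below only simplifies.)

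Next I would check that the projection $\play^M := \vertex_0^M\vertex_1^M\cdots$ is a genuine play of $\game^M$: by \cref{def:finalgame} every edge $(\vertex_k,\vertex_{k+1})\in\edges^F$ forces $(\vertex_k^M,\vertex_{k+1}^M)\in\edges^M$, and ownership is preserved since $\vertex_k\in\verticesi^F$ exactly when $\vertex_k^M\in\verticesi^M$. Since \cref{def:finalgame} also sets $\priority^F(\vertex_k)=\priority^M(\vertex_k^M)$, the priority sequence of $\play^F$ under $\priority^F$ equals that of $\play^M$ under $\priority^M$, so the maximal priority occurring infinitely often is the same; hence $\play^M\vDash\paritygame(\priority^M)$, i.e.\ $\play^M$ is a winning play of the merged parity game. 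Invoking \cref{lemma:mergedgame} then gives that the trace $\trace^M$ generated by $\play^M$ satisfies the specification $\spec$.

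It remains to relate $\trace$, the trace generated by $\play^F$, to $\trace^M$. Using the labelling rule of \cref{def:finalgame} and the compatibility condition $\labelfuncG'(\vertex^M)|_{\AP_O\cup\AP_S}=\labelfuncG^C(\vertex^C)|_{\AP_O\cup\AP_S}$ there, together with the fact that in $\game^M$ the trace-relevant labels lie entirely in $\AP_S\cup\AP_O$ (the $\p{1}$ vertices carrying the empty label), one obtains $\labelfuncG^F(\vertex_k)\cap(\AP_S\cup\AP_O)=\labelfuncG^M(\vertex_k^M)$ for all $k$; since traces are formed by unioning the labels of consecutive vertices (the convention of \cref{section:games_strategy_template}), this gives $\trace|_{\AP_S\cup\AP_O}=\trace^M$. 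As $\spec$ is an LTL formula over $\AP=\AP_S\cup\AP_O$, satisfaction of $\spec$ depends only on a trace's restriction to $\AP_S\cup\AP_O$, so $\trace^M\vDash\spec$ yields $\trace\vDash\spec$. The projection and the priority transfer are routine; the step needing most care is this last one — verifying that the product labelling really makes the $\game^F$- and $\game^M$-traces agree on $\AP_S\cup\AP_O$ and correctly discarding the auxiliary propositions $\AP_C$ and $\{\Xc_w\}_{w\in\allCLF}$ that occur only in $\game^F$ — after which the statement is, as the surrounding text indicates, essentially immediate from \cref{lemma:mergedgame} and \cref{def:augmentedGame}.
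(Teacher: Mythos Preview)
Your proposal is correct and follows essentially the same approach as the paper: the paper simply states that the result ``directly follows from \cref{lemma:mergedgame}'' because $\priority^F$ is inherited from $\priority^M$ and a winning play satisfying $\assumpPers(\perslivegroups^F)$ must satisfy $\paritygame(\priority^F)$. You have spelled out precisely these steps---the product projection, the priority transfer, and the trace agreement on $\AP_S\cup\AP_O$---which the paper leaves implicit.
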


\subsection{Solving the Final Augmented Game}\label{subsec:solveaugmentedgame}
%!TEX root = ../main.tex
As discussed in Section \ref{sec:algo:high}, the initial game $\game^I$ allowed the system to instantaneously activate or deactivate all state propositions in $\AP_S$. However, this was no longer possible in the merged game $\game^M$.
But, in the final game $\game^F$, the persistent live-groups, using the results described in \cref{lemma:augmentedgamegraph}, enable the system to activate or deactivate specific state propositions which are ensured to become \emph{eventually} true (using the associated feedback-control policy) if no external context change is induced.

The next obvious step of our synthesis procedure is to solve the final augmented game $\game^F$, i.e., to compute a winning strategy in this game (realizing the violet marked transitions in \cref{fig:overview}, i.e., going from \circled{9} to \circled{10}). Based on the observation made in \cref{def:augmentedGame} that an augmented game $(\gamegraph, \spec, \perslivegroups)$ is equivalent to the game $(\gamegraph, \assumpPers(\perslivegroups)\Rightarrow \spec)$ one can use standard game solving techniques for this purpose. This, however, usually results in computationally intractable problems. We will therefore provide a new algorithm for solving augmented parity games, in the subsequent \cref{subsec:DiscreteLayer}, which has a similar algorithmic structure and therefore also similar worst-case time  complexity as the standard algorithm for solving classical (non-augmented parity) games and therefore allows for a computationally tractable solution.

For the time being, we assume that we have solved $\game^F$, i.e., we have computed a winning region $\win^F\subseteq V^F$ and a winning strategy $\strat^F\colon \verticeso^F\rightarrow\verticesl^F$ s.t.\ all resulting traces satisfy $\spec$ due to \cref{prop:low-high-winnnig}.

\subsection{Constructing the Hybrid Controller}\label{subsec:hybridcontroller}
%!TEX root = ../main.tex
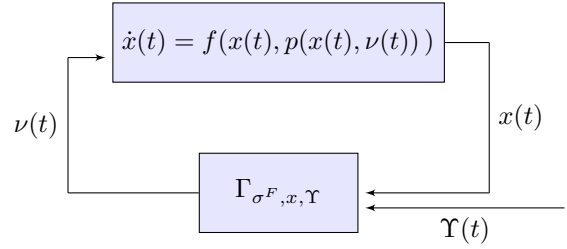
\begin{figure}[t!]
\tikzstyle{block} = [draw, fill=blue!10, rectangle, 
    minimum height=3em, minimum width=6em]
\tikzstyle{input} = [coordinate]
\tikzstyle{output} = [coordinate]
\tikzstyle{pinstyle} = [pin edge={to-,thin,black}]

\begin{tikzpicture}[auto, node distance=2cm,>=latex']
    \node [input, name=input1] {};
    \node [input, name=MID11 ,right of=input1,node distance=2.1cm] {};
    \node [block, right of=MID11, node distance=2.8cm] (system1) {$\dot x(t)=f(x(t), p(x(t),\nu(t))\,)$};
    \node [input, name=MID12 ,right of=system1,node distance=2.8cm] {};
    \node [output, right of=MID12,node distance=1cm ,name=output1]{};
    \node [block, below of=system1] (system2) {$\Gamma_{\sigma^F,x, \ldist}$};
    \node [input, below of=output1, node distance=2.2cm] (input2) {};
    \node [input, name=MID21 ,below of=MID12] {};
    \node [input, name=MID22 ,below of=MID11] {};
    \node [input, name=MID213 , below of=MID11,node distance=0.2cm] {};
    \node [input, name=MID123 , below of=MID12,node distance=2cm] {};
    \node [input, name=system11, below of=system1,node distance=0.2cm] {};
    \node [input, name=system22, below of=system2,node distance=0.2cm] {};

    \draw [-] (MID12) -- node[name=passaggio1] {$x(t)$} (MID123);
    \draw [->] (MID123) -- node[name=passaggio1] {} (6.05,-2);
    \draw [->] (input2) -- node[name=passaggio1] {$\ldist(t)$} (6.05,-2.2);
    \node [output, below of=input1] (output2) {};
    \node [input, name=MID22 ,below of=MID11] {};
    \draw [-] (system2) -- node[name=cici] {} (MID22);
    \draw [-] (system1) -- node[name=coco] {} (MID12);
    
    \draw [-] (MID22) -- node[name=passaggio2] {$\nu(t)$} (MID213);
    \draw [->] (MID213) -- node[name=passaggio2] {} (2.6,-0.2);  
\end{tikzpicture}
\caption{The interconnection between the control system and the hybrid system $\mathcal{H}_{\sigma^F}$ defined in~\cref{def:hybridcontrolpolicy}}\label{Fig:BlockDiagram}
\end{figure}

Given a winning region $\win^F\subseteq V^F$ and a winning strategy $\sigma^F:V^F_0\to V^F_1$, we now construct a set of \emph{initial winning conditions} $X_{\text{win}}\subseteq X$ and a \emph{hybrid feedback control policy} $p:\R_+\times X\times \Ldist\to U$ (as in \cref{defn:SolutionsHybrid}) to solve \cref{prob:MainProb} (realizing the orange marked transitions in \cref{fig:overview}, i.e., going from \circled{10} to \circled{11}).

We first observe that the winning region $\win^F\subseteq V^F$ naturally translates into a set of \emph{initial winning conditions} $X_{\text{win}}$ via the labeling function $\labelfunc^+$ s.t.\
\begin{equation}\label{equ:Xwin}
    X_{\text{win}}:= \{x\in X\mid \exists v\in\win^F \text{ s.t. } \labelfuncG^F(v)\cap\AP_S^+ = \labelfunc^+(x)\}.
\end{equation}

In order to translate the winning strategy $\sigma^F:V^F_0\to V^F_1$ into a hybrid control policy $p$ we take a two-step approach. We first construct a map $\Gamma$ which uses $\sigma^F$ to translate \emph{the history of} a continuous curve $\zeta:\R_+\to X$ and a disturbance function $\ldist\colon \R_+\to 2^{\AP_O}$ into a piece-wise constant function $\nu: \R_+\to V^F_1$ of $\p{1}$ vertices of $\gamegraph^F$. The hybrid controller $p$ then translates each vertex $\nu(t)\in V^F_1$ into the feedback control policy $u_w:X\rightarrow U$ associated with its (unique) label\footnote{We slightly abuse notation by writing $\labelfuncG^F(\nu(t)) = \Cc_w$ instead of $\{\Cc_w\}$. } $\ell^F(\nu(t))=\Cc_w\in\AP_C$, which is a single control proposition by construction of $\gamegraph^F$. This control policy $u_w$ is then applied to $\Sc$ via $f$. This is illustrated in \cref{Fig:BlockDiagram} and formalized in the following definition.

\begin{definition}\label{def:hybridcontrolpolicy}
 Let $\cS=(X,U,f)$ be a control system with labelling function $\labelfunc^+$ and $\allCLF$ the set of all CLFs. Consider $\sigma^F:V^F_0\to V^F_1$ a winning strategy over the final game $\game^F$, a continuous curve $\zeta:\R_+\to X$ and a disturbance function $\ldist\colon \R_+\to 2^{\AP_O}$. Then the  map $\Gamma_{\sigma^F, \zeta,\ldist}$ defines a piecewise constant  function $\nu:\R_+\to V_1^F$ such that:
 \begin{enumerate}[leftmargin=*]
  \item $\nu(0)=\sigma^F(v_0)$, where $v_0\in V^F_0$ s.t.\ $\labelfuncG^F(v_0)= \labelfunc^+(\zeta(0)) \cup \ldist(0)$, 
  \item for any discontinuity point $\tau\in \R_+$ of $\labelfunc^+(\zeta(\cdot)) \cup \ldist(\cdot)$, it holds that $\nu(\tau):=\strat^F(v)$ s.t.\ $(\nu(\tau^-),v)\in E^F$ and $\ell^F(v)= \labelfunc^+(\zeta(\tau)) \cup \ldist(\tau)$, (where $\nu(\tau^-):=\lim_{s\nearrow \tau}\nu(s)$), and 
  \item the set of discontinuity points of $\nu(\cdot)$ is contained in the set of discontinuity points of $\labelfunc^+(\zeta(\cdot)) \cup \ldist(\cdot)$.
  \end{enumerate} 
 \end{definition}
 
 Intuitively, \cref{def:hybridcontrolpolicy} models the fact that the logical layer of the hybrid controller (modelled by the game) might actuate a change in the low-level feedback control policy only when the context changes. This context change can either be induced externally (when $\ldist$ has a discontinuity point, i.e., the observation proposition changes) or when $L^+(\zeta(t))$ changes, i.e., the underlying system dynamics causing state propositions to change. Both is detected by a discontinuity point in  $\labelfunc^+(\zeta(t)) \cup \ldist(t)$.
% defined in \cref{subsec:LTL}, 
% which by definition, is a \emph{triggering point} of the $\tau$-sequence defined in item~1 of  \cref{def:hybridcontrolpolicy}.
At these triggering points (and only then), the map $\Gamma_{\sigma^F}$ mimics the move of the winning strategy $\sigma^F$ by moving to the environment vertex $v$ selected by $\sigma^F$ in $\game^F$ while respecting the current context.

 We emphasize that the definition of the map $\Gamma_{\sigma^F, \zeta,\ldist}$ is actually causal. It only uses the information from the past of $\zeta$ and $\ldist$ up to time point $t^-$ to compute $\nu(t)$. This implies that we can actually use it online to dynamically generate the signal $\nu$ from \emph{the past observations} of a state trajectory $\xi$ and the past logical disturbances $\Upsilon$, as depicted in \cref{Fig:BlockDiagram}.
 As, in this context, the state trajectory $\xi$ is not known a priory, we slightly abuse notation and refer to $\Gamma_{\sigma^F, \zeta,\ldist}$ as $\Gamma_{\sigma^F, x,\ldist}$, where $x$ is the starting point of $\xi$. %, which is the relevant information which ensures that $\Gamma$ is well defined via \cref{def:hybridcontrolpolicy}.

With this slight notation overload, we can define the \emph{final closed loop system} as follows.
 
\begin{definition}\label{def:hybridclosedloop}
Given the premises of \cref{def:hybridcontrolpolicy},
the \emph{final closed loop system} is given by
\begin{equation}\label{eq:FinalClosedLoop}
\dot x(t)=f(x(t),p(x(t),\nu(t))\,),
\end{equation}
where $p(x(t),\nu(t)):=u_w(x)\in U$ and $\nu(t)$ is dynamically generated via $\Gamma_{\sigma^F,x,\ldist}$ by interpreting (the past of) a solution $\sol_{x,p,\ldist}:\R_+\to X$ of~\eqref{eq:FinalClosedLoop} under $p$ and $\ldist$, with starting point $x\in X$, as (the past of) $\zeta$ in  \cref{def:hybridcontrolpolicy}.
\end{definition}

% 
% With the hybrid control policy defined via \cref{def:hybridcontrolpolicy}, the \emph{final closed loop system} is given by
% \begin{equation}\label{eq:FinalClosedLoop}
% \dot x(t)=f(x(t),p(x(t),\Gamma_{\sigma^F,x,\ldist}(t))\,)\,).
% \end{equation}
% We denote by $\sol_{x,p,\ldist}:\R_+\to X$ a solution of~\eqref{eq:FinalClosedLoop} under $p$ and $\ldist$, starting at $x\in X$ . 

This leads to the main result of this section establishing the correctness of our synthesis procedure.
\begin{theorem}\label{Prop:FinalProposition}
Consider a control system $\cS=(X,U,f)$ with labelling function $\labelfunc$, an LTL specification $\spec$ over the predicates $\AP_S\cup\AP_O$. Consider the final game $\game^F$,  $\allCLF$ the set of all CLFs, $\labelfunc^+$ the extended labelling function, the winning region $\win^F$ and winning strategy $\sigma^F:V^F_0\to V^F_1$. Then $x\in X_{\text{win}}$ as in \eqref{equ:Xwin} and $p$ as in \cref{def:hybridclosedloop} solve \cref{prob:MainProb}.
\end{theorem}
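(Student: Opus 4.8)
The plan is to reduce \cref{Prop:FinalProposition} to a statement purely about the augmented game $\game^F$: I will show that \emph{every} closed-loop solution can be ``replayed'' as a single play $\play$ in $\game^F$, and then invoke the results already established for that game. Fix $x\in X_{\text{win}}$, $\ldist\in\Ldist$ and a solution $\zeta:=\sol_{x,p,\ldist}$ of \eqref{eq:FinalClosedLoop}, and let $0=\tau_0<\tau_1<\cdots$ enumerate the discontinuity points of $t\mapsto \labelfunc^+(\zeta(t))\cup\ldist(t)$ (padded by an arbitrary unbounded increasing sequence if there are only finitely many). I want to build $\play$ so that: (a) $\play$ is a $\strat^F$-play that starts in $\win^F$ (hence stays in $\win^F$); (b) $\play$ satisfies $\assumpPers(\perslivegroups^F)$; (c) the trace generated by $\play$, restricted to $\AP_S\cup\AP_O$, is exactly the trace of $\zeta$ for the time points $(\tau_k)_k$. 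Granting (a)--(c), $\play$ is a winning play of $\game^F$ satisfying $\assumpPers(\perslivegroups^F)$, so \cref{prop:low-high-winnnig} gives that its trace satisfies $\spec$; as $\spec$ is an LTL formula over $\AP_S\cup\AP_O$, (c) then yields item~(ii) of \cref{prob:MainProb}, while (a) together with the label matching underlying (c) yields item~(i).

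The play is built by replaying the map $\Gamma_{\strat^F,x,\ldist}$ of \cref{def:hybridcontrolpolicy}. For each $k$ take the $\p{0}$ vertex $v_k$ with $\labelfuncG^F(v_k)=\labelfunc^+(\zeta(\tau_k))\cup\ldist(\tau_k)$ that is reached along the play (so that $(\strat^F(v_{k-1}),v_k)\in\edges^F$), and set $\play:=u_{-1}\,v_0\,\strat^F(v_0)\,v_1\,\strat^F(v_1)\,\cdots$, where $u_{-1}\in\verticesl^F$ is any predecessor of $v_0$; by \cref{def:hybridcontrolpolicy}, $\strat^F(v_k)=\nu(\tau_k)$ is the $\p{1}$ vertex carrying the single control proposition $\Cc_{w_k}$ of the feedback $u_{w_k}$ applied on $[\tau_k,\tau_{k+1})$. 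The crucial point is that $\play$ is a genuine play of $\game^F$: the edge $v_k\to\strat^F(v_k)$ exists by item~\ref{item:def:controlgraph:5} of \cref{def:controlgraph} (which also forces $\Xc_{w_k}\in\labelfuncG^F(v_k)$, i.e.\ $\zeta(\tau_k)\in X_{w_k}$, and $\contextRWA_{w_k}=\labelfuncG^F(v_k)\cap\AP_O$), and the edge $\strat^F(v_k)\to v_{k+1}$ exists by items~\ref{item:def:controlgraph:3}--\ref{item:def:controlgraph:4}, using item~(i) of \cref{prop:result:high-to-low} (via \cref{lemma:CLFBasedFeedback}) to know that $u_{w_k}$ keeps $\zeta$ inside $X_{w_k}$ on $[\tau_k,\tau_{k+1})$, so that $\Xc_{w_k}$ (and $\reachRWA_{w_k}$ if an invariant vertex of $w_k$ is used) stays true. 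To see $v_0\in\win^F$: \eqref{equ:Xwin} gives \emph{some} $\p{0}$ vertex with $\AP_S^+$-label $\labelfunc^+(x)$ in $\win^F$, and since $\AP_O$ is controlled by $\p{1}$ and $\strat^F$ wins from all of $\win^F$, the winning region is closed under changing the $\AP_O$-part of a $\p{0}$ vertex's label; hence $v_0\in\win^F$, and being a $\strat^F$-play from $v_0$, $\play$ remains in $\win^F$. Finally, for $t\in[\tau_k,\tau_{k+1})$ we have $\labelfunc^+(\zeta(t))=\labelfuncG^F(v_k)\cap\AP_S^+$ with $v_k\in\win^F$, so \eqref{equ:Xwin} gives $\zeta(t)\in X_{\text{win}}$ (item~(i)); and since $\labelfuncG^F(\strat^F(v_k))\subseteq\AP_C$, the $k$-th letter of the generated trace restricted to $\AP_S\cup\AP_O$ equals $\labelfunc(\zeta(\tau_k))\cup\ldist(\tau_k)$, which is (c).

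For (b), by \cref{lemma:augmentedgamegraph} it suffices to check that the trace of $\play$ satisfies $\globally(\globally(\Xc_w\wedge\contextRWA_w\wedge\Cc_w)\Rightarrow\finally\reachRWA_w)$ for every $w\in\allCLF$. So suppose $\Xc_w\wedge\contextRWA_w\wedge\Cc_w$ holds from some trace position $k$ onward; then, by construction of $\play$, the feedback $u_w$ is applied on all of $[\tau_k,\infty)$ and $\zeta(\tau_k)\in X_w$, so \cref{lemma:CLFBasedFeedback} yields $T\ge0$ with $\zeta(t)\in X_w(c)\subseteq\reachRWA_w$ for all $t\ge\tau_k+T$ (the inclusion by \cref{def:RWAtoPolicy}); hence $\reachRWA_w$ holds at every trace position $m$ with $\tau_m\ge\tau_k+T$, i.e.\ $\finally\reachRWA_w$ holds from position $k$. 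This completes (a)--(c), hence the theorem.

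I expect the main obstacle to be the run-to-play correspondence of the second paragraph: making rigorous that the edge relation of $\gamegraph^F$ stays in lock-step with the continuous closed loop (this is precisely what items~\ref{item:def:controlgraph:3}--\ref{item:def:controlgraph:5} of \cref{def:controlgraph} and the persistent live-groups of \cref{def:finalgame} were engineered to encode, and it hinges on item~(i) of \cref{prop:result:high-to-low}), together with the closure argument placing the initial vertex in $\win^F$. A secondary technical point is well-posedness of $\zeta$ and of $(\tau_k)_k$: inside each interval $[\tau_k,\tau_{k+1})$ the CLF feedback confines $\zeta$ to $X_{w_k}$ by \cref{lemma:CLFBasedFeedback}, so there is no finite escape time, and one invokes the standard non-Zeno hypothesis on the boundary crossings of the sets associated with $\AP_S^+$ to guarantee that $\zeta$ is defined on all of $\R_+$ and that $\tau_k\to\infty$.
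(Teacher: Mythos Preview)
Your proposal is correct and follows the same skeleton as the paper's proof: construct a $\strat^F$-play in $\game^F$ that mirrors the closed-loop trajectory, argue that this play satisfies $\assumpPers(\perslivegroups^F)$, and then invoke \cref{prop:low-high-winnnig}. You are in fact more explicit than the paper in several places---you build the play concretely from the discontinuity points of $\labelfunc^+(\zeta(\cdot))\cup\ldist(\cdot)$ and verify the persistent live-group condition directly from \cref{lemma:CLFBasedFeedback} and \cref{def:RWAtoPolicy}, whereas the paper packages these steps into the black boxes \cref{lemma:mergedgame}, \cref{lemma:controlgraphtotal}, \cref{prop:result:high-to-low} and \cref{lemma:augmentedgamegraph}. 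One small omission in your edge-existence check: when you argue that $\strat^F(v_k)\to v_{k+1}$ is an edge of $\game^F$ you only treat the $\gamegraph^C$-component via items~\ref{item:def:controlgraph:3}--\ref{item:def:controlgraph:4} of \cref{def:controlgraph}; since $\game^F$ is the product of $\game^M$ and $\gamegraph^C$ (\cref{def:finalgame}), you also need the $\game^M$-component, which is precisely what totality of $\game^M$ (\cref{lemma:mergedgame}) provides and what the paper invokes alongside \cref{lemma:controlgraphtotal}.
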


The proof of \cref{Prop:FinalProposition} combines all correctness results established in \cref{sec:algo:high}-\cref{subsec:solveaugmentedgame}. 
\begin{proof}
Since the plays ending in $\p{0}$ dead-ends are not winning in a game and $\strat^F$ is a winning strategy in $\game^F$, no $\strat^F$-play ends in a $\p{0}$ dead-end. 
Then, by \cref{lemma:mergedgame} and \cref{lemma:controlgraphtotal}, all possible changes in $\labelfunc^+$ (triggered by applying control policies associated with $\allCLF$) and $\ldist$ are captured by the game graph $\gamegraph^C$.
In particular, every solution $\sol_{x,p,\ldist}$ corresponds to a play $\play = v_0v_1\cdots$ in $\game^F$ such that every change in $\labelfunc^+$ and $\ldist$ corresponds to a move by $\p{1}$ to a vertex with corresponding label in $\play$.
Furthermore, as $x\in X_{\text{win}}(\win^F)$ we have $v_0\in \win^F$.
Moreover, by \cref{def:hybridcontrolpolicy}, $\play$ is a $\strat^F$-play starting from the winning region $\win^F$ of game $\game^F$.
So, $\play$ is a winning play, and hence, it always stays in $\win^F$.
This implies, $\sol_{x,p,\ldist}(t)$ also belongs to $X_{\text{win}}(\win^F)$ for all $t\in\R_+$.

By the discussed correspondence between $\sol_{x,p,\ldist}$ and play $\play$, a trace $\trace$ generated by $\sol_{x,p,\ldist}$ under $\labelfunc$ is also the trace generated by the play $\play$.
Furthermore, every play in $\game^F$ corresponds to a play in the control graph $\gamegraph^C$ as in \cref{def:controlgraph}.
Moreover, by \cref{prop:result:high-to-low}, $\trace$ satisfies~\eqref{eq:feas_control}.
Then by \cref{lemma:augmentedgamegraph} and \cref{def:augmentedGame}, $\trace$ is generated by a play in $\gamegraph^F$ satisfying $\assumpPers(\perslivegroups^F)$.
Hence, $\play$ satisfies $\assumpPers(\perslivegroups^F)$.
Moreover, as $\play$ is a winning play in $\game^F$, by \cref{prop:low-high-winnnig}, trace $\trace$ satisfies the specification $\spec$.
\end{proof}

	\section{SYNTHESIS DETAILS: HIGH-LAYER}\label{subsec:DiscreteLayer}
	The previous section described our synthesis framework and established its ability to solve \cref{prob:MainProb} in \cref{Prop:FinalProposition}.
   	The main hypotheses in this statement are  the existence of
	\begin{enumerate}[leftmargin=*]
		\item a winning strategy for the final game $\game^F$, and
		\item  a CLF $w$ for each $\cRWA$.
	\end{enumerate}
	Within this section we give a novel algorithm to efficiently solving \emph{augmented parity games} constructed in \cref{sec:algo:low-to-high}, thus tackling the first point.
	The second hypothesis is treated in subsequent~\cref{sec:ImplementDetails}, which presents the construction of \emph{feedback control policies} implementing cRWA via CLFs used in \cref{sec:algo:high-to-low}, together with the proof of the well-posedness of the arising closed loop~\eqref{eq:FinalClosedLoop}.
	%!TEX root = ../main.tex

\subsection{Augmented Reachability Games}
While an \emph{augmented parity game} can be reduced to a Rabin game (by transforming each persistent group-liveness constraints into an additional Rabin pair) and then solving the resulting Rabin game using classical algorithms~\cite{rabinGames}, this method is computationally not tractable. This is due to the fact that existing algorithms are known to become intractable very quickly if the number of Rabin pairs grows. Therefore, we leverage the recent insight that local liveness constraints on the environment player typically fall into a class of synthesis problems that allow for an efficient direct synthesis procedure~\cite{progress_groups,banerjee2023fast}. The augmented games we consider are similar to the ones discussed by Sun et al.~\cite{progress_groups}. We, however, provide a novel algorithm that tackles the full class of parity games and thereby subsumes the restricted problem class considered in \cite{progress_groups}.

The practically most efficient known algorithm to solve classical (non-augmented) parity games is Zielonka's algorithm \cite{Zielonka98}. This algorithm recursively solves reachability games for both players to compute a winning region and a winning strategy of the controller player in the original parity game. In order to mimic Zielonka's algorithm for augmented games, we first discuss an algorithm to solve augmented \emph{reachability games}. From this, our new algorithm essentially follows as a corollary.

% 
% However, the general class of games that arise in this paper have been, to the best of our knowledge, not tackled to date. Hence, we provide a new algorithm to solve such augmented parity games.
% 
An \emph{augmented reachability game} is a tuple $\game=(\gamegraph, \spec, \perslivegroups)$ where the specification $\spec = \finally T$ is to finally reach a set $T\subseteq V$ of target vertices. The new recursive algorithm that solves an augmented reachability game $\game$ is given in \cref{alg:reach}. The main idea of the algorithm is to first compute the set of vertices $A$ from which $\p{0}$ can reach $T$ even without the help of any persistent live-group constraints (line~\ref{alg:reach:Attro}) along with the corresponding strategy $\sigma$ for $\p{0}$ (line~\ref{alg:reach:stratA}). Afterwards, the algorithm computes the set of states $B$ from which $\p{0}$ has a strategy (i.e.\ $\sigma_B$) to reach $A$ with the help of a persistent live-group (lines~\ref{alg:reach:if}-\ref{alg:reach:computeB}). If this set $B$ enlarges the winning state set $A$ (line~\ref{alg:reach:B}), we use recursion to solve another augmented reachability game with target $T:=A\cup B$ (line~\ref{alg:reach:recursion}).

\begin{algorithm}[t]
    \caption{$ \reach(\gamegraph, T, \perslivegroups) $}\label{alg:reach}
    \begin{algorithmic}[1]
        \Require An augmented game $\game = (\gamegraph,\spec,\perslivegroups)$ with $\spec = \finally T$
        \Ensure Winning region and winning strategy in the augmented game $\game$  
        \State Initialize a random $\p{0}$ strategy $\strat$
        \State $A,\strat_A \gets \Attro{\gamegraph}{T}$\label{alg:reach:Attro}
        \State $\strat(v) \gets \strat_A(v)$ for every $v\in A\setminus T$\label{alg:reach:stratA}
        \For{$(\perssource,\persedges,\perstarget) \in \perslivegroups$} 
            \If{$(\perssource\setminus A) \cap \pre(A) \neq \emptyset $} \label{alg:reach:if}
                \State $B,\strat_B \gets \solve(\gamegraph|_{\persedges}, \spec_B)$\\\hspace*{6em} with $\spec_B = \finally A \vee \globally (\perssource\setminus\perstarget)$\label{alg:reach:computeB}
                \If{$B \not\subseteq A$}\label{alg:reach:B}
                    \State $\strat(v) \gets \strat_B(v)$ for every $v\in B\setminus A$\label{alg:reach:stratB}
                    \State $C,\strat_C\gets \reach(\gamegraph, A\cup B, \perslivegroups)$ 
                    \State $\strat(v) \gets \strat_C(v)$ for every $v\in C\setminus (A\cup B)$
                    \State \Return $(C,\strat)$\label{alg:reach:recursion}
                \EndIf
            \EndIf
        \EndFor
        \State \Return $A,\strat$ \label{alg:reach:end}
    \end{algorithmic}
\end{algorithm}

Within \cref{alg:reach}, we use the following notation. Given a game graph $\gamegraph = (\vertices,\edges)$ and a persistent live-group $(\perssource,\persedges,\perstarget)$, we write $\gamegraph|_{\persedges}$ to denote the restricted game graph $(\vertices,\edges')$ such that $\edges'\subseteq\edges$ and for every edge $e = (\altvertex,\vertex)\in \edges'$, either $e\in\persedges$ or there is no edge in $\persedges$ starting from $\altvertex$.
Furthermore, $\pre(T)\subseteq V$ is the set of vertices from which there is an edge to $T$.

% We also use some standard methods to solve simple games.
For a set $T$ of vertices, the attractor function $\Attri{\gamegraph}{T}$ solves the (non-augmented) reachability game $(G,\finally T)$. I.e., it returns the attractor set $A:=\attri{\gamegraph}{T}\subseteq V$ and a  attractor strategy $\strat_A$ of $\p{i}$. Intuitively, $A$ collects all vertices from which $\p{i}$ has a strategy (i.e., $\strat_A$) to force every play starting in $A$ to visit $T$ in a finite number of steps.
Moreover, the function $\textsc{Solve}(\gamegraph,\spec)$ returns the winning region and a winning strategy in a game $(\gamegraph,\spec)$ with $\spec = \finally A \vee \globally \neg T$ for some $A,T\subseteq V$.  Both the functions $\textsc{ATTR}$ and $\textsc{Solve}$ solve classical synthesis problems with standard algorithms (see e.g.\ \cite{paritygames}). For the sake of a complete prove we note that \textsc{Solve} can be implemented using the following remark.

\begin{remark}\label{rem:safety}
Given a game $\game = (\gamegraph = (\vertices,\edges), \spec)$ where $\spec = \finally A \vee \globally S$ for some $A,S\subseteq V$, one can reduce the game to a smaller safety game $(\gamegraph', \spec' =\globally S')$, where $S' = S\cup\{v_A\}$ and $\gamegraph'$ is the game graph obtained from $\gamegraph$ by merging all vertices in $A$ to a single new sink vertex $\vertex_A$, i.e., all incoming edges to $A$ are retained but $\vertex_A$ has only one outgoing edge that is $(\vertex_A,\vertex_A)$. In such a game, the winning region is $\vertices\setminus\attrl{\gamegraph'}{V\setminus S'}$, see~\cite{paritygames}.
\end{remark}

% The main idea of the algorithm is to first compute the set of vertices from which $\p{0}$ can reach $T$ even without the persistent live-group constraints; and then find a persistent live-group using which $\p{0}$ can make a play from other vertices to reach this set. Finally, use recursion to solve another augmented game with objective to reach the current winning vertices, i.e, the vertices from which $\p{0}$ can make the play to visit $T$ using this persistent live-group constraint. 
% The correctness of the algorithm is stated in next theorem.
% 

With this, we can prove the correctness of \cref{alg:reach}.
 
\begin{restatable}{theorem}{revertexbuechi}\label{thm:persGame}
Given an augmented game  $\game = (\gamegraph,\spec,\perslivegroups)$ with $\spec = \finally T$, the algorithm $\reach(\gamegraph, T, \perslivegroups)$ returns the winning region and a winning strategy in game $\game$.
Moreover, the algorithm terminates in $\mathcal{O}(\abs{\perslivegroups}\cdot\abs{\vertices}\cdot\abs{\edges})$ time.
\end{restatable}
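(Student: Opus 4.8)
The plan is to prove the two directions of correctness (soundness and completeness of the computed winning region, plus the winning guarantee of the returned strategy) separately, and then bound the running time by a potential-function argument on the recursion.

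First I would establish \emph{soundness}: every vertex returned in the winning region is indeed winning, and the returned strategy $\strat$ witnesses this. This proceeds by induction on the recursion depth. At the base case the algorithm returns $(A,\strat)$ where $A = \attro{\gamegraph}{T}$; here $\strat$ restricted to $A\setminus T$ is the attractor strategy $\strat_A$, which forces every play to reach $T$ in finitely many steps regardless of the persistent live-groups, so these vertices are winning for $\spec=\finally T$ even in the non-augmented game. For the recursive case, suppose the algorithm enters the \textbf{if}-branch for some $(\perssource,\persedges,\perstarget)\in\perslivegroups$ with $B\not\subseteq A$. The key claim is that from every $v\in B\setminus A$, the strategy $\strat_B$ (which plays only edges of $\persedges$ when available, i.e.\ lives in $\gamegraph|_{\persedges}$) guarantees $\finally A \vee \globally(\perssource\setminus\perstarget)$; by definition of $\gamegraph|_{\persedges}$ any $\strat_B$-play that stays forever in $\perssource\setminus\perstarget$ in fact satisfies $\globally(\perssource \wedge \assumpC(\persedges))$ without ever reaching $\perstarget$, which contradicts $\assumpPers(\perssource,\persedges,\perstarget)$. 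Hence under the assumption $\assumpPers(\perslivegroups)$, every such play actually reaches $A$. So from $B$, $\p{0}$ reaches $A$, and from $A$ it reaches $T$; combined with the inductive hypothesis applied to the recursive call $\reach(\gamegraph, A\cup B,\perslivegroups)$, the returned set $C$ is winning and $\strat$ (patched together from $\strat_A$, $\strat_B$, $\strat_C$ on the disjoint pieces $A\setminus T$, $B\setminus A$, $C\setminus(A\cup B)$) is winning. One must check the patched strategy is well-defined (the three domains are pairwise disjoint) and that once a $\strat$-play leaves one region it does not come back in a way that breaks progress — this follows because $A$ is an attractor (closed under $\p{1}$ moves once inside) and the recursion's target contains $A\cup B$.

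Second I would establish \emph{completeness}: every vertex \emph{not} in the returned region is winning for $\p{1}$ (i.e.\ $\p{1}$ has a strategy, consistent with $\assumpPers(\perslivegroups)$, to avoid $T$ forever). This is the more delicate direction and I expect it to be the main obstacle. The argument is again by induction on the recursion. When the algorithm returns at line~\ref{alg:reach:end} (i.e.\ the \textbf{for}-loop completes without a strictly enlarging $B$), I must show $V\setminus A$ is won by $\p{1}$. Outside $A=\attro{\gamegraph}{T}$, $\p{1}$ can trap the play in $V\setminus A$ using the standard attractor-complement strategy; the new content is that this trapping is compatible with $\assumpPers(\perslivegroups)$. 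For each $(\perssource,\persedges,\perstarget)$, either $(\perssource\setminus A)\cap\pre(A)=\emptyset$ — so from $\perssource\setminus A$, staying in $V\setminus A$ is automatic and in particular the play never reaches $\perstarget\subseteq\perssource$ only if it stays in $\perssource\setminus\perstarget$, but then $\p{1}$ need not honour the live-group because... — here is exactly the subtle point: I need that $\p{1}$ can always \emph{legitimately} refuse to make progress, which holds precisely because the live-group obligation is conditioned on $\p{0}$ persistently choosing $\persedges$, and when the game is trapped outside $A$, either $\p{0}$ is not in $\perssource$, or the relevant $B$-computation would have enlarged $A$ (contradicting that we reached line~\ref{alg:reach:end}), so $\solve(\gamegraph|_{\persedges},\finally A\vee\globally(\perssource\setminus\perstarget))$ returned $B\subseteq A$, meaning from $V\setminus A$ the environment wins the subgame $\globally\neg A \wedge \finally\globally\neg\perstarget$ even restricted to $\persedges$ — giving $\p{1}$ a concrete strategy satisfying the live-group vacuously or by eventually leaving $\perssource$. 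Assembling these per-group $\p{1}$ strategies into one requires care since they operate on overlapping vertex sets; I would argue that the attractor-complement strategy for $V\setminus A$ dominates, and within it the $B$-subgame strategies certify the live-group conditions. For the recursive case, if the algorithm recurses with target $A\cup B$, then by induction $V\setminus C$ is won by $\p{1}$ against target $A\cup B \supseteq T$, which immediately gives the same for target $T$.

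Finally, for the \emph{complexity bound}: each recursive call strictly enlarges the target set (we only recurse when $B\not\subseteq A$, so $A\cup B\supsetneq T$), hence there are at most $\abs{\vertices}$ nested recursive calls. Within one call (not counting the recursion), the work is: one attractor computation, $O(\abs{\edges})$; and a loop over $\perslivegroups$, each iteration doing a $\pre$ check and one $\solve$ call on $\gamegraph|_{\persedges}$, which by Remark~\ref{rem:safety} reduces to a safety game solvable via a single attractor computation in $O(\abs{\edges})$. So each call costs $O(\abs{\perslivegroups}\cdot\abs{\edges})$, and with at most $\abs{\vertices}$ calls the total is $O(\abs{\perslivegroups}\cdot\abs{\vertices}\cdot\abs{\edges})$ as claimed. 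I would also note termination follows from the same strictly-increasing-target observation.
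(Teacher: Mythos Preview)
Your proposal is correct and follows essentially the same approach as the paper: induction on the recursion depth, with soundness via the attractor plus the live-group argument (a $\strat_B$-play that never reaches $A$ must stay in $\perssource\setminus\perstarget$ while satisfying $\assumpC(\persedges)$, hence violates $\assumpPers$), completeness via determinacy of the $\spec_B$ subgame, and the same potential argument for complexity. One small slip: the complement of $\spec_B=\finally A\vee\globally(\perssource\setminus\perstarget)$ is $\globally\neg A\wedge\finally(\perstarget\cup(V\setminus\perssource))$, not $\globally\neg A\wedge\finally\globally\neg\perstarget$ as you wrote; the correct form is what actually certifies the live-group constraint (the play either reaches $\perstarget$ or leaves $\perssource$). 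You are also right to flag the issue of assembling the per-group $\p{1}$ strategies into a single one satisfying all of $\assumpPers(\perslivegroups)$ simultaneously; the paper's proof is no more careful on this point than your sketch.
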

\begin{proof}
    Suppose $\win$ be the winning region in the augmented game $\game$. 
    Using induction on the number of times $\reach(\cdot)$ is called, we show that the set returned by the algorithm is indeed $\win$, and the updated  strategy $\strat$ returned by the algorithm is a winning strategy in $\game$. 
    
    \paragraph*{Base case:}
    If $\reach(\cdot)$ is never called, i.e., the algorithm returned $(A,\strat)$ in line~\ref{alg:reach:end}. Hence, we need to show that $A=\win$.
    
    First, let us show that $A \subseteq \win$.
    By the definition of attractor function $\Attro{\gamegraph}{T}$, every $\strat_A$-play from $A$ eventually visits $T$, and hence, satisfies $\spec$ (which is stronger than $\assumpPers(\perslivegroups) \Rightarrow \spec$). Therefore, every vertex in $A$ is trivially winning in $\game$, and hence, $A \subseteq \win$.
    
    Now, for the other direction, suppose $\vertex$ be a vertex such that $\vertex\not \in A$. It is enough to show that $\vertex\not \in \win$.
    As $\vertex\not \in A = \attro{\gamegraph}{T}$, $\p{0}$ can not force the plays to visit~$T$. 
    If $q\not\in \perssource$ for every $(\perssource,\persedges,\perstarget)\in \perslivegroups$, then the persistent group-liveness constraints are not relevant for vertex $\vertex$. 
    % Hence, $\vertex\not\in\win$. 
    Now, suppose $\vertex\in\perssource$ for some $(\perssource,\persedges,\perstarget)\in \perslivegroups$.
    As the algorithm did not reach line~\ref{alg:reach:recursion}, for every persistent live-group, one of the conditional statements, the one in line~\ref{alg:reach:if} or the one in line~\ref{alg:reach:B}, is not satisfied.
    If the statement in line~\ref{alg:reach:if} is not satisfied, i.e., $(\perssource\setminus A) \cap \pre(A) = \emptyset $, then there is no edge from $\perssource\setminus A$ to $A$, and hence, this persistent live-group constraint does not help in reaching $A$ from $\vertices\setminus A$ anyway. 
    
    Next, if the statement in line~\ref{alg:reach:if} is not satisfied, then it holds that $B \subseteq A$. Hence, $\vertex\not \in B$. 
    As $B$ is the winning region for game $(\gamegraph|_{\persedges}, \spec_B)$ and such a game is determined~\cite{paritygames}, $\p{1}$ has a strategy $\strat_1$ such that every $\strat_1$-play in this game starting from $\vertex$ satisfies $\neg \spec_B = \globally \neg A \wedge \finally (\perstarget\cup \vertices\setminus\perssource)$. Therefore, every $\strat_1$-play trivially satisfies $\assumpPers(\perssource,\persedges,\perstarget)$ without ever reaching~$A$. Hence, if $\p{1}$ sticks to strategy $\strat_1$, $\p{0}$ can not make the plays from $\vertex$ visit $A \supseteq T$ using this constraint. 
    Therefore, in any case, $\p{0}$ has no strategy that can enforce a play from $\vertex$ to satisfy $\assumpPers(\perslivegroups) \Rightarrow \finally T$. Hence, $\vertex\not \in \win$.

    Now, let us show that the returned strategy $\strat$ is indeed a winning strategy in $\game$. 
    As $\strat_A$ is the attractor strategy to reach $T$, line~\ref{alg:reach:stratA}, it is easy to verify that every $\strat$-play starting from $A\setminus T$ eventually visits $T$, and hence satisfies $\spec$.
    Therefore, every $\strat$-play from $A$ is winning.

    \paragraph*{Induction case:}
    Suppose the algorithm returned $(C,\strat)$ in line~\ref{alg:reach:recursion} for some $(\perssource,\persedges,\perstarget)\in \perslivegroups$. By induction hypothesis, $C$ is the winning region and $\strat_C$ is a  winning strategy in the augmented game $\game_C = (\gamegraph,\spec_C, \perslivegroups)$ with $\spec_C = \finally  (A\cup B)$. 
    
    First, let us show that $\win \subseteq C$.
    By the definition of attractor set $\attro{\gamegraph}{\cdot}$, it is easy to see that $T\subseteq A$. So, every play in $\gamegraph$ satisfies $\finally T \Rightarrow \finally (A\cup B)$. Therefore, a winning play in augmented game $(\gamegraph, T, \perslivegroups)$ is also winning in augmented game $(\gamegraph, A\cup B, \perslivegroups)$. Therefore, $\win \subseteq C$.

    Now, for the other direction, let us first show that $B \subseteq \win$.
    As $\strat_B$ is a winning strategy in game $\game_B$, every $\strat_B$-play $\play$ starting in $B$ satisfies $\spec_B$. 
    By definition of $\spec_B$, either $\play$ satisfies $\finally A$ or it satisfies $\globally (\perssource\setminus\perstarget)$.
    Furthermore, as $\play$ is a play in $\gamegraph|_{\persedges}$, it satisfies 
    $\globally (\perssource\wedge \assumpC(\persedges))$.
    Hence, if $\play$ satisfies $\assumpPers(\perssource,\persedges,\perstarget)$, then it also satisfies $\finally \perstarget$.
    Therefore, $\play$ can not satisfy both $\assumpPers(\perssource,\persedges,\perstarget)$ and  $\globally (\perssource\setminus\perstarget)$. As a consequence, $\play$ satisfies $\assumpPers(\perssource,\persedges,\perstarget) \Rightarrow \finally A$.
    Furthermore, as we know, $A\subseteq \win$. 
    Therefore, $\play$ satisfies $\finally A \Rightarrow \finally \win$, and hence, satisfies $\assumpPers(\perssource,\persedges,\perstarget) \Rightarrow \finally \win$. So, every $\strat_B$-play starting in $B$ satisfies $\assumpPers(\perslivegroups) \Rightarrow \finally \win$. 
    Then, one can construct a $\p{0}$ strategy $\strat_0$ (i.e., the one that uses $\strat_B$ until the play reaches the winning region~$\win$ of game $\game$, and then switches to a winning strategy of game $\game$) such that every $\strat_0$-play starting in $B$ satisfies the following
    \[(\assumpPers(\perslivegroups)\Rightarrow \finally \win) \wedge \globally (\win \wedge \assumpPers(\perslivegroups)\Rightarrow \finally T),\]
    and hence, satisfies $\assumpPers(\perslivegroups) \Rightarrow \finally T$. Therefore, $B \subseteq \win$.

    Now, let us the other direction for induction case, i.e., $C\subseteq \win$. 
    As $B\subseteq \win$ and $A\subseteq \win$ as proven by the arguments given in base case, 
    it holds that $A\cup B \subseteq \win$.
    So, every play in $\gamegraph$ satisfies $\finally (A\cup B) \Rightarrow\finally \win$.
    Furthermore, as $\strat_C$ is a winning strategy in game $\game_C$, every $\strat_C$-play starting in $C$ satisfies $\assumpPers(\perslivegroups)\Rightarrow \finally (A\cup B)$, and hence, satisfies $\assumpPers(\perslivegroups)\Rightarrow \finally \win$.
    Then, as in the last paragraph, one can construct a $\p{0}$ strategy $\strat_0$ (i.e., the one that uses $\strat_C$ until the play reaches the winning region~$\win$ of game $\game$, and then switches to a winning strategy of game $\game$) such that every $\strat_0$-play starting in $C$ satisfies the following
    \[(\assumpPers(\perslivegroups)\Rightarrow \finally \win) \wedge \globally (\win \wedge \assumpPers(\perslivegroups)\Rightarrow \finally T).\]
    Hence, every $\strat_0$-play starting in $C$ satisfies $\assumpPers(\perslivegroups)\Rightarrow \finally T$. Therefore, $C \subseteq \win$.

    Now, let us show that the returned strategy $\strat$ in \cref{alg:reach:recursion} is also a winning strategy in game $\game$.
    As $\strat$ is follows strategy $\strat_C$ for vertices in  $C\setminus (A\cup B)$, every $\strat$-play from $C\setminus (A\cup B)$ eventually visits $A\cup B$ when $\assumpPers(\perslivegroups)$ holds.
    Now, let $\strat_M$ be the updated strategy until line~\ref{alg:reach:stratB}.
    Then, from line~\ref{alg:reach:stratA},\ref{alg:reach:stratB}, it is easy to see that $\strat(v) = \strat_M(v)$ for every vertex $v$ in $A\cup B$.
    As $\strat_B$ is a winning strategy in game $\game_B$, using line~\ref{alg:reach:stratB} and the discussion above, every $\strat$-play from $B\setminus A$ eventually visits $A$ when $\assumpPers(\perslivegroups)$ holds.
    Then, using arguments of base case, every $\strat$-play from $A\setminus T$ eventually visits $T$.
    Therefore, in total, as $\strat$ is a  strategy, every $\strat$-play from $C$ eventually visits $T$ when $\assumpPers(\perslivegroups)$ holds.
    Hence, $\strat$ is indeed a winning strategy in game $\game$.
    
    \paragraph*{Time complexity:}
    Let $k$ be the number of times $\reach(\cdot)$ is called. If $T=\vertices$, then $A = \vertices$, and hence, $\perssource\setminus A = \emptyset$ for every $(\perssource,\persedges,\perstarget)\in \perslivegroups$, and hence, $\reach(\cdot)$ will never be called. Furthermore, if $T\neq \vertices$, then, by definition of $\attro{\gamegraph}{\cdot}$, it holds that $T\subseteq A$. So, in line~\ref{alg:reach:if}, we keep adding at least one vertex to the target for the next call of $\reach(\cdot)$. Hence, $k$ can be at most $\abs{\vertices}$.
    Moreover, in each iteration, we might need to solve game $(\gamegraph|_{\persedges}, \spec_B)$ for each $(\perssource,\persedges,\perstarget)\in \perslivegroups$; and using \cref{rem:safety}, solving such a game can be reduced to computing an attractor function $\attrl{\gamegraph}{\cdot}$. As computing such an attractor function takes
    $\mathcal{O}(\abs{\edges})$ time~\cite{paritygames}, the algorithm takes $\mathcal{O}(\abs{\perslivegroups}\cdot\abs{\vertices}\cdot\abs{\edges})$ time in total.
\end{proof}

\subsection{Augmented Parity Games}
Zielonka's algorithm~\cite{Zielonka98} solves classical parity games by recursively using 
attractor functions $\Attro{\gamegraph}{T}$ and $\Attrl{\gamegraph}{T}$. %; and $\Attro{\gamegraph}{T}$ in such games returns the the winning region and a winning strategy in the reachability game with specification $\finally T$. 
The only difference between the attractor function $\Attro{\gamegraph}{T}$ and our new function $\reach(\gamegraph,T,\perslivegroups)$ from \cref{alg:reach} is the utilization of augmented live groups to solve reachability games. To solve an augmented parity game $(\gamegraph, \spec, \perslivegroups)$, one can therefore simply replace every use of $\Attro{\gamegraph}{T}$ with $\reach(\gamegraph,T,\perslivegroups)$ within Zielonka's algorithm. Due to \cref{thm:persGame}, the resulting algorithm correctly solves augmented parity games and returns a  strategy, summarized in the following corollary.
% Therefore, we have the following results for augmented parity games which is similar to that of parity games.
\begin{corollary}\label{corollary:solve_augmeneted_parity}
An augmented parity game with game graph $(\vertices,\edges,\labelfuncG)$ and priority function $\priority\colon\vertices\to[0,d]$ can be solved in $\mathcal{O}\left(\abs{\perslivegroups}\cdot\abs{\vertices}^{d+\mathcal{O}(1)}\right)$ time.
\end{corollary}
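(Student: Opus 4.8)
The plan is to run Zielonka's recursive parity-game algorithm~\cite{Zielonka98} essentially verbatim, with the single modification that every invocation of the $\p{0}$-attractor $\Attro{\gamegraph}{T}$ is replaced by a call to $\reach(\gamegraph,T,\perslivegroups)$ from \cref{alg:reach}, while the $\p{1}$-attractor $\Attrl{\gamegraph}{T}$ and the recursive bookkeeping stay as they are (with $\perslivegroups$ carried along the recursion, restricted to the current sub-arena). By \cref{def:augmentedGame} the object to be solved is the game $(\gamegraph,\assumpPers(\perslivegroups)\Rightarrow\paritygame(\priority))$, and \cref{thm:persGame} already tells us that $\reach(\gamegraph,T,\perslivegroups)$ returns exactly the winning region and a memoryless winning strategy of the augmented reachability game $(\gamegraph,\finally T,\perslivegroups)$, i.e.\ the set of vertices from which $\p{0}$ can enforce $\assumpPers(\perslivegroups)\Rightarrow\finally T$. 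So the content of the corollary is that this drop-in replacement is sound and does not blow up the complexity.

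First I would isolate the three structural facts about $\reach$ that Zielonka's induction actually uses, and read each of them off \cref{thm:persGame} together with Borel determinacy of the augmented reachability game: (i) $T\subseteq\reach(\gamegraph,T,\perslivegroups)$ and on this set $\p{0}$ has a memoryless strategy enforcing $\assumpPers(\perslivegroups)\Rightarrow\finally T$; (ii) on the complement $W:=\vertices\setminus\reach(\gamegraph,T,\perslivegroups)$, $\p{1}$ has a strategy enforcing $\assumpPers(\perslivegroups)\wedge\globally\neg T$, and any such strategy necessarily keeps the play inside $W$ (otherwise it would enter a region where $\p{0}$ wins, contradicting determinacy); and (iii) consequently $\gamegraph$ restricted to $W$ is again a game arena (every vertex of $W$ has a successor in $W$) on which the restricted persistent live-groups have unchanged semantics for plays staying in $W$. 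Fact (iii) is the classical ``trap / attractor-closure'' property that makes the recursive call on $\gamegraph\setminus\reach(\cdots)$ well defined, exactly as $\gamegraph\setminus\Attro{\gamegraph}{U}$ is in Zielonka's original proof.

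Second I would carry out the standard Zielonka correctness argument by induction on the pair (largest priority $d$, number of vertices). Letting $p\in\{0,1\}$ be the parity of $d$ and $U=\priority^{-1}(d)$: if $p=0$, set $A:=\reach(\gamegraph,U,\perslivegroups)$, recurse on $\gamegraph\setminus A$, and combine the returned strategies precisely as in Zielonka using (i)--(iii); the only change in the reasoning is that every statement ``$\p{0}$ attracts to $\cdots$'' is read as ``$\p{0}$ attracts to $\cdots$ assuming $\assumpPers(\perslivegroups)$'', which is consistent throughout because $\assumpPers(\perslivegroups)$ is a global assumption preserved under taking sub-arenas. If $p=1$, the top-level step uses the ordinary $\Attrl{\gamegraph}{U}$; here one must check that plain $\p{1}$-attraction is still sound in the augmented setting, which holds because $\assumpPers(\perslivegroups)$ appears as an antecedent of $\p{1}$'s winning condition and is a $\globally(\cdots)$-formula that a finite reachability prefix can never spoil, so $\p{1}$ may reach the target sub-region unconditionally and only then switch to its $\assumpPers$-respecting winning strategy there. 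Gluing memoryless strategies across the finitely many recursion levels yields a single memoryless winning strategy for $\p{0}$ on the computed region, and the partition of $\vertices$ into the two winning regions follows as usual. For the running time I would argue, as in the classical analysis, that at each priority level the vertex set strictly shrinks before recursion re-enters at that level, so the total number of attractor calls is $\mathcal{O}(\abs{\vertices}^{d})$; each ordinary $\p{1}$-attractor costs $\mathcal{O}(\abs{\edges})$ and each call to $\reach$ costs $\mathcal{O}(\abs{\perslivegroups}\cdot\abs{\vertices}\cdot\abs{\edges})$ by \cref{thm:persGame}, and multiplying (using $\abs{\edges}\le\abs{\vertices}^{2}$) gives $\mathcal{O}\!\left(\abs{\perslivegroups}\cdot\abs{\vertices}^{d+\mathcal{O}(1)}\right)$.

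I expect the main obstacle to be making fact (iii) fully rigorous: verifying that replacing the classical $\p{0}$-attractor by $\reach$ preserves the sub-arena / trap structure that Zielonka's recursion depends on, and that the live-groups passed down to recursive calls (their natural restriction to the current sub-arena) are semantically the correct ones, i.e.\ that restriction does not change the truth value of $\assumpPers$ on the relevant plays. Once this is pinned down, the remainder is a line-by-line transcription of the standard Zielonka correctness and complexity proofs with \cref{thm:persGame} substituted for the $\p{0}$-attractor.
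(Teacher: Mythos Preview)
Your proposal is correct and follows exactly the approach the paper takes: run Zielonka's algorithm with every call to $\Attro{\gamegraph}{T}$ replaced by $\reach(\gamegraph,T,\perslivegroups)$, appeal to \cref{thm:persGame} for correctness of each such call, and inherit the $\mathcal{O}(\abs{\vertices}^{d})$ recursion-depth bound from the classical analysis. In fact the paper treats this corollary as an immediate consequence of the preceding discussion and gives no detailed proof at all; your write-up is considerably more careful than the paper's one-line justification, and the technical point you flag as the main obstacle (that the complement of $\reach$ is a $\p{1}$-trap so the recursive sub-arena is well formed) is exactly the gap the paper leaves implicit.
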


% In our implementation, we use edge-labelled games (and game graphs), i.e, the labelling function $\labelfunc\colon \edges \rightarrow 2^\AP$ maps each edge to a set of propositions, instead of state-labelled ones as defined in \cref{section:games_strategy_template}.
% That is because the parity games constructed from LTL formulas (using \texttt{ltlsynt}~\cite{spot_ltlsynt}) are usually smaller and more succinct when they are edge-labelled.
% However, note that \cref{lemma:strategytemplate} and \cref{thm:persGame} use algorithms for state-labelled games.
% Hence, we use these results on edge-labelled games by converting them into state-labelled ones using standard methods, i.e, by adding a dummy vertex for each edge.

	\section{SYNTHESIS DETAILS: LOW-LEVEL}\label{sec:ImplementDetails}

	This section illustrates an efficient and flexible numerical method to design CLFs which can then be used to design feedback-control policies via \cref{lemma:CLFBasedFeedback}. We show that the arising closed-loop exhibits existence of solutions from every feasible initial point and we discuss boundedness of solutions.
	\subsection{Synthesis of Control Policies from cRWAs}\label{subsec:ContinuousLayer}
	%!TEX root = ../main.tex

% 
It is well-known that the problem of synthesizing CLFs (in the sense of \cref{sec:algo:high-to-low}) for general nonlinear control systems (as in Definition~\ref{defn:ControlSystem}) over a generic state space $X\subseteq\R^{n_x}$ solving a generic cRWA problem $\RWA=(\contextRWA,\reachRWA,\avoidRWA)$ is numerically intractable~\cite{BloTsi2000}. For this reason, particular characteristics of the system and its dynamics need to be exploited for tractability.
In this section, we therefore restrict the discussion to systems with \emph{affine dynamics}, as mature computational solutions exist for this systems class. In particular, we present a novel approach to controller synthesis for cRWA problems over affine dynamical systems, by means of semidefinite optimization, considering a class of quadratic control Lyapunov functions.

While this only gives a construction for the top-down interface in \cref{sec:algo:high-to-low} for affine dynamical systems, we note that our overall hybrid controller synthesis approach discussed in \cref{sec:ControlStrategy} and summarized in \cref{fig:overview} can be applied to any dynamical system for which the generated cRWA problem can be solved. In particular, recent optimization-based approaches for enforcing logical constraints on more general nonlinear systems (see, e.g.~\cite{Xiao2021,Jagtap2021,Clark21}) can be utilized. We leave the integration of these methods into our synthesis framework for future work.
\begin{assumption}\label{Assump:AffineSubsystems}
 The control system $\cS=(X,U,f)$ has \emph{affine  dynamics} of the form
\begin{equation}\label{eq:AffineDynamics}
f(x,u)\coloneqq Ax+Bu+g,
\end{equation} 
for some $A\in \R^{n_x\times n_x}$, $B\in \R^{n_x\times n_u}$ and $g\in \R^{n_x}$. Moreover, we suppose that the input space is a convex polytope, i.e. $U=\polyhedron(p_U,H_U):=\{x\in \R^{n_x}~:~ H_U^\top (x-p_U)\leq_c\mathbf{1}\}$, for some $h_U$ and $H_U$ of appropriate dimensions.
\end{assumption}
In addition, we restrict the shape of the state-space regions linked to state propositions $\AP_S$.
\begin{assumption}\label{Assump:PropositionShape}
Given a state proposition $\Tc\in\AP_S$ its corresponding state-space region is either ellipsoidal of the type $\ellipsoid(q,S) = \{x\in \R^{n_x}~:~(x-q)^\top S(x-q)\leq1\}$ or a convex polytope $\polyhedron(p,H) =\{x\in \R^{n_x}~:~ H^\top (x-p)\leq_c\mathbf{1}\}$, where $S\in\R^{n_x\times n_x}$ is a symmetric positive semidefinite matrix, $q,p\in\R^{n_x}$ are vectors and $H\in\R^{n_x\times m}$. 
\end{assumption}

Under these assumptions, instead of searching for control Lyapunov functions all over the set of $\scrC^1$ functions, we restrict our search to \emph{quadratic functions} of the form 
\begin{equation}\label{eq:quad_lyap_fun}
	w(x)=(x-x_c)^\top P (x-x_c),
\end{equation}
where $x_c\in X$ is the \emph{center of $w$} and $P\in\R^{n_x\times n_x}$, $P\succ 0$. 

 Inspired by the results in~\cite{He2020}, we present a method to design a CLF $w(x)$ in the form of \eqref{eq:quad_lyap_fun} \emph{associated} with a cRWA problem $\RWA=(\contextRWA,\reachRWA,\avoidRWA)$ (as in \cref{def:RWAtoPolicy}) in three steps:
\begin{enumerate}[label=(\textbf{\Alph*})]
	\item \emph{Find $x_c$} such that $\reachRWA\subset \labelfunc(x_c)$  and $\avoidRWA\cap \labelfunc(x_c) = \emptyset$.\label{Item:1Control}
	\item \emph{Find a safe set} $\safeset\subseteq X$ such that $x_c\in\safeset$ and $\avoidRWA\cap \labelfunc(x) = \emptyset$ for all $x\in\safeset$.\label{Item:2Control}
	\item \emph{Construct a CLF $w$} such that its basin of attraction is safe, i.e., $X_w\subseteq \safeset$. \label{Item:3Control}
\end{enumerate}
These steps must be performed with awareness of the context $\contextRWA$ and the changes that it causes in the continuous state space.
 First, Item~\ref{Item:1Control} is
a necessary condition for the existence of a CLF that generates a feasible controller for $\RWA$. However, given that the set difference between the convex regions where $\reachRWA$ and $\avoidRWA$ hold is potentially non-convex, checking whether such $x_c$ exists is a very difficult problem. To avoid resorting to global optimization strategies such as branch-and-bound algorithms, we introduce another assumption.
\begin{assumption}\label{assum:TargetAssump}
	Given a cRWA problem $\RWA=(\contextRWA,\reachRWA,\avoidRWA)$, for all $x\in X$ such that  $\reachRWA\subset \labelfunc(x)$  we have $x\notin \mathcal{E}_\avoidRWA$, where $\mathcal{E}_\avoidRWA\subset 2^X$ is an ellipsoidal regions associated with a proposition in $\avoidRWA$.
\end{assumption}
Assumption~\ref{assum:TargetAssump} requires that any ellipsoidal set that is to be avoided in $\RWA$ does not intersect the region associated to $\reachRWA$, i.e. the region to be reached. In prctice, if it is not the case, one can replace ellipsoidal obstacles by polytopic over-approximations.
\begin{lemma}\label{lemma:TechnicalLMI1}
A point $x_c$ satisfying Item~\ref{Item:1Control} exists if the following optimization problem is feasible:
\begin{align}
	\textrm{}&\qquad x_c\in X\subset \R^{n_x}\quad \text{s.t.}\\
	\forall~ \ellipsoid_i(q_r,S_r)\in \mathcal{E}_\reachRWA  &\qquad \begin{bmatrix}
		1 & \bullet\\
		x_c-q_r & S_r^{-1}
	\end{bmatrix}\succ0,\label{eq:find_xc:reach_e}
	\end{align}
	\begin{align}
	\forall~ \polyhedron_j(p_r,H_r)\in \mathcal{P}_\reachRWA,  &\qquad H_r^\top (x_c-p_r) < \mathbf{1},\label{eq:find_xc:reach_p}\\
	\quad \forall~ \polyhedron_k(p_a,H_a)\in \mathcal{P}_\avoidRWA.  &\qquad\|H_a^\top (x_c-p_a)\|_\infty>1,\label{eq:find_xc:avoid_p}\\
	\exists u_c\in U\subseteq\R^{n_u}&\qquad Ax_c+Bu_c+g=0, \label{eq:find_xc:equilibrium}
\end{align}
where $\mathcal{E}_\reachRWA$ and $\mathcal{P}_\reachRWA$ are respectively the set of ellipsoids and polytopes associated with propositions in $\reachRWA$ while $\mathcal{P}_\avoidRWA$ is the set of polytopic sets associated with propositions in $\avoidRWA$.
\end{lemma}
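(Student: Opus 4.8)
The plan is to show that any pair $(x_c,u_c)$ that is feasible for \eqref{eq:find_xc:reach_e}-\eqref{eq:find_xc:equilibrium} already yields a point $x_c$ meeting both parts of Item~\ref{Item:1Control}, namely $\reachRWA\subseteq\labelfunc(x_c)$ and $\avoidRWA\cap\labelfunc(x_c)=\emptyset$. Since the first line of the program forces $x_c\in X$, the label $\labelfunc(x_c)=\{\Tc\in\AP_S\mid x_c\in\Tc\}$ is well defined, and by \cref{Assump:PropositionShape} every proposition region is either an ellipsoid $\ellipsoid(q,S)$ or a polytope $\polyhedron(p,H)$. I would therefore split $\reachRWA$ into its ellipsoidal part $\mathcal E_\reachRWA$ and its polytopic part $\mathcal P_\reachRWA$, split $\avoidRWA$ into $\mathcal E_\avoidRWA$ and $\mathcal P_\avoidRWA$, and dispatch each of the four groups to the matching constraint family, with the ellipsoidal obstacles handled separately via \cref{assum:TargetAssump}.

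For the \emph{reach} part: for each $\ellipsoid(q_r,S_r)\in\mathcal E_\reachRWA$ with $S_r\succ0$, I would take the Schur complement of \eqref{eq:find_xc:reach_e} with respect to its $S_r^{-1}$ block, which is equivalent to the scalar inequality $(x_c-q_r)^\top S_r(x_c-q_r)<1$; hence $x_c$ lies in the interior of $\ellipsoid(q_r,S_r)$, in particular $x_c\in\ellipsoid(q_r,S_r)$. For each $\polyhedron(p_r,H_r)\in\mathcal P_\reachRWA$, constraint \eqref{eq:find_xc:reach_p} reads $H_r^\top(x_c-p_r)<_c\mathbf 1$, which implies $H_r^\top(x_c-p_r)\leq_c\mathbf 1$, i.e.\ membership in $\polyhedron(p_r,H_r)$. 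Thus $x_c$ lies in every region associated with a proposition of $\reachRWA$, i.e.\ $\reachRWA\subseteq\labelfunc(x_c)$.

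For the \emph{avoid} part: for each polytopic obstacle $\polyhedron(p_a,H_a)\in\mathcal P_\avoidRWA$, constraint \eqref{eq:find_xc:avoid_p} gives $\|H_a^\top(x_c-p_a)\|_\infty>1$, so at least one of the facet inequalities defining $\polyhedron(p_a,H_a)$ is strictly violated, whence $x_c\notin\polyhedron(p_a,H_a)$. For the ellipsoidal obstacles I would invoke \cref{assum:TargetAssump}: having just established $\reachRWA\subseteq\labelfunc(x_c)$, the assumption yields $x_c\notin\mathcal E_\avoidRWA$ for every such ellipsoidal region, with no extra constraint needed in the program. Combining the two cases, no proposition of $\avoidRWA$ holds at $x_c$, so $\avoidRWA\cap\labelfunc(x_c)=\emptyset$; together with the previous paragraph this shows $x_c$ satisfies Item~\ref{Item:1Control}. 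Note that the leftover constraint \eqref{eq:find_xc:equilibrium} (with $u_c\in U$) plays no role in Item~\ref{Item:1Control} as stated; it is there only to additionally certify that $x_c$ is a controlled equilibrium of \eqref{eq:AffineDynamics}, which is what will make $x_c$ admissible as the centre of a quadratic CLF in steps~\ref{Item:2Control}-\ref{Item:3Control}.

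I expect the only delicate point to be the polytopic-obstacle step: $\|v\|_\infty>1$ contradicts $v\leq_c\mathbf 1$ only when the row list of $H_a$ contains each facet normal in both orientations (as is the case for boxes and other centrally represented polytopes), so I would make this the standing convention on how obstacle polytopes are encoded, or equivalently read \eqref{eq:find_xc:avoid_p} as $\max_k[H_a^\top(x_c-p_a)]_k>1$. A minor secondary point is the well-posedness of \eqref{eq:find_xc:reach_e}, which requires $S_r$ to be invertible; this is guaranteed whenever the ellipsoidal target region is bounded, which is precisely the setting in which this LMI form is used. Everything else is a routine application of the Schur complement and of the definitions of $\ellipsoid(\cdot,\cdot)$, $\polyhedron(\cdot,\cdot)$, and $\labelfunc$.
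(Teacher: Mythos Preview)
Your proposal is correct and follows essentially the same approach as the paper: Schur complement on \eqref{eq:find_xc:reach_e} to recover ellipsoid membership, direct reading of \eqref{eq:find_xc:reach_p} and \eqref{eq:find_xc:avoid_p} as polytope membership/non-membership, and the observation that \eqref{eq:find_xc:equilibrium} only certifies that $x_c$ is a controlled equilibrium for the subsequent steps. If anything, you are more explicit than the paper, which does not spell out the role of \cref{assum:TargetAssump} for the ellipsoidal obstacles in its proof and does not comment on the $\|\cdot\|_\infty$ subtlety you flag for \eqref{eq:find_xc:avoid_p}; both remarks are valid refinements rather than deviations.
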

\begin{proof}%
Applying the Schur Complement Lemma \cite[p.~7]{boyd1994linear}, \eqref{eq:find_xc:reach_e} becomes exactly the definition of an ellipsoid $\ellipsoid(q_r,S_r)$. The condition~\eqref{eq:find_xc:avoid_p} ensures that $\avoidRWA\cap \labelfunc(x_c) = \emptyset$. Finally,~\eqref{eq:find_xc:equilibrium} enforces that $x_c$ is a stationary point for the system under a constant input $u_c$. This last condition can be handled directly by semidefinite programs whenever $U$ is also a polytope, i.e., $U=\polyhedron(p_U,H_U)$. 
\end{proof}

To find a safe set $\safeset$ as required in Item~\ref{Item:2Control}, we shall search for the largest ellipsoid $\ellipsoid(x_c,P_\safeset)$ centered at $x_c$ and shaped through $P_\safeset\in\R^{n_x\times n_x}$.
\begin{lemma}\label{lemma:TechnicalLMI2}
The ellipsoid $\safeset=\ellipsoid(x_c,P_\safeset)$ satisfies Item~\ref{Item:2Control} if the following semidefinite program is feasible:
\begin{align}
	\min_{P_\safeset,\beta_1,\beta_2,...}\tr( P_\safeset)&\qquad\quad \text{s.t.}\\
	\forall~ \ellipsoid_i(q_a,\!P_a)\in \mathcal{E}_\avoidRWA,&	
	\begin{bmatrix}
		P_\safeset \!+\! \beta_iP_a & \!\!\!-\!P_\safeset x_c \!\!-\! \beta_iP_aq_a\\
		\bullet &\rho_i
	\end{bmatrix}\succ0, \label{eq:find_S:avoid_e}\\
	\forall~ \polyhedron_j(p_a,H_a)\in \mathcal{P}_\avoidRWA, & ~\exists h\in\cols(H_a) \quad \alpha(h) P_\safeset\succ hh^\top, \label{eq:find_S:avoid_p}
\end{align}
where $\rho_i= x_c^\top P_\safeset x_c + \beta_iq_a^\top P_aq_a -1-\beta_i$ and $\alpha(h) = (1+h^\top (p_a-x_c))^2$ and $\cols(H_a)$ denotes the set of column vectors of $H_a$. 
\end{lemma}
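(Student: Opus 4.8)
The plan is to assume the semidefinite program is feasible, with a solution $P_\safeset\succ0$ and auxiliary multipliers $\beta_i\ge0$, and then verify the two requirements of Item~\ref{Item:2Control} for $\safeset=\ellipsoid(x_c,P_\safeset)$ separately: that $x_c\in\safeset$, and that $\safeset$ is disjoint from every obstacle region of $\avoidRWA$. The first is immediate, since $\safeset$ is an ellipsoid centered at $x_c$ (here $P_\safeset\succ0$ is the standing requirement that makes $\safeset$ a genuine bounded ellipsoid, cf.~\eqref{eq:quad_lyap_fun}), so $(x_c-x_c)^\top P_\safeset(x_c-x_c)=0\le1$. By Assumption~\ref{Assump:PropositionShape}, the remaining work is to show $\safeset\cap\ellipsoid(q_a,P_a)=\emptyset$ for each $\ellipsoid(q_a,P_a)\in\mathcal{E}_\avoidRWA$ and $\safeset\cap\polyhedron(p_a,H_a)=\emptyset$ for each $\polyhedron(p_a,H_a)\in\mathcal{P}_\avoidRWA$, which together give $\avoidRWA\cap\labelfunc(x)=\emptyset$ for all $x\in\safeset$.

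For an ellipsoidal obstacle the plan is a (lossless, single-quadratic) S-procedure argument. Write $g_1(x)=1-(x-x_c)^\top P_\safeset(x-x_c)$ and $g_2(x)=1-(x-q_a)^\top P_a(x-q_a)$, so $\safeset=\{g_1\ge0\}$ and the obstacle is $\{g_2\ge0\}$. It suffices to show $-g_1(x)-\beta_i g_2(x)>0$ for all $x\in\R^{n_x}$, since then $g_2(x)\ge0$ forces $g_1(x)<0$, i.e.\ $x\notin\safeset$. Expanding $-g_1-\beta_i g_2$ as a quadratic form in the lifted vector $(x^\top,1)^\top$, the coefficient matrix is exactly the left-hand side of~\eqref{eq:find_S:avoid_e} with the constant block $\rho_i=x_c^\top P_\safeset x_c+\beta_i q_a^\top P_a q_a-1-\beta_i$; hence the LMI $\succ0$ in~\eqref{eq:find_S:avoid_e} certifies positive definiteness of that form, and in particular $-g_1(x)-\beta_i g_2(x)>0$ for every $x$. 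This yields disjointness of $\safeset$ from $\ellipsoid(q_a,P_a)$.

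For a polytopic obstacle $\polyhedron(p_a,H_a)=\{x:H_a^\top(x-p_a)\le_c\mathbf{1}\}$ the plan is to separate $\safeset$ from the polytope by a single facet hyperplane: it is enough to produce one column $h\in\cols(H_a)$ with $h^\top(x-p_a)>1$ for all $x\in\safeset$, since then $\safeset$ lies entirely in the open half-space complementary to that facet and thus misses $\polyhedron(p_a,H_a)$. Using the support function of the ellipsoid, $\min_{x\in\safeset}h^\top x=h^\top x_c-\sqrt{h^\top P_\safeset^{-1}h}$, this condition is $h^\top(x_c-p_a)-1>\sqrt{h^\top P_\safeset^{-1}h}$; knowing (from Item~\ref{Item:1Control} / \eqref{eq:find_xc:avoid_p} in Lemma~\ref{lemma:TechnicalLMI1}, which fixes the sign $h^\top(x_c-p_a)>1$ on the relevant facet) that the left side is nonnegative, we may square to obtain the equivalent scalar inequality $\alpha(h)=(1+h^\top(p_a-x_c))^2>h^\top P_\safeset^{-1}h$. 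Finally, a congruence by $P_\safeset^{-1/2}$ (using $P_\safeset\succ0$ and $\alpha(h)>0$) shows $\alpha(h)>h^\top P_\safeset^{-1}h\iff\alpha(h)P_\safeset\succ hh^\top$, which is precisely~\eqref{eq:find_S:avoid_p}; taking the conjunction over all obstacles completes the proof.

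The main obstacle is bookkeeping of directions and strictness: one must apply the single-quadratic S-procedure so that it certifies ``$\safeset$ lies in the obstacle's complement'' rather than the reverse inclusion, and in the polytopic case one must select the facet on which $x_c$ already lies strictly outside $\polyhedron(p_a,H_a)$ before squaring the separation inequality (this is the point where~\eqref{eq:find_xc:avoid_p} and the construction of $x_c$ enter). The remaining manipulations — expanding quadratic forms into block matrices and invoking the Schur complement — are routine.
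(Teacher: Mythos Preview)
Your proof is correct and follows the same approach as the paper: the S-procedure for ellipsoidal obstacles (your lifted quadratic form in $(x^\top,1)^\top$ is exactly the block matrix in~\eqref{eq:find_S:avoid_e}) and a single-facet separating hyperplane for polytopic ones. The only cosmetic difference is that for polytopes you reach the scalar condition via the ellipsoid's support function and then square, whereas the paper derives the same inequality by a direct symmetry-about-the-center argument; your explicit appeal to Item~\ref{Item:1Control}/\eqref{eq:find_xc:avoid_p} to fix the sign $h^\top(x_c-p_a)>1$ before squaring is a point the paper's own proof leaves implicit.
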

\begin{proof}
Note that \eqref{eq:find_S:avoid_e} is an application of the S-procedure \cite[p.~23]{boyd1994linear}, ensuring that $x\notin\ellipsoid(q_a,P_a)$ for all $x$ such that $x\in\ellipsoid(x_c,P_\safeset)$. On the other hand,~\eqref{eq:find_S:avoid_p} ensures that all polytopes in $\mathcal{P}_\avoidRWA$ have at least one hyperplane on their boundaries that separates them from the safe set $\safeset$. 
Indeed, we can prove the following statement:
\\ \noindent
For given polytope $\polyhedron(p,H)$ and ellipsoid $\ellipsoid(q,S)$, if there is $h\in\cols(H)$ such that $(1+h^\top (p-q))^2S\succ hh^\top$, we have  $\polyhedron(p,H)\cap\ellipsoid(q,S)=\emptyset$.
\\ \noindent
Indeed, since $\polyhedron(p,H)$ and $\ellipsoid(q,S)$ are convex sets, the intersection $\polyhedron(p,H)\cap\ellipsoid(q,S)$ is empty if there exists one column $h\in \R^{n_x}$ of $H$ such that 
\begin{equation}\label{eq:SeparatingHyperPlane}
h^\top(x-p)>1,\;\;\;\forall \;x\in\ellipsoid(q,S).
\end{equation}
This inequality defines  a separating hyperplane between $\ellipsoid(q,S)$ and $\polyhedron(p,H)$, since $h^\top(x-p)\leq 1$ for all $x\in\polyhedron(p,H)$, by definition.
  Since $q\in \ellipsoid(q,S)$ we have $h^\top(q-p)> 1$, and we can rewrite~\eqref{eq:SeparatingHyperPlane} as $(1+h^\top(p-q))^{-1}h^\top(x-q)< 1$, for all $x\in \ellipsoid(q,S)$. Also, since $q\in\R^{n_x}$ is the center of $\ellipsoid(q,S)$, this ellipsoid is contained also in the hyperplane defined by  $(1+h^\top(p-q))^{-1}h^\top(x-q)> -1$, and thus we have $|(1+h^\top(p-q))^{-1}h^\top(x-q)|< 1$, for all $x\in \ellipsoid(q,S)$. Thus~\eqref{eq:SeparatingHyperPlane} is equivalent to
 \[
(x-q)^\top(1+h^\top (p-q))^{-2} hh^\top(x-q)<1
 \] for all $x\in\ellipsoid(q,S)$. This, by definition, holds if and only if $(1+h^\top (p-q))^2S\succ hh^\top$, concluding the proof.
\end{proof}

Finally, having the safe set $\safeset=\ellipsoid(x_c,P_\safeset)$ fully determined, we can proceed with constructing the CLF and extracting feedback control policies from them, as required by Item~\ref{Item:3Control}.  We summarize our sufficient conditions in the following statement.
\begin{lemma}\label{lem:finalfeedbackcontrol}
Suppose that the following semidefinite program, for a given decay rate $\rho>0$, is feasible:
\begin{align}
	\max_{Z,Y,\beta_1,\beta_2,...}\tr(Z)& \qquad \text{s.t.}\\
	Z\prec &P_\safeset^{-1}\label{eq:find_w:safe}\\
	A Z+ ZA^\top +& BY+Y^\top B^\top \prec -2 \rho Z\label{eq:find_w:convergence} \\
	\forall~h_U\in \cols(H_U) \quad&\begin{bmatrix}
		Z & Y^\top h_U\\
		\bullet & (1\!+\!(p_U\!-\!u_0)^\top\! h_U)^2
	\end{bmatrix}\succ 0.\label{eq:find_w:respect_U}
\end{align}
Then, defining $P=Z^{-1}$ and $K=YP$, for the CLF defined by $w(x):=(x-x_c)^\top P (x-x_c)$ and the \emph{surrogate controller} $u(x):=K(x-x_c)+u_0$ it holds  that 
\begin{enumerate}
\item $u(x)\in U$ for all $x\in X_w$,
\item  $\inp{\nabla w(x)}{f(x,u(x))}\leq -\rho w(x)$, for all $x\in X_w$.\label{Item:TechnicalFinalLemmaFeedback}
\end{enumerate}
In particular, the function $w$ satisfies conditions in Item~\ref{Item:3Control}.
\end{lemma}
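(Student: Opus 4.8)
The plan is to prove the two claims by passing to the shifted coordinate $y := x - x_c$ and the ``analysis'' variables $P := Z^{-1}\succ 0$ and $K := YP$ (equivalently $Y = KZ$), and by using that, by construction, $u_0$ is an equilibrium input, i.e.\ $A x_c + B u_0 + g = 0$ with $u_0\in U$ (this is exactly \eqref{eq:find_xc:equilibrium}). Throughout I fix the normalisation $X_w = \SetComp{x\in X}{w(x)\le 1}$, the sublevel set implicit in \eqref{eq:find_w:safe} and \eqref{eq:find_w:respect_U}. The key simplification is that, with these substitutions, the closed loop is \emph{linear} in $y$: $f(x,u(x)) = Ax + B(Ky+u_0) + g = (A+BK)y$, the drift cancelling against the equilibrium relation.

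\emph{Item 2 (decay).} Since $\nabla w(x) = 2Py$, one gets $\inp{\nabla w(x)}{f(x,u(x))} = 2\,y^\top P(A+BK)\,y = y^\top\bigl(P(A+BK) + (A+BK)^\top P\bigr)y$ after symmetrising the scalar quadratic form. I would then conjugate \eqref{eq:find_w:convergence} by $P$ on both sides (legitimate since $P\succ 0$): using $PZ = I$, $PBYP = PBK$ and $PY^\top B^\top P = K^\top B^\top P$, the matrix inequality \eqref{eq:find_w:convergence} becomes $P(A+BK) + (A+BK)^\top P \prec -2\rho P$. Hence $\inp{\nabla w(x)}{f(x,u(x))} \le -2\rho\,y^\top P y = -2\rho\,w(x)\le -\rho\,w(x)$ (using $w(x)\ge 0$), and this holds for all $x$, in particular on $X_w$.

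\emph{Item 1 (input admissibility).} Recall $U = \polyhedron(p_U,H_U)$, so $u(x)\in U$ iff $h^\top\bigl(u(x)-p_U\bigr)\le 1$ for every column $h$ of $H_U$. Fix such an $h$. A Schur complement of \eqref{eq:find_w:respect_U} with respect to its $Z\succ 0$ block gives $h^\top Y Z^{-1}Y^\top h < \bigl(1 + (p_U-u_0)^\top h\bigr)^2$, and $YZ^{-1}Y^\top = KZK^\top = KP^{-1}K^\top$ by $Y = KZ$. Because $u_0\in U$ we have $(p_U-u_0)^\top h \ge -1$, so $1 + (p_U-u_0)^\top h \ge 0$; the strict LMI forces this quantity to be nonzero, hence positive, so we may take square roots: $\sqrt{h^\top K P^{-1}K^\top h} < 1 + (p_U-u_0)^\top h$. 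For $x\in X_w$, Cauchy--Schwarz applied to $h^\top K y = (P^{-1/2}K^\top h)^\top(P^{1/2}y)$ gives $h^\top K y \le \sqrt{h^\top K P^{-1}K^\top h}\cdot\sqrt{w(x)} \le \sqrt{h^\top K P^{-1}K^\top h}$, whence $h^\top\bigl(u(x)-p_U\bigr) = h^\top K y + h^\top(u_0-p_U) < 1$. Thus $u(x)\in U$ on $X_w$.

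\emph{Conclusion (that $w$ satisfies \ref{Item:3Control}).} Inverting \eqref{eq:find_w:safe} yields $P_\safeset\prec P$, hence $X_w = \SetComp{x}{y^\top P y\le 1}\subseteq\SetComp{x}{y^\top P_\safeset y\le 1} = \safeset$. Together with Item~2 (the decay inequality holds on all of $X$, a fortiori on $X_w\setminus X_w(c)$ for any $c$) and the strict containments \eqref{eq:find_xc:reach_e}--\eqref{eq:find_xc:reach_p} of \cref{lemma:TechnicalLMI1} (which put $x_c$ in the interior of every region associated to $\reachRWA$, so that $X_w(c)\subseteq\reachRWA$ for $c$ small enough, while $X_w(1)\subseteq\safeset\subseteq X$), this shows $w$ is a CLF in the sense of \cref{def:clf} with $C = 1$ whose basin of attraction $X_w$ lies in $\safeset$, i.e.\ \ref{Item:3Control} holds. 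I expect the only genuinely delicate bookkeeping to be in Item~1 --- namely that $u_0\in U$ is exactly what is needed to turn $\bigl(1 + (p_U-u_0)^\top h\bigr)^2$ into a usable bound after the square root --- together with keeping the two changes of variables ($Z\leftrightarrow P$ and $Y\leftrightarrow K$) consistent inside the congruence transformation; everything else is routine Lyapunov/LMI algebra.
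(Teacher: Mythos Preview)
Your proof is correct and follows essentially the same approach as the paper: congruence by $P=Z^{-1}$ to turn \eqref{eq:find_w:convergence} into the Lyapunov decay inequality, and a Schur complement of \eqref{eq:find_w:respect_U} followed by the bound $y^\top P y\le 1$ to obtain the input constraint. Your treatment of Item~1 is in fact more careful than the paper's, since you explicitly invoke $u_0\in U$ (from \eqref{eq:find_xc:equilibrium}) to ensure $1+(p_U-u_0)^\top h\ge 0$ before passing from the squared inequality to the one-sided bound $h^\top(u(x)-p_U)<1$; the paper's corresponding step asserts the two-sided bound $|h^\top(K(x-x_c)-p_U+u_0)|<1$, which does not literally follow without that sign observation, although only the one-sided inequality is needed.
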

\begin{proof} First,~\eqref{eq:find_w:safe} ensures safety as, inverting both sides of the inequality implies that $X_w(1)=\ellipsoid(x_c,P)\subset \safeset$. Then~\eqref{eq:find_w:convergence} ensures the descent condition \eqref{eq:Decreasing}. Condition \eqref{eq:find_w:respect_U} implies that $u(x)\in U =\polyhedron(h_U,H_U)$ for all $x\in X_w(1)$. To show that, consider a $h_U\in \cols(H_U)$ and multiplying the first line and column of the matrix in \eqref{eq:find_w:respect_U} by $P$ and apply the Schur Complement Lemma. The result is the equivalent matrix inequality $(1+h_U^\top (p_U-u_c))^2P\succ K^\top h_Uh_U^\top K$. Multiplying it to the right by $(x-x_c)$ and to the left by $(x-x_c)^\top$  while using the assumption that $x\in X_w(1)=\ellipsoid(x_c,P)$ yields $(1+h_U^\top (p_U-u_c))^2\succ (x-x_c)^\top K^\top h_Uh_U^\top K(x-x_c) $, which can also be rewritten as $|h_U^\top (K(x-x_c)-p_U+u_c)|<1$. By definition, this inequality being fulfilled for all $h_U\in\cols (H_U)$ is equivalent to $u(x)\in\polyhedron(p_U,H_U)$.
\end{proof}
Putting Lemmas~\ref{lemma:TechnicalLMI1},~\ref{lemma:TechnicalLMI2} and~\ref{lem:finalfeedbackcontrol} together, it can be seen that the controller $u(x)$ constructed in \cref{lem:finalfeedbackcontrol} is a feedback control policy satisfying \cref{lemma:CLFBasedFeedback}, and hence also \cref{prop:result:high-to-low}.

% Summarizing, in this subsection we sketched our CLF-based design technique for the continuous state-space level. This method is applied in the subsequent Section~\ref{sec:RunningExample} on the moving robot example, while some further discussion is provided in the next subsection.

	After providing all details on the synthesis of a hybrid controller solving \cref{prob:MainProb}, we now discuss two additional issues in the correctness of this controller, which are not captured by \cref{prop:low-high-winnnig}. 
	
	%!TEX root = ../main.tex

% In this section we highlight some possible issues arising from our stabilization strategy and some alternative control techniques which can replace/complement CLF-based feedback design.
% \\
% \noindent\textbf{Existence  of Solutions:}
\subsection{Existence  of Solutions}
In our statement~of Problem~\ref{prob:MainProb} and in the control technique formalized and summarized in~\cref{Prop:FinalProposition} we state that \emph{any} (trace of) solution of the closed loop system~\eqref{eq:FinalClosedLoop} satisfies the considered LTL specification. However, we did not provide a well-posedness result establishing existence of solutions for~\eqref{eq:FinalClosedLoop}, for any initial condition and any external logical perturbation.
Indeed, it is known that closed-loop feedback systems with state-dependent piecewise-defined control input may exhibit pathological behaviors, such as chattering and sliding modes~\cite{Cortes08,goebel2012hybrid,goebel2008zeno}.

 In what follows, we thus prove the existence of solutions, in the case studied in~Section~\ref{subsec:ContinuousLayer}.

\begin{proposition}\label{Prop:ExistenceOFSolution}
Consider a control system $\cS=(X,U,f)$ with labelling function $\labelfunc$, an LTL specification $\spec$ over the predicates $\AP_S\cup\AP_O$, the final game $\game^F$ and a winning strategy $\sigma^F:V^F_0\to V^F_1$. Suppose that Assumptions~\ref{Assump:AffineSubsystems},~\ref{Assump:PropositionShape} and~\ref{assum:TargetAssump} hold, and that the set of required CLFs $\allCLF$ is build following the procedure introduced in Subsection~\ref{subsec:ContinuousLayer}. For every $x\in X_{\text{win}}$, there exists a solution $\sol_{x,p,\ldist}:\R_+\to X$ to~\eqref{eq:FinalClosedLoop} starting at $x$, in the sense of~Definition~\ref{defn:SolutionsHybrid}.
\end{proposition}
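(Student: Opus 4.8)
The plan is to show that the iterative construction underlying \cref{def:hybridcontrolpolicy} and \cref{def:hybridclosedloop} produces a trajectory that is defined on all of $\R_+$, which we organize into three stages: (i) analyze a single \emph{mode segment} on which the active feedback policy $u_w$ is constant; (ii) show that two consecutive segments can always be concatenated at a context switch; and (iii) rule out a finite-time accumulation of context switches. For stage (i), fix a CLF $w\in\allCLF$ constructed as in \cref{subsec:ContinuousLayer}, with surrogate controller $u_w(x)=K(x-x_c)+u_0$ from \cref{lem:finalfeedbackcontrol}. Using the equilibrium constraint \eqref{eq:find_xc:equilibrium}, i.e.\ $Ax_c+Bu_0+g=0$, the closed loop $\dot x=f(x,u_w(x))$ reduces to the affine system $\dot x=(A+BK)(x-x_c)$, which is globally Lipschitz, hence has a unique solution defined for all forward time from any initial state; moreover, by the second conclusion of \cref{lem:finalfeedbackcontrol}, the hypotheses of \cref{lemma:CLFBasedFeedback} hold, so this solution stays in the compact basin $X_w\subseteq X$ and $u_w$ takes values in $U$. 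Thus, given a starting point in $X_w$ and the active mode $w$, the trajectory is well defined and admissible until the first discontinuity of $t\mapsto\labelfunc^+(\cdot)\cup\ldist(\cdot)$.

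For stage (ii), at a context-switch time $\tau$ the curve $\zeta$ is continuous, and $\Gamma_{\sigma^F}$ moves in $\game^F$ to a $\p{1}$ vertex carrying a single control proposition $\Cc_{w'}$; by the product construction of \cref{def:finalgame} together with \eqref{eq:lemma:controlgraph2} of \cref{lemma:controlgraph} this forces $\Xc_{w'}\in\labelfunc^+(\zeta(\tau))$, i.e.\ $\zeta(\tau)\in X_{w'}$, so the new mode is applicable and stage (i) applies afresh from $\zeta(\tau)$. In particular $\zeta$ stays at all times in $\bigcup_{w\in\allCLF}X_w$, a finite union of compact ellipsoids, hence it is bounded; as $f$ is continuous and the controls are bounded on this set, $\dot\zeta$ is bounded, so $\zeta$ is globally Lipschitz wherever it is defined, and in particular $\ldist$ contributes only finitely many switches on any bounded interval since it is piecewise constant.

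The crux is stage (iii). Suppose, for contradiction, that on some bounded interval there are infinitely many context switches; by the previous paragraph infinitely many of them are triggered by $\zeta$ meeting a proposition boundary, and they accumulate at some $T^\ast<\infty$. By the Lipschitz bound, $\zeta$ extends continuously to $T^\ast$ with limit $x^\ast$. Since $\AP_S^+$ and $\allCLF$ are finite, infinitely many of the vanishing-length segments accumulating at $T^\ast$ carry one fixed mode $w$ and end by crossing the boundary of one fixed region, which by \cref{Assump:PropositionShape} is a quadric or a hyperplane; on these segments $\zeta$ is a piece of the real-analytic linear flow $t\mapsto x_c+e^{(A+BK)t}(\cdot)$, whose velocity $(A+BK)(x^\ast-x_c)$ is nonzero whenever $x^\ast$ is not the mode-$w$ equilibrium $x_c$ (and the case $x^\ast=x_c$ is excluded because, by \cref{lemma:CLFBasedFeedback}, mode $w$ then keeps $\zeta$ inside $X_w(c)\subseteq\reachRWA_w$, so it cannot be re-entered from outside infinitely often). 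A local analysis at $x^\ast$ — using that the mode flow is either transverse to, or contained in, the crossed surface, that a line meets a hyperplane at most once and a quadric at most twice, together with the exponential contraction of $w$ and the invariance guaranteed by \cref{lemma:CLFBasedFeedback} — shows that such crossings cannot accumulate, contradicting the assumption. Hence every bounded interval carries only finitely many switches, the piecewise construction never stalls, and concatenating the countably many mode segments (each a classical affine-flow arc) yields a locally absolutely continuous $\sol_{x,p,\ldist}:\R_+\to X$ satisfying \eqref{eq:FinalClosedLoop} for almost every $t$, i.e.\ a solution in the sense of \cref{defn:SolutionsHybrid}. I expect the genuine difficulty to lie entirely in stage (iii), namely excluding chattering and sliding modes across proposition boundaries; this is precisely where \cref{Assump:PropositionShape} (semialgebraic cells) and the linear, exponentially contracting structure of the mode dynamics from \cref{subsec:ContinuousLayer} are indispensable, since for general nonlinear $f$ or arbitrarily shaped proposition regions one could not avoid passing to a relaxed (Filippov- or hybrid-type) notion of solution.
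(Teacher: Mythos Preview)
Your decomposition into stages (i)–(iii) is sound, and stages (i) and (ii) are correct and in fact more carefully argued than in the paper. The substantive divergence, and a genuine gap, lies in stage (iii).

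Your Zeno-exclusion argument pigeonholes onto segments carrying a fixed mode $w$ that terminate by crossing the boundary of a fixed proposition region $R$, and then appeals to transversality/analyticity of the mode-$w$ flow near the accumulation point $x^\ast$. The difficulty is that transversality of the \emph{mode-$w$} flow to $\partial R$ at $x^\ast$ does not by itself preclude accumulation of crossings of the \emph{hybrid} trajectory: between two mode-$w$ segments the trajectory may follow some other mode $w'$ whose flow is transverse to $\partial R$ in the opposite direction, pushing the state back across $\partial R$. Nothing in your ``local analysis'' (lines meeting quadrics, exponential contraction, invariance of $X_w$) rules this out for boundaries of regions other than the basins themselves; the invariance from \cref{lemma:CLFBasedFeedback} only says that mode $w$ keeps the state in $X_w$, not that any mode keeps the state on one side of an arbitrary ellipsoidal or polyhedral proposition boundary. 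So the contradiction you claim does not follow from the ingredients you invoke.

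The paper takes a different route that sidesteps this interaction between modes. It observes that, by Item~\ref{Item:TechnicalFinalLemmaFeedback} of \cref{lem:finalfeedbackcontrol}, at the boundary of each basin $X_w$ the active vector field satisfies the strict inward-pointing condition $n(x)^\top G(t,x)<0$, and then invokes the \emph{patchy vector field} existence theory of Ancona--Bressan~\cite{AncBre99} directly. That framework is designed precisely for finite families of domains with inward-pointing fields on their boundaries and delivers existence of Carath\'eodory solutions without any ad hoc crossing-count argument; completeness then follows from the invariance of $X_{\text{win}}$ established in \cref{Prop:FinalProposition}. In short, the paper's key structural ingredient is the Lyapunov-induced inward transversality at \emph{basin} boundaries, packaged into the Ancona--Bressan machinery; your attempt tries to replace this with an elementary local analysis, but that analysis does not control the multi-mode chattering scenario.
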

\begin{proof}
First, we recall that by Assumptions~\ref{Assump:PropositionShape} and~\ref{assum:TargetAssump} and by construction, any state proposition $\AP^+_S$ is associated to a compact (ellipsoidal or polyhedral) subset of $X$.  
The closed loop~\eqref{eq:FinalClosedLoop}, under Assumption~\ref{Assump:AffineSubsystems} can be compactly rewritten as
\[
\dot x=G(t,x)=Ax+B\,p(x,\nu(t)\,) +g,
\]
with $p(x,\Cc_w)=K_w(x-x_{cw})+u_{0w}$, for all $x\in \R^n$ and all $\Cc_w\in  \AP_C$, for some $K_w,x_{cw}$ and $u_{0w}$ of appropriate dimensions, recall~\cref{lem:finalfeedbackcontrol}.
Thus, the time-varying vector field $G:\R_+\times X\to \R^{n_x}$ is discontinuous in $t$, and recalling~\cref{def:hybridcontrolpolicy}, the discontinuity points are contained in the sequence of discontinuity points of $\labelfunc^+(\sol_{x,p,\ldist}(\cdot)) \cup \ldist(\cdot)$. We have to show that this sequence has no accumulation point, thus ruling out the so-called \emph{Zeno phenomenon}, see~\cite{goebel2012hybrid}. Since $\ldist\in \Ldist$ by assumption is piecewise constant, we have to check the behavior of discontinuities of $\labelfunc^+(\sol_{x,p,\ldist}(\cdot))$, given a fixed context $\contextRWA\subseteq \AP_O$. By construction, these discontinuities can occur only if  $\sol_{x,p,\ldist}(\cdot)$ lies at the boundaries of the regions of attraction of the CLFs $w\in \allCLF$, with $w$ associated to a $\cRWA$ with context $\contextRWA$, i.e. the CLFs that can be activated at that instant of time. For the boundaries of these region of attractions, the vector field $G$ satisfies a tranversability condition 
\[
n(x)^\top G(t,x)<0,
\]
where $n(x)$ is the \emph{normal vector} to the ellipsoid $\Xc_w$ in $x$, i.e. the vector field is ``pointing inward'' the set $\Xc_w$. This follows by Item~\ref{Item:TechnicalFinalLemmaFeedback}) in Lemma~\ref{lem:finalfeedbackcontrol}. 
This fact, also called \emph{patchy vector field} property is a sufficient condition to ensure existence of solutions (in the sense of Definition~\ref{defn:SolutionsHybrid}), as proven in~\cite[Proposition 3.1]{AncBre99}, to which we refer for the details. The completeness of solutions, i.e. the fact that any solution is well-defined on the whole positive real line $\R_+$, follows by the fact that, as proven in~\cref{Prop:FinalProposition}, by \cref{def:hybridcontrolpolicy}, a winning play $\play$ always stays in $\win^F$.
This implies, $\sol_{x,p,\ldist}(t)$ also belongs to $X_{\text{win}}(\win^F)$ for all $t\in\R_+$, concluding the proof.
\end{proof}
For a more detailed discussion regarding (properties of) solutions of discontinuous differential equations and hybrid systems, we refer to~\cite{Cortes08,goebel2012hybrid,goebel2008zeno}. 
% \noindent\textbf{Preventing Instability:}
\subsection{Preventing Instability}
As said, since the external environment can change at any instant of time, the closed loop system~\eqref{eq:FinalClosedLoop} exhibits \emph{hybrid} behavior.  This may lead to undesired phenomena on infinite horizons, as we highlight in the following simple example.

\begin{example}\label{exp:switchedsystem}
Consider a control system of the form $\cS:=(\R^{n_x},U,f)$, and two compact target sets $\cT_1,\cT_2\subset \R^{n_x}$ such that $\cT_1\cap \cT_2=\emptyset$, and consider $\AP_S=\{\cT_1,\cT_2\}$.
We consider the following desired \emph{mode-target game specification} (for an overview on mode-target games, see~\cite{balkan2017mode}):
\begin{equation}
	\varphi:= \textstyle 
	\left(\Diamond\Box \Mc_1\implies %\bigvee_{j=1}^{N_{ti}} 
	\Diamond\Box \Tc_{1}\right)\;\wedge\;\left(\Diamond\Box \Mc_2\implies %\bigvee_{j=1}^{N_{ti}} 
	\Diamond\Box \Tc_{2}\right)  \label{eq:mtg_specExample} 
\end{equation}
where $\Mc_1,\Mc_2\in  \AP_O$ are the input atomic propositions representing the \emph{modes} activated by the external environment. Suppose to have \emph{global} CLFs $w_1,w_2:\R^{n_x}\to \R$ with respect to the target $\cT_1,\cT_2$, in the sense of Definition~\ref{def:clf}, and consider continuous $u_i:\R^{n_x}\to \R^{n_u}$ satisfying~\eqref{eq:u_CLF} globally in $\R^{n_x}\setminus X_w(c)$, for any $i\in \{1,2\}$. This provides a winning strategy for the game arising from~\eqref{eq:mtg_specExample}: we activate the feedback law $u_i$ when the mode $\Mc_i$ is active. 
% On the other hand, undesired instability phenomena could arise.
Now consider the disturbance function $\ldist:\R_+\to \AP_O$ modeling the environment behavior. Then the resulting hybrid closed-loop system can be written as
\begin{equation}\label{eq:SwitchedSystem}
\dot x(t)=g(x(t), \ldist(t))
\end{equation}
where $g(x, \Mc_i):=f(x,u_i(x))$ for $i\in \{1,2\}$. Systems of the form~\eqref{eq:SwitchedSystem} are known as \emph{switched systems}, and have been intensively studied in recent years (see~\cite{Liberzon03,goebel2012hybrid} for an overview). It is well-known that, even if the  targets $\cT_1$, $\cT_2$ are asymptotically stable for the corresponding subsystems,  the external disturbance $\ldist:\R_+\to \AP_O$ can produce unbounded solutions for some initial condition $x\in \R^{n_x}$, which is undesired in many contexts, see for example~\cite[Chapter~1]{Liberzon03}.
\end{example}

There are many possible approaches to overcome the instability problem discussed in \cref{exp:switchedsystem}. Here, we informally highlight two of them.

% \begin{itemize}[leftmargin=*]
% 	\item 
	First, consider a control system $\cS=(X,U,f)$ and an LTL specification $\spec$ over $\AP_S\cup\AP_O$. Suppose that the problem is global i.e., $X=\R^{n_x}$. Consider a large enough compact set $\Cc\subset\R^{n_x}$ such that $\Xc\subset\text{int}(\Cc)$ for all $\Xc\in \AP_S$. Consider its boundary $\partial \Cc$, add $\partial \Cc\in \AP_S$ (intuitively, a large enough ``wall''), and consider a ``new'' specification $\spec'$ defined by
	$
\spec'=\spec\,\wedge\,\globally\neg \partial \Cc$.
Thus, paying the price of considering a more ``convoluted'' specification, we force, on the logical level, the solutions of $\cS$ to stay in the compact set $\Cc$.

Second, suppose that the environment, while being unpredictable, does satisfy some assumptions on the frequency of its decisions. 
More formally, suppose there exists a \emph{dwell-time} $\tau>0$,
such that, if $t\in \R_+$ is a discontinuity point of the disturbance function $\ldist$ (i.e. an instant at which the external environment changes), we suppose that $\ldist(s)=\ldist(t)$, $\forall s\in [t,t+\tau)$.  It is well-known that, if all the subsystems are asymptotically stable, a large enough dwell-time will ensure boundedness of solution of the switched system~\eqref{eq:SwitchedSystem}. The technical details are not reported here, we refer to~\cite[Section 3.2]{Liberzon03}.
% 	\end{itemize}

While the above-mentioned approaches can provide a simple stability guarantee to the hybrid-closed loop system arising from our design method, we point out that the formal study of stability/instability phenomena induced by LTL-based control is a largely open future research direction.

	\section{EXPERIMENTAL RESULTS}\label{sec:RunningExample}
	%!TEX root = ../main.tex
In this section, we demonstrate the proposed techniques on an example.
We consider the mode-target based example introduced in \cref{section:motivatingExample} in a 2-D space. The state space for the example is constrained to the box $[0,10]\times [0,10]$, and the three target regions $\cT_1$, $\cT_2$, and $\cT_3$ are ellipsoidal balls of radius $0.2$ located at co-ordinates $(3,4)$, $(3,6)$, and $(5,5)$, respectively. The sliding door is a vertical line from $(4,0)$ to $(4,10)$. The considered dynamical model for the motion of the robot is of the form introduced in Assumption~\ref{Assump:AffineSubsystems}, with a $2$-dimensional input space.

We used our proposed techniques to solve \cref{prob:MainProb} for this example. 
All  computations  were  done  on  a  MacBook  Pro 2.5GHz with 16GB RAM.
We started by constructing the initial game $\game^I$ from specification $\spec$, as given in \cref{ex:strategyTemplate}. The initial game $\game^I$ has $51$ vertices and $182$ edges, which was constructed in $0.042$ seconds. 
Next, we computed a strategy template for the initial game, and then, we translated this strategy template into several reach-while-avoid problems which took $0.007$ seconds.
Next, we constructed the control game graph $\gamegraph^C$ with $159$ vertices and $1704$ edges in $6.13$ seconds.
Next, we constructed the final augmented game $\game^F$ with $826$ vertices and $17604$ edges in $0.652$ seconds.
Finally, we solved the final game to compute a winning strategy in $112.495$ seconds which is used as a hybrid controller in the state space.
In total, our algorithm took $120$ seconds to solve \cref{prob:MainProb} for this example.

\begin{figure*}
    \centering
    \def\svgwidth{0.8\linewidth}
    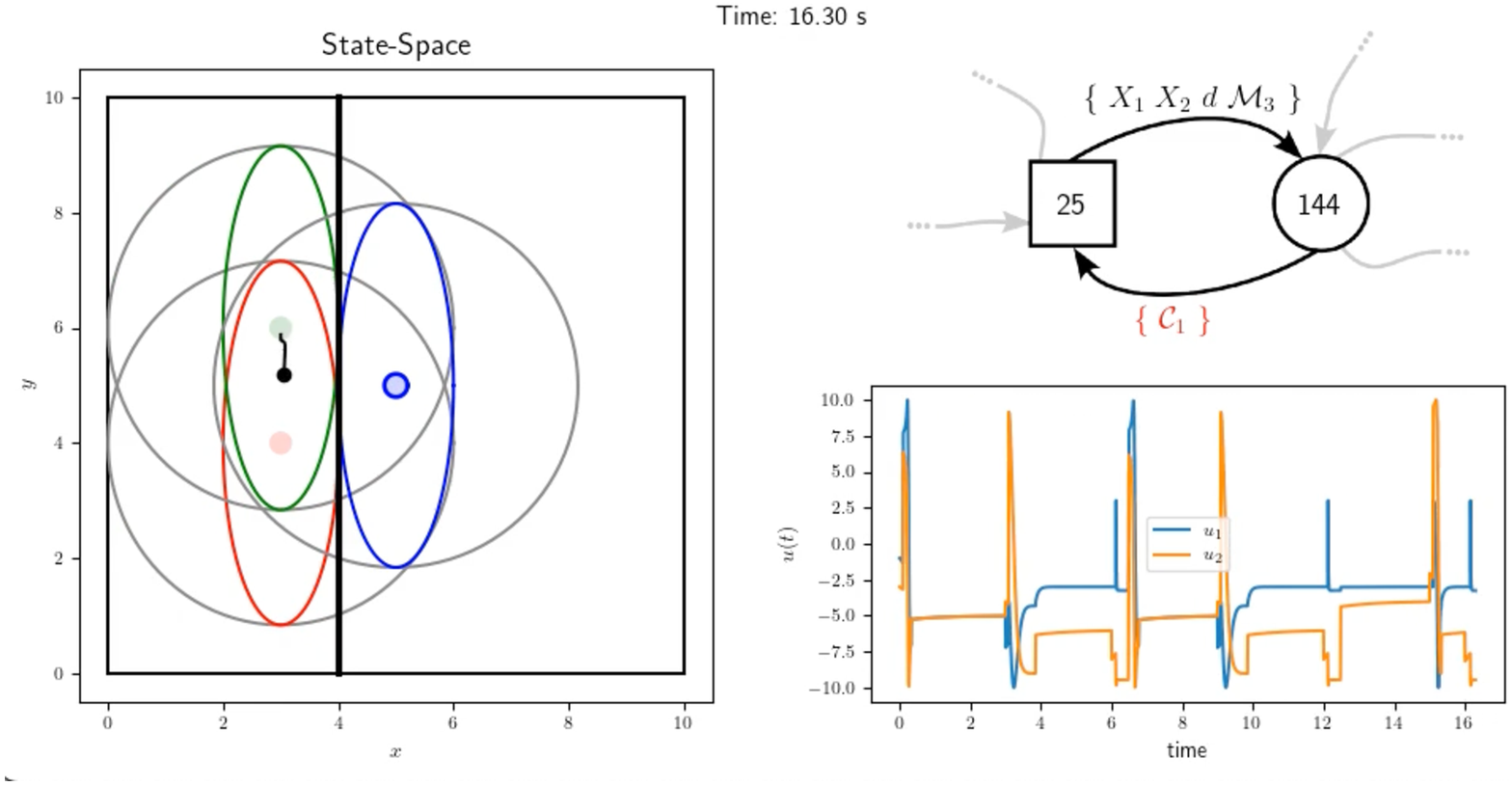
    \caption{A screenshot from the simulation video}\label{fig:simulation}
\end{figure*}

Furthermore, we also conducted a simulation\footnote{Link: \url{https://cloud.mpi-sws.org/index.php/s/Yrf2dDzspTkYm88}} of this example that uses the hybrid controller computed by our algorithm.
A screenshot from the simulation video at 16.30s is shown in \cref{fig:simulation}.
The left part of the figures describes the continuous state-space, where we have three targets, i.e., $\cT_1$ as an red colored dot (blurred), $\cT_2$ as a green colored dot (blurred), and $\cT_3$ as a blue colored dot, the robot as a black dot in motion, and two basins of attraction per each target represented by the ellipsoids around the target. The smaller ellipsoids, i.e., green, red, blue colored ones around $\cT_2$, $\cT_1$, $\cT_3$, respectively, are basins of attractions for the corresponding targets when the door is closed whereas the bigger gray ones are basins of attractions for the corresponding targets when the door is open.
Moreover, this left part also describes the current state of the system.
As we can see, the highlighted blue-colored target $\Tc_3$ indicates that currently mode $\Mc_3$ is active, the thick black line in the middle indicates that the door is closed, and the movement of the black dot from location of $\Tc_2$ towards $\Tc_1$ indicates that the robot is currently moving from target $\cT_2$ to $\cT_1$.
Furthermore, the upper-right part of the figure describes the current state of the play in the final augmented game.
Currently, the play in the game is looping between vertex $25$ and vertex $144$. 
The label of the edge from environment player's vertex (i.e., vertex $25$) indicates that the robot is currently inside the intersection of the basins of attraction $\Xc_1$ and $\Xc_2$, and currently the door is closed and mode $\Mc_3$ is active.
Furthermore, the label of the edge from controller player's vertex (i.e., vertex $144$) indicates that currently control policy associated with $\Cc_1$ is being applied persistently.
Intuitively, as mode $\Mc_3$ is active, the robot needs to reach target $\Tc_3$, and since the door is closed, the robot first need to visit target $\Tc_1$ in order to open the door. 
Specifically, in the video, the trajectory from 16.00s to 17.00s where the mode $\Mc_3$ remains consistently active can be described as follows: initially, at 16.00s, the robot was positioned at target $\cT_2$ with the door closed. Subsequently, the robot moves towards target $\cT_1$, as depicted in the screenshot shown in \cref{fig:simulation}. At 16.60s, the robot reaches $\cT_1$, resulting in the door opening. Following that, the robot proceeds towards target $\cT_3$ and successfully arrives at the target by 17.00s.

Returning to \cref{fig:simulation}, the lower-right part of the figure presents the time-responses of the two components of the control input, namely $u_1$ and $u_2$, which emerge from the hybrid feedback control policy defined in Subsection~\ref{subsec:hybridcontroller}.

	\section{CONCLUSION}\label{Sec:Conclu}
	%!TEX root = ../main.tex
In this paper we proposed a method to synthesize feedback controllers for continuous-time systems, in order to fulfill general LTL specifications. We presented our main algorithm, which, on the logical level, aims to rewrite the general problem in the form of an augmented parity game. In order to efficiently perform our proposed method, a new solving algorithm for augmented games is proposed.
On the continuous state-space level, the winning strategy is implemented via a control Lyapunov functions approach, which provides a natural and flexible feedback design for a large class of dynamical systems. %\new{Note that so-called state-space methods have provided in the last decades other concepts than Control Lyapunov Functions, such as for instance Control Barrier Functions (see, e.g.~\cite{Xiao2021,Jagtap2021,Clark21}). These concepts could also be leveraged in our framework, as they can be translated to logical constraints, which in turn can be implemented in the high-level controller. However we leave this for further work. Indeed, we focus here on the concept of Control Lyapunov Functions, which we critically need for ensuring convergence to the target regions.}

We believe that our work paves the way towards a new generation of symbolic controllers, where formal guarantees are still available, thanks to rigorous techniques both at the logical and dynamics levels; however with satisfactory scalability performances, because the (time- and space-) discretizations are computed endogenously, in an event-triggered philosophy. As further directions of research, we plan to extend our approach to more general logical/dynamical systems settings and to formally investigate and improve both  numerical complexity and theoretical conservatism of the proposed methods. In particular, we believe that our framework fits for an iterative, or active learning, approach, where the solution, and the bottlenecks, at the logical level may be used as information to guide the low-level design, and vice-versa.

	% \section*{ACKNOWLEDGMENT}
	
	\bibliographystyle{ieeetr}
	\bibliography{bib-file}

\end{document}